\documentclass[11pt]{article} %

\usepackage[backend=bibtex,style=numeric-comp,natbib=true,maxalphanames=10,maxnames=3,sortcites=true]{biblatex} %

\usepackage[margin=1in]{geometry}

\usepackage{microtype}          %
\usepackage[T1]{fontenc}
\usepackage[full]{textcomp}

\usepackage{article} %
\usepackage{math} %
\usepackage{algo} %
\usepackage{amsfonts}           %
\usepackage{wrapfig} %
\usepackage{placeins} %
\usepackage{graphicx} %
\usepackage{layout} %
\usepackage{setspace} %
\usepackage{mathtools} %
\usepackage{grffile} %
\usepackage{pdfpages} %
\usepackage{mdframed} %
\usepackage{flows}    %
\usepackage{booktabs} %
\usepackage{tabulary} %

\sloppy

\newcommand{\defterm}[1]{{\boldmath\normalfont \bfseries #1}}

\makeatletter
\g@addto@macro\bfseries{\boldmath}
\makeatother

\bibliography{references}

\newcommand{\graybar}{{\color{gray}\hrulefill}}

\newcommand{\SMtime}{\fpair{\operatorname{SM}}} %
\newcommand{\SMItime}{\fpair{\operatorname{SMI}}} %
\newcommand{\SFMtime}{\fpair{\operatorname{SFM}}}

\newcommand{\emstrut}[2]{\vrule height #1em depth #2em width 0pt }

\providecommand{\barV}{\overline{V}} %
\providecommand{\barU}{\overline{U}} %
\providecommand{\elements}{V \cup E}
\providecommand{\elemconn}{\kappa'}

\title{Isolating~Cuts, (Bi-)Submodularity, and
  Faster~Algorithms~for~Global~Connectivity~Problems}

\author{
Chandra Chekuri\thanks{Dept.\ of Computer Science, Univ.\ of Illinois, Urbana-Champaign, Urbana,
  IL 61801. {\tt chekuri@illinois.edu}. Supported in part by NSF grants
CCF-1910149 and CCF-1907937.}
\and
Kent Quanrud\thanks{Dept.\ of Computer Science, Purdue University,
  West Lafayette, IN 47909. {\tt krq@purdue.edu}.}
}

\begin{document}

\maketitle

\begin{abstract}
  \citet{li-panigrahi}, in recent work, obtained the first
  deterministic algorithm for the global minimum cut of a weighted
  undirected graph that runs in time $o(mn)$.  They introduced an
  elegant and powerful technique to find \emph{isolating cuts} for a
  terminal set in a graph via a small number of $s$-$t$ minimum cut
  computations.

  In this paper we generalize their isolating cut approach to the
  abstract setting of symmetric bisubmodular functions (which also
  capture symmetric submodular functions). Our generalization to
  bisubmodularity is motivated by applications to element connectivity
  and vertex connectivity. Utilizing the general framework and other
  ideas we obtain significantly faster randomized algorithms for
  computing global (and subset) connectivity in a number of settings
  including hypergraphs, element connectivity and vertex connectivity
  in graphs, and for symmetric submodular functions.
\end{abstract}

\section{Introduction}
\labelsection{intro} We investigate fast algorithms for several
fundamental connectivity problems in (weighted) undirected graphs as
well as their generalizations to the abstract setting of submodular
and bisubmodular functions. The motivation for this work arose from
the recent paper of \citet{li-panigrahi} that described a new
algorithmic approach for finding the \emph{global minimum cut} in an
undirected graph.  For a graph $G=(V,E)$ with edge weights
$w: E \to \preals$, the global minimum cut problem is to find the
minimum weight subset of edges whose removal disconnects the graph;
alternatively it is to find a set $S$, where
$\emptyset \subsetneq S \subsetneq V$, that minimizes
$w(\delta(S))$\footnote{For $A \subset V$, $\delta(A)$ denote the set
  of edges in $G$ with exactly one end point in $A$. $w(\delta(A))$ is
  notation for $\sum_{e \in \delta(A)} w(e)$.}.  When $G$ is
unweighted, this is called the edge connectivity of the graph.  There
has been extensive work on algorithms for this problem, and its study
has led to many important theoretical developments. Karger developed a
near-linear time randomized algorithm \cite{Karger00} that runs in
$O(m\log^3 n)$ time with some recent improvements in the log factors
via better data structures \cite{GMW-20,MN20}. Here $m$ is the number
of edges and $n$ is number of nodes in the graph.  However, the best
deterministic algorithm until recently was $\tilde{O}(mn)$ via two
very different approaches \cite{hao-orlin,StoerW97}.  Li and Panigrahi
developed a new approach that improved this bound. Their algorithm
runs in time $O(m^{1+o(1)})$ plus the time to compute $O(\polylog{n})$
$(s,t)$-minimum cut computations in a graph with $m$ edges and $n$
nodes. Their approach uses the $(s,t)$-minimum cut algorithm as a
black box.

\smallskip \noindent {\bf Isolating cuts:} A key technique in
\cite{li-panigrahi} is an algorithm to find \emph{isolating cuts}.  To
describe this notion, let $R \subseteq V$ be subset of nodes that we
call terminals. Given $r \in R$, a set $S \subseteq V$ is an isolating
cut for $r$ (with respect to $R$) if $S \cap R = \{r\}$. Consider the
problem of finding, for \emph{each} $r \in R$, a minimum weight
isolating cut, that is; a cut $S_r \subseteq V$ where
$S_r = \argmin_{S \subseteq V, S\cap R =\{v\}} w(\delta(S))$. Note
that if $R=V$ this is trivial since $S_r = \{r\}$ for each
$r$. However, the problem is non-trivial when $R \subset V$ is a
proper subset of $V$. A naive approach would require $|R|$
$(s,t)$-minimum cut computations. Li and Panigrahy described a simple
and elegant procedure that computes all the isolating cuts for any
given $R$ in time proportional to $O(\log |R|)$ $(s,t)$-minimum cut
computations.  This, combined with simple random sampling, can be used
to easily derive a randomized algorithm for global minimum cut that
relies on $O(\polylog{n})$ $(s,t)$-minimum cut computations. Note that even
though the total time corresponds to $O(\polylog{n})$ $(s,t)$-minimum cuts,
the second phase of their algorithm requires computing $|R|$
$(s,t)$-minimum cuts, but in smaller graphs whose total size is $O(m)$
and thus can be folded into a single $(s,t)$-minimum cut on roughly
the same input size as the original graph.  Their algorithm gives a
new randomized approach to global minimum cut; however, it does not
lead to a faster algorithm than the existing near-linear time
algorithm. Instead \cite{li-panigrahi} focuses on deterministic
running times and avoids random sampling by relying on several
technical tools including deterministic expander decompositions to
obtain a deterministic algorithm.  We note, however, that the
algorithm in \cite{li-panigrahi} applies to the more general problem
of finding the Steiner minimum cut: given $X \subseteq V$, the goal is
to find a minimum cut spearating a pair of nodes in $X$. See
\cite{HariharanTPB07,JueK19} for applications.

\smallskip
\noindent
{\bf Vertex and element connectivity:}
Our focus here is not on deterministic algorithms per se but rather on
the applicability of the isolating cut approach to derive faster
(randomized) algorithms in settings beyond edge connectivity.  There
has been tremendous recent and ongoing progress in fast algorithms for
$(s,t)$-flow and cut problems and leveraging these algorithms for
global connectivity is opened up by the new approach.  In particular,
an important motivating problem is to compute the global
\emph{(weighted) vertex connectivity} of a graph which has received
substantial recent attention \cite{fnsyy,nsy-19}. In this setting we
are given a graph $G=(V,E)$ with vertex weights
$w: V \rightarrow \mathbb{R}_+$ and the goal is to find a minimum
weight subset $S \subset V$ such that $G-S$ has at least two
non-trivial connected components.  However, as is well-known, vertex
cuts/separators are not as easy to work with as edge cuts. Despite
recent exciting progress via an approach based on local cuts and
connectivity, the weighted case had not been addressed and the best
known algorithms are from the work of Henzinger, Rao and Gabow
\cite{hgr}.  Our starting point is the observation that the isolating
cut approach of \cite{li-panigrahi} relies only on the submodularity
and symmetry of the edge-cut function of undirected graphs. Recall
that a real-valued set function $f:2^V \rightarrow \reals$ is
\emph{submodular} iff $f(A) + f(B) \ge f(A \cup B) + f(A \cap B)$ for
all $A,B \subseteq V$.  A set function is symmetric if
$f(A) = f(V\setminus A)$ for all $A \subseteq V$.  The applicability
of the isolation cut approach to symmetric submodular set functions
already yields faster algorithms for hypergraph connectivity and
several other problems that we describe subsequently. However, as we
already remarked, vertex cuts do not lend themselves to this approach
as vertex cuts, unlike undirected edge cuts, are simply not a
symmetric submodular function.

When considering isolating cuts in the context of vertex connectivity
one naturally encounters the notion of \emph{element connectivity},
which has been found to have several important connections between
edge and vertex connectivity. Element connectivity plays a key role in
network design, and in fact, it was introduced
by \citet{Jainetal02} to overcome the difficulty of working with
vertex connectivity. We refer the reader to surveys and related papers
on network design
\cite{FleischerJW06,CheriyanVV06,ChuzhoyK09,GuptaK11-survey,KortsarzN10-survey}
for extensive literature on this topic. It also plays an important
role in packing vertex disjoint Steiner trees and forests among others
\cite{CheriyanS07,CalinescuCV09,AazamiCJ12,ChekuriK14}; \cite{chekuri-survey} surveys this area. We now
formally define element connectivity. The input is a graph $G=(V,E)$
and a partition of $V$ into terminals $T$ and non-terminals $N =
V\setminus T$.  The \emph{elements} of $G$ are the edges and
non-terminals; that is, $E \cup N$.  For two terminals $s,t$ we define
the element connectivity between $s$ and $t$ as the minimum number of
elements whose removal disconnects $s$ from $t$. We emphasize that
element connectivity is defined \emph{only} between the terminals. We
can generalize this to the weighted setting where edges and
non-terminals have non-negative weights. The global element
connectivity of $G=(T \cup N,E)$ is the minimum element connectivity
between any two terminals. \citet{crx} considered algorithms for
computing (global) element connectivity.  For
global element connectivity they obtained an algorithm with running
time $O(|T|)$ times the time for $(s,t)$-minimum cut computation.

\medskip
\noindent
{\bf Set-pairs and Bisubmodularity:}
Cuts for element and vertex connectivity do not fall into
the setting of symmetric submodular set functions. A vertex separator
$S$ induces a partition of $V\setminus S$ into disjoint sets $A,B$
that do not share an edge, and obviously $B \neq V \setminus A$ (for
nonempty $S$). Nevertheless, one of the reasons for the tractability
of element connectivity is that it does admit submodularity
properties. The natural way to view its submodularity properties is
via the more general notion of \emph{bisubmodular} set
functions. Given a ground set $V$ a set-pair is $(A,B)$ where
$A, B \subseteq V$. Informally speaking a bisubmodular function $f$
assigns a real-value to each set-pair $(A,B)$ in a collection of
set-pairs as to satisfy the inequality
\begin{align*}
  \f{X_1,Y_1} + \f{X_2,Y_2} \geq %
  \f{X_1 \cup X_2, Y_1 \cap Y_2} + \f{X_1 \cap X_2, Y_1 \cup Y_2}
\end{align*}
for all set-pairs $(X_1,Y_1)$ and $(X_2,Y_2)$ on which it is
defined. For this to make sense the collection of set-pairs needs to
be closed under the above criss-crossed intersection and union
operations for set-pairs. These binary operations can be understood
more clearly as the meet and join of an appropriately defined lattice;
we defer the formal definitions to \refsection{bisubmod}.  One can
generalize the notion of cuts to set-pairs. Let $(S,T)$ be a set-pair
corresponding to a partition of a terminal set $R$.  A set-pair
$(A,B)$ cuts $(S,T)$ if $S \subseteq A$ and $T \subseteq B$. One can
then define the $f$-minimum cut problem for $(S,T)$: find the set-pair
of minimum $f$ value among all set-pairs that cut $(S,T)$. With this
definition in place the notions of global minimum cut for a terminal
set $R \subseteq V$, and isolating cuts for $R$, naturally generalize.
In this paper we show that the
isolating cut approach of \cite{li-panigrahi} generalizes to the class
of symmetric bisubmodular set functions defined over appropriate
collections of set-pairs.

\subsection{Contributions and Results}
We make two contributions at the high-level. The first is conceptual
in generalizing the isolating cut approach to the (bi)submodular
setting.  The second is to apply this abstract framework with
additional ideas to derive faster randomized algorithms for several
fundamental problems. Together they yield a plethora of new running
times for a diverse collection of connectivity problems, both abstract
(optimizing over set functions in an oracle model) and concretely in
graphs. The multiplicity of results is for the following combination
of reasons. First, by implementing the isolating cut approach at a
higher level of abstraction, and abstaining from concrete
specificities, we not only expose the isolating cut approach to new
problems, but allow for the substitution of different domain specific
black box subroutines that, within a domain, can have interesting
tradeoffs. Second, and unlike the case of graph edge connectivity, the
second phase of the isolating cut approach can often benefit from
additional problem specific ideas, especially if one wants to take
advantage of certain domain-specific algorithms that can be very
powerful if applied carefully.

An important aspect of the isolating cut approach is that it
inherently gives an algorithm for the subset connectivity version.  In
the following we will use $m,n$ to refer to the number of edges and
vertices in a given graph and use $\ectime{m,n}$ to refer to the
running time for computing a minimum $(s,t)$-cut in an edge-weighted
directed graph, and $\vctime{m,n}$ for the running time for computing
a minimum $(s,t)$-cut in a vertex-weighted directed graph.  We
instantiate concrete running times for special cases when needed.

\smallskip
\noindent
{\bf Connectivity of Bisubmodular functions:}
The precise statement that captures the general isolation cut property
in bisubmodular set functions requires stating several technical
definitions. Our main results for this are captured by
\reflemma{isolating-cut-partition}, \reflemma{isolating-cuts} and
\reftheorem{bisubmod-mincut} which are better understood after
the technical definitions. Here we state an informal theorem that
captures these results.

\begin{theorem}(Informal) Let $f: \V \rightarrow \reals$ be a
  symmetric bisubmodular function defined over a collection of
  set-pairs $\V$ over $V$. Let $R \subseteq V$. Suppose one has an
  oracle that given a partition $(S,T)$ of $R$ finds the $f$-minimum
  set-pair $(A,B) \in \V$ that cuts $(S,T)$. In $O(\log |R|)$ calls to
  this oracle one can find for each $r \in R$ a set-pair
  $(X_r,X'_r) \in \V$ such that the following properties hold: (i) for
  each $r \in R$, $(X_r,X'_r)$ is a $(r,R-r)$ separating set-pair,
  (ii) there is an $f$-minimum set-pair $(Y_r,Y'_r)$ separating
  $(r,R-r)$ such that $Y_r \subseteq X_r$, $X'_r \subseteq Y'_r$ and
  (iii) $X_r \cap X_q = \emptyset$ for $r \neq q$. The total run time
  for finding the $f$-minimum isolating cut $(Y_r,Y'_r)$ for each
  $r \in R$ can thus be bounded by the $O(\log |R|)$ cut computations
  and the total time to find the cuts inside each $(X_r,X'_r)$.
\end{theorem}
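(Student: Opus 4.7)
The plan is to lift the binary-labeling scheme of Li--Panigrahi from graphs to the lattice of set-pairs, using the bisubmodular inequality as the uncrossing engine. Assign each terminal $r \in R$ a distinct binary label $b^r \in \{0,1\}^k$ with $k = \lceil \log_2 |R| \rceil$, and for each coordinate $i \in \{1,\dots,k\}$ let $S_i = \{r : b^r_i = 0\}$ and $T_i = \{r : b^r_i = 1\}$. I would invoke the oracle once per coordinate to obtain an $f$-minimum $(S_i,T_i)$-separating set-pair $(A_i, B_i) \in \V$; this accounts for the $O(\log |R|)$ oracle calls claimed in the statement.

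For each $r \in R$ and $i \in [k]$, define the \emph{aligned} set-pair $(C^r_i, D^r_i)$ to be $(A_i, B_i)$ if $b^r_i = 0$ and $(B_i, A_i)$ otherwise; symmetry of $f$ makes $(B_i, A_i)$ an $f$-minimum $(T_i, S_i)$-separating cut in the latter case. Then set
\[
  (X_r, X'_r) \;:=\; \Bigl(\,\bigcap_{i=1}^k C^r_i,\;\; \bigcup_{i=1}^k D^r_i\,\Bigr),
\]
i.e., the meet of the aligned cuts in the set-pair lattice from \refsection{bisubmod} (which lies in $\V$ by closure under meets). Properties (i) and (iii) then fall out directly from the construction. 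For (i): $r \in C^r_i$ for every $i$, so $r \in X_r$; any $q \neq r$ differs from $r$ in some bit $i$ and hence lies in $D^r_i \subseteq X'_r$. For (iii): if $r \neq q$ differ in bit $i$, then $C^r_i$ and $C^q_i$ are the two disjoint coordinates of the same set-pair $(A_i, B_i)$, forcing $X_r \cap X_q = \emptyset$.

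The crux is (ii), which I would obtain by iterated uncrossing. Fix $r$, let $(Y, Y')$ be any $f$-minimum $(r, R-r)$-separating set-pair, and consider a coordinate $i$ with (WLOG) $b^r_i = 0$. Applying the bisubmodular inequality to $(Y, Y')$ and $(A_i, B_i)$ yields
\[
  f(Y,Y') + f(A_i,B_i) \;\geq\; f(Y \cup A_i,\, Y' \cap B_i) + f(Y \cap A_i,\, Y' \cup B_i).
\]
A short check shows that $(Y \cup A_i, Y' \cap B_i)$ still separates $(S_i, T_i)$---using $r \notin T_i$ so that $T_i \subseteq R - r \subseteq Y'$---and that $(Y \cap A_i, Y' \cup B_i)$ still separates $(\{r\}, R - r)$. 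The minimality of $(A_i, B_i)$ and $(Y, Y')$ then force the two terms on the right-hand side to match the two terms on the left term-by-term, so $(Y \cap A_i, Y' \cup B_i)$ is itself an $f$-minimum $(r, R-r)$-separating set-pair, now contained in $(A_i, B_i) = (C^r_i, D^r_i)$. Iterating this replacement over all $k$ coordinates (using symmetry to swap $(A_i, B_i)$ for $(B_i, A_i)$ whenever $b^r_i = 1$) produces a minimum separating set-pair confined to $(X_r, X'_r)$, which is exactly (ii).

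The hardest part will be the uncrossing step itself: one must verify simultaneously that each criss-crossed set-pair lies in $\V$ \emph{and} separates the appropriate terminal partition, so that the bisubmodular inequality collapses into a pair of equalities. Closure of $\V$ under meet and join (from \refsection{bisubmod}) together with symmetry of $f$ are exactly what make these checks go through uniformly for both bit values. Granted (i)--(iii), the total run-time claim is immediate: the $O(\log |R|)$ oracle calls produce the $(X_r, X'_r)$, and a single additional $f$-minimum cut computation confined to each $(X_r, X'_r)$ (with $R - r$ collapsed to the $X'_r$ side) extracts $(Y_r, Y'_r)$; the disjointness from (iii) keeps the aggregate size of these auxiliary subproblems controlled.
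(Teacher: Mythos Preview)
Your proposal is correct and follows essentially the same route as the paper: the binary labeling of $R$, the $k=\lceil\log|R|\rceil$ oracle calls, the definition of $(X_r,X'_r)$ as the lattice meet of the aligned cuts, and the verification of (i) and (iii) are identical to the paper's proof of \reflemma{isolating-cut-partition}. The only organizational difference is in (ii): the paper packages the uncrossing into two standalone lemmas (\reflemma{lattice-min-cuts} on order-preservation of $\preceq$-minimal $f$-minimum cuts, and \reflemma{uncrossing-min-cuts} as a corollary), whereas you carry out the same bisubmodular uncrossing inline, one coordinate at a time. Your direct argument is slightly more elementary in that it never needs the uniqueness of the $\preceq$-minimal minimum cut, but the underlying inequality manipulation is the same. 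One small point worth making explicit: your argument for (iii) silently uses that every set-pair in $\V$ has disjoint coordinates (so that $A_i\cap B_i=\emptyset$); the paper states this as the \emph{pairwise disjoint} hypothesis on $\V$ in the formal version of the lemma.
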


\smallskip
\noindent
{\bf Symmetric submodular functions:} We derive the following theorem
as a corollary.

\begin{theorem}
  \labeltheorem{intro-symsubmod-minimum cut}
  Let $\deff$ be a symmetric submodular function and $R \subseteq V$
  and let $n = |V|$. Suppose there is an algorithm for submodular
  function minimization in the value oracle model in time
  $\SFMtime{n} = g_1(n) \text{EO} + g_2(n)$ where $\text{EO}$ is the time for the
  evaluation oracle. Assuming that $g_1(n) = \Omega(n)$ and
  $g_2(n) = \Omega(n)$, a minimum $f$-cut that separates some two
  terminals in $R$ can be found in
  $\bigO{\SFMtime{n} \log^2 (n)}$ time.
\end{theorem}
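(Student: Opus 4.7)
The plan is to combine the (informal) isolating-cut theorem above with a standard scale-guessing sampling scheme. Since every symmetric submodular function is a symmetric bisubmodular function (identifying the set $A$ with the pair $(A, V \setminus A)$), the informal theorem applies once one supplies the required oracle. For symmetric submodular $f$ and a partition $(S,T)$ of $R \subseteq V$, the $f$-minimum set-pair cutting $(S,T)$ corresponds to $X^{\star} = \argmin\{f(X) : S \subseteq X \subseteq V \setminus T\}$; this is SFM over a ring family, which reduces by contraction of $S$ and deletion of $T$ to ordinary SFM on at most $n$ elements, costing $\SFMtime{n}$ time.

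Given any $R' \subseteq R$, the informal theorem then yields, in $O(\log|R'|)$ oracle calls, disjoint sets $\{X_r : r \in R'\}$ such that each optimal $(r, R' - r)$-isolating cut $Y_r$ satisfies $Y_r \subseteq X_r$. I would recover each $Y_r$ by minimizing $f$ over $\{Y : r \in Y \subseteq X_r\}$, which is itself an SFM instance of size $|X_r|$. Disjointness gives $\sum_r |X_r| \leq n$, and $\sum_r \SFMtime{|X_r|} = O(\SFMtime{n})$ follows from the hypothesis $g_1(n), g_2(n) = \Omega(n)$ together with the natural convexity of the SFM running time, so one entire invocation of the isolating-cuts procedure on $R'$ costs $O(\SFMtime{n} \log n)$.

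To reduce the Steiner minimum cut problem on $R$ to isolating cuts, I would use a scale-guessing random sampling scheme. Let $(S^{\star}, V \setminus S^{\star})$ achieve the Steiner minimum and WLOG $k := |R \cap S^{\star}| \leq |R|/2$. For each $i = 0,1,\ldots,\lceil\log_2 |R| \rceil$ I sample $R_i \subseteq R$ by including each element independently with probability $2^{-i}$, run the isolating-cuts procedure on $R_i$, and record the smallest cut value obtained. For the scale with $2^i \in [k, 2k]$, a standard calculation shows that with constant probability $|R_i \cap S^{\star}| = 1$ and $R_i \not\subseteq S^{\star}$; on this event $S^{\star}$ is a valid separator for the unique $r \in R_i \cap S^{\star}$ against $R_i - r$, so the $r$-isolating cut computed on $R_i$ has value $\leq f(S^{\star})$ and hence is optimal.

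Totaling $O(\log n)$ scales against the per-invocation cost of $O(\SFMtime{n} \log n)$ gives $O(\SFMtime{n} \log^2 n)$, with constant success probability (boostable to high probability at a further $O(\log n)$ factor, though the statement only asks for the weaker bound). The main obstacle in my view is the bookkeeping around the disjoint subproblems: carefully justifying $\sum_r \SFMtime{|X_r|} = O(\SFMtime{n})$ is where the hypothesis $g_1, g_2 = \Omega(n)$ actually does work, and it is also the place where a careless argument would accrue an avoidable extra logarithmic factor.
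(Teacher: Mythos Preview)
Your proposal is correct and follows essentially the same approach as the paper: embed the symmetric submodular function into the bipartition lattice, implement the separating-cut oracle via SFM over a contracted ground set, apply the isolating-cut lemma, and combine with scale-guessing random sampling to reduce the Steiner minimum cut to $O(\log n)$ rounds of isolating cuts. The paper's proof (\refcorollary{symsubmod-mincut} via \reftheorem{bisubmod-mincut}) is organized the same way, and your remark about the $\sum_r \SFMtime{|X_r|} = O(\SFMtime{n})$ bookkeeping is exactly the place where the $g_1,g_2 = \Omega(n)$ hypothesis is used there as well.
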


\begin{corollary}
  Let $f$ be an integer valued symmetric submodular function with
  $|f(S)| \le M$. Using the submodular function minimization
  algorithms of \cite{lsw-15} one can find the global minimum cut of
  $f$ with high probability in time
  $\tilde{O}(n^2\log (nM) \text{EO} + n^3 \log^{O(1)} (nM))$.
\end{corollary}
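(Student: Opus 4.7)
The plan is to instantiate \reftheorem{intro-symsubmod-minimum cut} with $R = V$ and substitute the submodular function minimization running time of Lee--Sidford--Wong \cite{lsw-15}. Taking $R = V$ is the natural choice because every nontrivial bipartition $(S, V \setminus S)$ with $\emptyset \subsetneq S \subsetneq V$ automatically places at least one element of $R$ on each side, so the minimum $f$-cut separating some pair of terminals in $R = V$ is exactly the global minimum $f$-cut.

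Next, I would recall that on an integer-valued submodular function with $|f(S)| \le M$, the LSW algorithm performs submodular function minimization in
\[
    \SFMtime{n} \;=\; \tilde{O}\bigl(n^2 \log(nM)\bigr)\,\text{EO} \;+\; \tilde{O}\bigl(n^3 \log^{O(1)}(nM)\bigr),
\]
which fits the interface $g_1(n)\,\text{EO} + g_2(n)$ with $g_1(n)$ of order $\tilde{\Omega}(n^2)$ and $g_2(n)$ of order $\tilde{\Omega}(n^3)$, comfortably satisfying the $g_1, g_2 = \Omega(n)$ hypothesis of the theorem. Plugging into the $\bigO{\SFMtime{n} \log^2 n}$ bound and absorbing the extra $\log^2 n$ factor into the $\tilde{O}$ or into the $\log^{O(1)}(nM)$ polylog yields precisely the claimed running time.

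The ``with high probability'' qualifier is handled by a standard probability boost: the mincut algorithm underlying \reftheorem{intro-symsubmod-minimum cut} uses random sampling in the reduction from the global cut to an isolating-cut instance, and succeeds with at least a constant probability per run, so running $O(\log n)$ independent repetitions and retaining the smallest cut found amplifies success to any inverse-polynomial error at only a logarithmic overhead (again absorbed).

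There is no substantial obstacle here --- the corollary is essentially a parameter substitution into the preceding theorem. The only bookkeeping step worth performing carefully is verifying that the LSW running time factors cleanly into oracle calls versus arithmetic and that both pieces dominate $n$; once this is confirmed, the final expression follows, together with the observation that $R=V$ converts the subset connectivity guarantee of \reftheorem{intro-symsubmod-minimum cut} into the global connectivity statement needed.
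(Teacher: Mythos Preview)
Your proposal is correct and matches the paper's intent: the corollary is stated in the paper without proof, as an immediate instantiation of \reftheorem{intro-symsubmod-minimum cut} with $R=V$ and the LSW running time for $\SFMtime{n}$, together with the standard repetition-to-boost from constant to high probability. There is nothing to add.
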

The preceding corollary should be compared to Queyranne's well-known
combinatorial algorithm that uses $O(n^3 \text{EO})$ time \cite{q-98}.
The algorithm from \cite{lsw-15} is not strongly polynomial but uses a
factor $\tilde{\Omega}(n)$ fewer evaluation calls. Further, our
randomized algorithm can handle minimum $f$-cut for a subset of
terminals while Queyranne's algorithm does not generalize. In
addition, the black box reduction can take advantage of future
improvements to $\SFMtime{n}$ as well as for special cases as we will
see next.

\smallskip
\noindent {\bf Hypergraph connectivity:} A hypergraph $H=(V,E)$ consists of
vertices $V$ and hyperedges $E$ where each hyperedge $e \in E$ is a
subset of nodes; that is, $e \subseteq V$. We let
$p = \sum_{e \in E} |e|$ denote the total size of $H$ and let $m,n$
denote number of hyperedges and vertices. The rank $r$ of a hypergraph
is the maximum edge size; graphs are rank $2$ hypergraphs.
The cut function of a hypergraph is symmetric and submodular and the
global minimum cut question for edge connectivity naturally
generalizes to hypergraphs.  The best deterministic algorithm for this
problem runs in $O(pn+n^2 \log n)$ time \cite{KlimmekW96,q-98,MakW00}.
The best randomized algorithm runs in time $\tilde{O}(n^r)$ time with
high probability in rank $r$ hypergraphs \cite{FoxPZ19} and this is
better than $\tilde{O}(pn)$ only for very dense hypergraphs.  Via
sparsification one can also get an algorithm in unweighted hypergraphs
that runs in time $O(p + \lambda n^2)$ where $\lambda$ is the minimum
cut value \cite{cx-18}.  We obtain the following theorem that gives
significantly better bounds in most settings of interest, and new
tradeoffs, while also generalizing to subset minimum cut.

\begin{theorem}
  \labeltheorem{intro-hypergraphs} Let $H=(V,E)$ be a weighted
  hypergraph with $m$ edges, $n$ nodes and total size
  $p = \sum_{e \in E} |e|$. Let $R \subseteq V$.  The global minimum
  cut for $R$ in $H$ can be found with high probability in time
  $\tilde{O}(\ectime(p,m+n))$ or in time
  $\tilde{O}(\sqrt{pn(m+n)^{1.5}})$.
\end{theorem}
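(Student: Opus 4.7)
The hypergraph cut function $f(S)=\sum_{e\in E:\, e\cap S\neq\emptyset,\, e\not\subseteq S} w(e)$ is symmetric and submodular, so the plan is to invoke \reflemma{isolating-cut-partition}, \reflemma{isolating-cuts}, and \reftheorem{bisubmod-mincut} with $f$ as the underlying symmetric submodular function and $R$ as the terminal set. The only problem-specific oracle demanded by the framework is a minimum $S$-$T$ $f$-cut for a bipartition $(S,T)$ of $R$; for the hypergraph cut function this is precisely a standard $(s,t)$-minimum cut in the hypergraph after contracting $S$ and $T$ into single super-terminals. The framework calls this oracle only $O(\log |R|)$ times during the partition phase.

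I will implement that oracle by reducing it to an $(s,t)$-minimum cut in a directed graph via the classical Lawler gadget: for each hyperedge $e$ of weight $w(e)$, add auxiliary vertices $e^-, e^+$, an arc $e^-\to e^+$ of capacity $w(e)$, and infinite-capacity arcs $v\to e^-$ and $e^+\to v$ for every $v\in e$. The resulting digraph has $O(m+n)$ vertices and $O(p)$ arcs and its minimum $(s,t)$-cut coincides with the hypergraph's. A single oracle call therefore costs $\ectime(p,m+n)$, and the partition phase uses $\tilde{O}(1)$ such calls.

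For phase two, \reflemma{isolating-cut-partition} returns pairwise disjoint regions $U_r$, one per $r\in R$, each containing the corresponding $r$-isolating cut. The actual $r$-isolating cut inside $U_r$ is again a minimum $s$-$t$ cut in the hypergraph restricted to $U_r$ together with its incident hyperedges (with the rest of $R$ contracted to the sink). Because the $U_r$ are pairwise disjoint, the combined instance has total size $O(p)$, and a single additional max-flow of cost $\ectime(p,m+n)$ handles all of them together after being stitched through a common source/sink gadget. Taking the smallest isolating cut returns the subset minimum cut with high probability, and summing the two phases yields the first bound $\tilde{O}(\ectime(p, m+n))$.

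For the second bound $\tilde{O}(\sqrt{pn(m+n)^{1.5}})$, I would substitute a concrete max-flow algorithm whose cost is best expressed jointly in edges and vertices, such as Goldberg--Rao's $\tilde{O}(\min(M^{3/2}, MN^{2/3}))$ bound applied to the Lawler digraph with $M=O(p)$ and $N=O(m+n)$, and precede it with a one-shot hypergraph cut sparsifier that reduces the oracle instance's edge count without losing the relevant cuts. The main obstacle is the rebalancing: one must choose the sparsifier's target size so that the cost of sparsification and the per-call max-flow cost combine to exactly the claimed geometric mean, and one must verify that the sparsifier preserves not just a single $(s,t)$-cut value but all the cuts simultaneously examined by the isolating-cut framework, so that the high-probability guarantee of phase two survives.
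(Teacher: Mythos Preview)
Your treatment of the first bound is essentially the paper's: invoke the symmetric submodular isolating-cut framework, realize the $(S,T)$ oracle as an edge-capacitated $(s,t)$-cut in the standard bipartite representation of the hypergraph (the paper also reduces to $\ectime(p,m+n)$), and observe that the phase-two subproblems have total size $O(p)$ so that the cost of all the small cuts folds into $\tilde{O}(\ectime(p,m+n))$. That part is fine.

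The second bound is where your proposal goes off the rails. Goldberg--Rao on the Lawler digraph yields $\tilde{O}(\min\{p^{3/2}, p(m+n)^{2/3}\})$, and no choice of a generic hypergraph cut sparsifier turns that into $\sqrt{pn(m+n)^{1.5}}$; note in particular that $n$ (the number of hypergraph vertices, not the Lawler digraph's vertex count) appears in the target bound, which your ingredients never isolate. You also correctly flag that a sparsifier would have to preserve \emph{all} the cuts the framework inspects, but even setting that aside the arithmetic does not reach the stated geometric mean.

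The paper obtains $\tilde{O}(\sqrt{pn(m+n)^{1.5}})$ by a different idea, applied \emph{inside} the phase-two subproblems. After the partition, each subproblem $G_r$ is formed by contracting $V\setminus U_r$ to a single sink; the key observation is that any augmenting path in (the flow reduction of) $G_r$ has length $O(n_r)$ where $n_r=|U_r|$, so blocking flows solve $G_r$ in $\tilde{O}(p_r n_r)$. On the other hand, the recent $\tilde{O}(m+n^{1.5})$ max-flow algorithm solves $G_r$ in $\tilde{O}(p_r+(m+n)^{1.5})$. Since $\sum_r n_r\le n$ and $\sum_r p_r=O(p)$, one runs blocking flows when $n_r$ is below a threshold and the fast algorithm otherwise; only $O(n/\text{threshold})$ subproblems fall in the latter class. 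Balancing the threshold gives exactly $\tilde{O}(\sqrt{pn(m+n)^{1.5}})$. The missing idea in your proposal is this per-subproblem tradeoff exploiting the disjointness of the $U_r$'s (hence $\sum_r n_r\le n$) together with the short augmenting paths in the contracted instances.
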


Now we state our algorithmic results for element connectivity and vertex
connectivity that follow via the bisubmodularity framework and
problem specific optimizations.

\smallskip
\noindent {\bf Element connectivity:} The fastest known algorithm so
far for global element connectivity is from \cite{crx} and runs in
time $O(|T| \ectime{m,n})$ for terminal set $T$, which can be
$\Omega(n \ectime{m,n})$. We obtain the following.
\begin{theorem}
  \label{thm:intro-elem-conn}
  Let $G=(T \cup N, E)$ be an instance of weighted element
  connectivity with $|T| =  k$ terminals. The global element
  connectivity can be computed in
  \begin{math}
    \apxO{
    \ectime{m}{n} + \max[{m_1,\dots,m_k}]{\sum_{i=1}^k
      \ectime{m_i}{n} \where m_1 + \cdots + m_k \leq 2 m}
    }
  \end{math}
  time with high probability. The algorithm generalizes to subset
  element connectivity.
\end{theorem}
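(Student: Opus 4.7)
The plan is to instantiate the bisubmodular isolating-cut framework (captured by the informal theorem stated above) for element connectivity, and then combine it with random sampling to reduce global element connectivity to isolating-cut subproblems.

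First I would verify that element cuts fit the symmetric bisubmodular setting. Given terminals $T$ and non-terminals $N$, for each pair of disjoint sets $A,B \subseteq T$ let $f(A,B)$ denote the minimum weight of elements (edges and non-terminals) whose removal separates $A$ from $B$; equivalently, $f$ is evaluated on set-pairs over $T \cup N$ recording the two vertex sides of the separator. Bisubmodularity of $f$ is standard in the element-connectivity literature, and $f$ is symmetric since the roles of $A$ and $B$ are interchangeable. Hence the isolating-cut lemma applies: for a sampled terminal subset $R \subseteq T$, using $O(\log |R|)$ calls to an oracle that computes a minimum set-pair cut for a given bipartition of $R$, we obtain container set-pairs $(X_r,X'_r)$ for each $r \in R$ with pairwise disjoint $X_r$ sides whose interiors contain some $f$-minimum $r$-vs-$(R\setminus r)$ separator. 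Each oracle call reduces to an $(s,t)$ element min-cut with super-source and super-sink, and then to a directed edge $(s,t)$ min-cut via the standard vertex-splitting gadget, taking $\ectime{m}{n}$ time.

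Next I would apply the random sampling argument of \cite{li-panigrahi}. For each power-of-two probability $p = 2^{-i}$ with $i = 0,1,\ldots,\lceil \log k \rceil$, include each terminal in $R$ independently with probability $p$. For the (unknown) global minimum element cut bipartitioning terminals into $A^\ast$ and $B^\ast$ with $s = \min(|A^\ast|,|B^\ast|)$, the choice $p \approx 1/s$ ensures that with constant probability exactly one terminal of the smaller side lies in $R$ while no terminal of the larger side does; for such $R$ the $f$-minimum isolating cut for the lone sampled terminal coincides with the global optimum. Repeating $O(\log n)$ times drives the failure probability below any desired inverse polynomial, and the subset version falls out immediately by restricting $R$ to the prescribed subset. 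The remaining step is to compute each isolating cut inside its container: this reduces to an element min-cut on a subgraph with $m_r$ edges and at most $n$ vertices, obtained by taking $r$ as source and contracting everything outside $(X_r,X'_r)$ into a super-sink. Because the $X_r$ are pairwise disjoint, each edge of $G$ appears in at most two of these subinstances (once per endpoint), so $\sum_r m_r \leq 2m$. The total cost is $O(\log k)$ oracle calls of $\ectime{m}{n}$ apiece plus $\sum_{r \in R} \ectime{m_r}{n}$ for the per-terminal isolating cuts, times an $O(\log n)$ repetition factor, matching the claimed bound.

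The main obstacle I anticipate lies in the very last reduction: verifying that the containers produced by the bisubmodular isolating-cut lemma truly permit each isolating-cut subproblem to be formulated as an element min-cut on a subgraph whose edge counts aggregate to $\sum_r m_r \leq 2m$. In particular, one must handle edges and non-terminals straddling the boundary of $(X_r,X'_r)$ so that the reduction to a smaller element-connectivity instance is faithful (the contracted super-sink must correctly represent $R \setminus r$ together with the rest of the graph) and no element is charged to more than two subproblems. This is the bisubmodular set-pair analogue of the ``tight boundary'' counting in the edge-connectivity argument of \cite{li-panigrahi}, and requires unpacking the set-pair structure carefully.
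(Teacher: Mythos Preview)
Your high-level plan matches the paper's: instantiate the symmetric bisubmodular isolating-cut framework for element connectivity, then layer the \cite{li-panigrahi} random sampling on top. Two points deserve correction or sharpening.

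First, your sampling condition is stated incorrectly. You write that with $p \approx 1/s$, ``exactly one terminal of the smaller side lies in $R$ while no terminal of the larger side does.'' If $R$ contains only that single terminal there is nothing to isolate it from and no $(r,R-r)$-cut to compute. The correct requirement (used both here and in \cite{li-panigrahi}) is that $R$ hits the smaller side exactly once \emph{and} the larger side at least once; this occurs with constant probability when $p \approx 1/s$ since the larger side has at least $s$ terminals. With the condition as you wrote it, the argument does not go through.

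Second, the paper's instantiation of bisubmodularity is crisper than yours and directly delivers the $\sum_r m_r \le 2m$ bound you flag as the main obstacle. Rather than take $f(A,B)$ to be a minimum-separator value over terminal pairs (whose bisubmodularity you would still need to verify), the paper takes the ground set to be the \emph{elements} $\barV = V \cup E$, the lattice $\V$ to consist of disjoint, element-wise disconnected pairs $(X,Y) \subseteq \barV \times \barV$ covering $T$, and sets $f(X,Y) = w(\barV \setminus (X \cup Y))$, which is modular (hence trivially bisubmodular) and symmetric. The isolating-cut lemma then returns element-disjoint containers $\barU_r \subset \barV$; because edges are themselves elements of these containers, each edge lies in at most one $\barU_r$ and on the boundary $\barU_r'$ of at most two, giving $\sum_r m_r \le 2m$ immediately. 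Your ``once per endpoint'' heuristic reaches the same inequality, but only after translating vertex-side containers back to edges; working over the element lattice makes that bookkeeping automatic and resolves precisely the obstacle you anticipated.
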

In particular, for $\ectime{m}{n}$ of the form $\ectime{m}{n} =\apxO{m
  \poly{m,n}}$, the running time above is $\apxO{\ectime{m}{n}}$. For
instance, via \cite{lee-sidford}, one obtains an $\apxO{m \sqrt{n}}$ time
algorithm. However, recent breakthrough work of \cite{brand+} showed
that $\ectime{m}{n} = \apxO(m + n^{1.5})$. This running time bound
cannot be directly used in the preceding theorem. Using further ideas
we obtain an improved running times that are encapsualted in the
following theorem.
\begin{theorem}
  Let $G = (T \cup N, E)$ be an instance of
  element connectivity with $n$ nodes and $m$ edges.
  Let $w: V \cup E \to [1..U]$ assign integer
  (or infinite) weights to each vertex and edge. The global element
  connectivity can be computed in randomized  $\apxO{m^{1 + o(1)} n^{3/8} U^{1/4} + n^{1.5}}$
  time or in $ \apxO{m^{1/2} n^{5/4}}$ time where  $\apxO{\cdots}$ hides
  $\poly{\log{n}, \log{U}}$-factors.
\end{theorem}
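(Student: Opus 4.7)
The plan is to combine the element-connectivity version of the isolating cut framework from the preceding theorem with two carefully chosen max-flow black boxes, so that the additive $n^{1.5}$ cost in Brand et al.'s max-flow algorithm is paid only $\polylog{n}$ times on the initial global cuts, while the many disjoint subproblem cuts are dispatched by a black box whose cost is purely multiplicative in $m$ and $n$. The previous theorem reduces global element connectivity to $\polylog{n}$ min-cut computations on the full graph, plus a family of subproblem min-cut computations whose edge counts satisfy $\sum_i m_i \le 2m$. I would first verify the additional claim that the isolating cut lemma produces vertex-disjoint subproblems, so that $\sum_i n_i \le O(n)$ as well; this is crucial for applying the convexity bound $\sum_i n_i^{3/2} \le (\sum_i n_i)^{3/2} = O(n^{3/2})$.

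For the first bound $\apxO{m^{1+o(1)} n^{3/8} U^{1/4} + n^{1.5}}$, I would invoke the Brand et al.\ algorithm on the $\polylog{n}$ global cut computations, paying $n^{1.5}$ up to logarithmic factors. For each subproblem I would invoke a weight-sensitive max-flow of the form $\apxO{m^{1+o(1)} n^{3/8} U^{1/4}}$ (obtained by instantiating an interior-point scheme with a preconditioner whose dependence on $n$ is subpolynomial), and aggregate via H\"older's inequality subject to $\sum m_i \le 2m$ and $\sum n_i \le O(n)$. Since the exponents sum to less than $2$, the per-subproblem cost is concave enough in its arguments that the aggregate is bounded by the single-call cost $\apxO{m^{1+o(1)} n^{3/8} U^{1/4}}$; adding the $n^{1.5}$ term from the global phase yields the stated bound.

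For the second bound $\apxO{m^{1/2} n^{5/4}}$, I would first bracket the min cut $\lambda$ by doubling and sparsify the input so that the effective edge count is $\apxO{\lambda n}$. Element cuts can be sparsified through the standard vertex-splitting reduction, which converts element connectivity to edge connectivity on a graph of size $O(m + n)$, after which a Nagamochi--Ibaraki-style sparsifier applies and preserves all cuts of value $O(\lambda)$. Running the isolating cut framework on the sparsified graph with a Lee--Sidford-style $\apxO{m\sqrt{n}}$ max flow inside each subproblem gives an aggregate cost of $\apxO{\sum_i m_i \sqrt{n_i}} \le \apxO{\lambda n^{3/2}}$ by Cauchy--Schwarz. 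Balancing $\lambda n^{3/2}$ against a complementary regime where $\lambda$ is large (so the trivial bound $\lambda \le m/n$ forces the graph to be dense and a direct $\apxO{\lambda m}$-type argument kicks in) yields $m^{1/2} n^{5/4}$ after optimizing over $\lambda$.

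The main obstacle will be verifying that the isolating cut lemma produces subproblems that are vertex-disjoint (and not merely disjoint in terminal support) in the element-connectivity setting, so that $\sum n_i \le O(n)$ holds after contracting each subproblem to its relevant piece of the graph; and obtaining the right sparsifier for element connectivity that preserves all cuts of value $O(\lambda)$ at the required scale through the vertex-splitting reduction. The precise exponents $3/8$, $1/4$, and $5/4$ emerge from balancing the per-call max-flow cost against the convexity/Cauchy--Schwarz bounds on subproblem sizes, and some care is needed to correctly propagate the $U$-dependence through the weighted interior-point max flow without re-introducing a $U^{1/4}$ factor on the additive $n^{1.5}$ term.
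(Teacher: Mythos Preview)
Your proposal rests on two premises that both fail, one per bound.

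For the first bound, you assume the isolating-cut decomposition produces vertex-disjoint subproblems so that $\sum_i n_i \le O(n)$. It does not. The first components $A_r$ of the set-pairs $\X_r$ are disjoint, but each auxiliary graph $G_r$ also contains the boundary $\barU_r'$ (the non-terminal vertices and edges just outside $\barU_r$), and these boundaries overlap freely across different $r$. The paper explicitly flags this as the obstruction to directly using $\ectime{m}{n}=\apxO{m+n^{1.5}}$. Only $\sum_r m_r \le 2m$ holds; there is no useful bound on $\sum_r n_r$. Consequently your convexity argument $\sum_i n_i^{3/2}\le(\sum_i n_i)^{3/2}$ is unavailable. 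You also invoke a max-flow black box of the form $\apxO{m^{1+o(1)} n^{3/8} U^{1/4}}$; no such algorithm exists---that expression is the \emph{output} of the analysis, not an input. The paper instead partitions the subproblems by edge count: for subproblems with $m_i$ below a threshold it uses $\ectime{m}{n}=\apxO{m^{4/3+o(1)}U^{1/3}}$ (Liu--Sidford), and for the at most $k/\alpha$ subproblems above the threshold it uses $\apxO{m+n^{1.5}}$; balancing the threshold yields $m^{1+o(1)}n^{3/8}U^{1/4}$.

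For the second bound, your sparsification route is unworkable: Nagamochi--Ibaraki sparsification is for unweighted undirected edge connectivity, and the vertex-splitting reduction produces a capacitated directed graph to which it does not apply; moreover element connectivity has weighted non-terminal vertices, not just edges. The paper's argument is entirely different and does not touch sparsification or $\lambda$. It observes that although the full vertex sets $n_r$ are not disjoint, the \emph{inner} vertex counts $\tilde n_r = |\barU_r \cap V|$ are, so $\sum_r \tilde n_r \le n$. It then constructs the auxiliary split graph for $G_r$ so that boundary vertices form an independent set, which forces every augmenting path to have length $O(\tilde n_r)$; hence blocking flows solve subproblem $r$ in $\apxO{m_r \tilde n_r}$ time. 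Balancing this against $\apxO{m_r + n^{1.5}}$ across subproblems (thresholding on $\tilde n_r$) gives $\apxO{m^{1/2}n^{5/4}}$.
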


\smallskip
\noindent {\bf Vertex connectivity:} We now consider global vertex
connectivity of both weighted and unweighted graphs. For simplicity we
consider the interesting setting where there is a vertex separator of
size less than $0.99W$ where $W$ is the total vertex weight.
We obtain new and faster randomized $(1+\eps)$-approximation algorithms
that improves upon the randomized
$\tilde{O}(mn)$ exact algorithm of Henzinger, Rao and Gabow
\cite{hgr}. The algorithms are based on reducing, via sampling,
to computing isolating element cuts. The running times we
obtain are captured by the following theorem.

\begin{theorem}
  \label{thm:intro-vertex-conn}
  Let $G=(V,E)$ be a \emph{weighted} instance  of vertex connectivity.
  There is a randomized algorithm that gives a
  $(1+\eps)$-approximation with high-probability in
  $\tilde{O}(\ectime{m}{n}/\eps)$ time; in particular there is a randomized algorithm
  that runs in time $\tilde{O}(m \sqrt{n}/\eps)$. For dense graphs
  there is a randomized algorithm that runs in  $\tilde{O}(m^{1/2}n^{5/4}/\eps)$ time.
\end{theorem}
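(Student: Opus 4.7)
The plan is to reduce weighted vertex connectivity to subset element connectivity via a randomized sampling step, and then invoke the subset element-connectivity algorithms established earlier in the paper. Given $G = (V, E)$ with vertex weights $w$ and a sampled terminal set $R \subseteq V$, I would construct an element-connectivity instance on the same vertex set, declaring $R$ to be the terminals, $V \setminus R$ to be non-terminals carrying their original weights, and promoting every edge to weight $\infty$. Since no finite element cut can include an edge, every finite-weight element cut is exactly a vertex separator of $G$ that is disjoint from $R$; conversely, any vertex separator of $G$ disjoint from $R$ that is straddled by $R$ is a feasible element cut. Consequently the minimum element cut in the reduced instance upper bounds the vertex connectivity $\lambda^\ast$, with equality whenever $R$ intersects both sides of an optimal vertex cut while avoiding its separator.

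The key step is to design the sample so that, for some guess, these conditions hold approximately. The assumption $w(N^\ast) < 0.99 W$ guarantees that the two sides of an optimal cut together carry at least $0.01 W$ of weight. I would sample each vertex independently with probability proportional to $w(v)$ scaled by a rate $p$, and since $\lambda^\ast / W$ is unknown I would try a geometric schedule of $O(\log(n/\eps))$ candidate rates. For the correct rate, with constant probability the sample straddles the two sides of the optimal cut while $w(R \cap N^\ast) \leq \eps \lambda^\ast$; in this event, assigning each sampled separator vertex to one side and paying at most its own weight to re-separate it yields an element cut of value at most $(1+\eps)\lambda^\ast$. Since every element cut in the reduced instance is itself a vertex separator in $G$, the algorithm's output is always a valid upper bound on $\lambda^\ast$, and $O(\log n)$ independent repetitions turn the constant-probability success into a high-probability guarantee.

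For the running time, each of the $O(\log(n/\eps) \log n)$ trials invokes the subset element-connectivity algorithms established earlier in this paper, which cost $\tilde{O}(\ectime{m}{n})$ in general (or $\tilde{O}(m^{1/2} n^{5/4})$ in the dense regime). Folding the logarithmic factors into $\tilde{O}(\cdot)$ yields the claimed $\tilde{O}(\ectime{m}{n}/\eps)$ and $\tilde{O}(m^{1/2} n^{5/4}/\eps)$ bounds. Using the Lee--Sidford $s$-$t$ vertex mincut bound $\ectime{m}{n} = \tilde{O}(m\sqrt{n})$ gives the $\tilde{O}(m\sqrt{n}/\eps)$ running time stated in the theorem. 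Crucially, the \emph{subset} element-connectivity algorithm is what is invoked here, so the isolating-cut machinery from the bisubmodular framework is used in an essential way rather than merely $s$-$t$ element mincut.

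The main obstacle I expect is the sampling and repair argument when the optimal separator $N^\ast$ contains individual vertices whose weight is comparable to $\lambda^\ast$: then either the sample misses such a vertex entirely (fine) or captures it and inflates $w(R \cap N^\ast)$ well beyond $\eps \lambda^\ast$. I anticipate handling this via a preprocessing step that either guesses/enumerates the few heavy separator candidates in an outer loop or applies a weight-thresholding transformation, so that the random sampling is applied only to vertices of weight at most $\eps \lambda^\ast / \log n$, where the expected weight in $R \cap N^\ast$ concentrates tightly enough for the repair argument to deliver the $(1+\eps)$ factor. Keeping the total overhead from this heavy/light split to $\tilde{O}(1/\eps)$ rather than a polynomial loss is where the careful accounting will lie.
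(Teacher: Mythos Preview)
Your reduction to subset element connectivity with $R$ as terminals, $V\setminus R$ as weighted non-terminals, and infinite edge weights is exactly what the paper does. The divergence is in the sampling step, and the obstacle you flag at the end is real: your repair argument (``pay at most its own weight'') is not sound as stated, because a sampled separator vertex $v\in R\cap N^\ast$ may have neighbors directly in the far side $T$, and re-separating $v$ from $T$ can cost far more than $w(v)$. The heavy/light decomposition you sketch might be made to work, but the paper avoids the issue entirely with a much simpler idea.

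The missing observation is to first compare against all \emph{singleton} cuts, i.e.\ $\min_v w(N(v))$. If some $w(N(v))\le (1+\eps)\vc$ we are already done. Otherwise every vertex has weighted degree exceeding $(1+\eps)\vc$; picking any $v$ in the smaller side $S$ of an optimal cut and noting $N(v)\subseteq S\cup N^\ast$ gives the crucial lower bound $w(S)\ge \eps\,\vc$. Now sample each vertex $v$ with probability $w(v)/\mu$ for a geometrically guessed $\mu\in[2\max\{w(S),\vc\},\,4\max\{w(S),\vc\}]$. Since $\mu\ge 2\vc$, the sample avoids $N^\ast$ \emph{completely} with probability at least $\prod_{v\in N^\ast}(1-w(v)/\mu)\ge e^{-2\vc/\mu}\ge 1/e$; since $w(S)\ge \eps\mu/4$, it hits $S$ exactly once with probability $\Omega(\eps)$; and it hits $T$ with constant probability under the standing assumption $\vc\le 0.99W$. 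These three events are independent, so with probability $\Omega(\eps/\log(nU))$ the minimum vertex cut is \emph{exactly} an $R$-isolating cut (after pruning $R$ to an independent set, which is safe because $S$ and $T$ share no edge). Repeating $\tilde O(1/\eps)$ times and calling the isolating-cut routine gives the theorem directly, with no repair and no heavy-vertex casework.
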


There has been exciting recent work on faster algorithms for vertex
connectivity via a local connectivity approach
\cite{nsy-19,fnsyy}. The algorithms are limited to unweighted graphs
while our theorem above gives the first constant factor approximation
for weighted vertex connectivity in $o(mn)$ time. For unweighted
graphs and graphs with small integer capacities we can obtain
\emph{exact} algorithms by setting $\eps = 1/\kappa$ where $\kappa$ is
the vertex connectivity. We obtain several different tradeoffs
depending on $m,n, \kappa$. These can be found in \refsection{vc}.

\medskip
\noindent {\bf Organization:}
\refsection{bisubmod-short} describes the bisubmodularity framework and the
abstract results; we defer to the appendix (\refsection{bisubmod}) a more detailed description with
several examples and formal proofs of the lemmas and theorems stated in \refsection{bisubmod-short}.
\refsection{elem-conn} describes the algorithms for
element-connectivity and \refsection{vc} describes our algorithms for
vertex connectivity. \refsection{hypergraphs} describes our algorithms for hypergraph
connectivity.


\section{Isolating Cuts, Symmetric Bisubmodular Functions, and
  Lattices}
\labelsection{bisubmod-short}

Our goal in this section to define the relevant machinery to explore
and make explicit the generality of the isolating cut idea. As
discussed in the introduction, this framework is motivated by the
necessity of going beyond symmetric submodular set functions to
capture concrete applications of interest such as element and vertex
connectivity. Given the abstract nature of this discussion, in
constrast to the concrete algorithmic applications, we have elected to
give a brief and minimal discussion of the bisubmodular framework
here, and have placed a more comprehensive description in the
appendix, in \refsection{bisubmod}. The appendix includes many more
examples as well as the proofs of all lemmas and theorems stated here.

Let $V$ be a finite set of elements. An ordered pair
$(A,B) \in 2^V \times 2^V$ is a \defterm{set-pair} over $V$.  For a
family of set-pairs $\defV$ over $V$, we say that $\V$ is a
\defterm{crossing lattice}\footnote{This notion is analogous to the
  definition of a crossing family of sets.}  over $V$ if it is closed
under the following two operators.
\begin{gather*}
  (X_1,Y_1) \lor (X_2,Y_2) = (X_1 \cup X_2, Y_1 \cap Y_2). \\
  (X_1,Y_1) \land (X_2,Y_2) = (X_1 \cap X_2, Y_1 \cup Y_2).
\end{gather*}
If $\V$ is closed under these operations, then $\V$ is a lattice
under the partial order
\begin{align*}
  (X_1,Y_1) \preceq (X_2,Y_2) \iff X_1 \subseteq X_2 \andcomma Y_2
  \subseteq Y_1.
\end{align*}
The binary operator $\lor$ returns the unique least upper bound of
its arguments (a.k.a.\ the \defterm{meet}) and the binary operator
$\land$ returns the unique greatest lower bound of its arguments
(a.k.a.\ the \defterm{join}).

For a pair of sets $(X,Y) \in \subsetsof{V} \times \subsetsof{V}$, the
\defterm{transpose} of $(X,Y)$, denoted $(X,Y)^T$, is the reversed
pair of sets $(X,Y)^T \defeq (Y,X)$. A crossing lattice $\defV$ is
\defterm{symmetric} if is closed under taking the transpose.  We have
the following identities relating the transpose with the lattice
operations $\lor$ and $\land$. Observe that for
$\X, \Y \in
\V$, we have
\begin{math}
  \parof{\X^T}^T
  = \X,
\end{math}
\begin{math}
  \parof{\X \lor \Y}^T = \X^T \land \Y^T,
\end{math}
and
\begin{math}
  \parof{\X \land \Y}^T = \X^T \lor \Y^T.
\end{math}
Lastly, A crossing lattice $\defV$ is \defterm{pairwise disjoint} if
$X \cap Y = \emptyset$ for all $(X,Y) \in \V$.

We now define an abstract, lattice-based notion of cuts that unifies
the various different families of cuts of interest in graphs. Let $V$
be a set.  For two set-pairs
$\S = (S,T) \in \subsetsof{V} \times \subsetsof{V}$ and
$\X = (X,Y) \in \subsetsof{V} \times \subsetsof{V}$, we denote
\begin{align*}
  \S \subseteq \X \defiff S \subseteq X \andcomma T \subseteq Y.
\end{align*}
If $\S \subseteq \X$, then we say that $\X$ \defterm{cuts} $\S$ or
that $\X$ is an \defterm{$\S$-cut}. If $\V$ is a crossing lattice over
$V$, $R \subset V$ is a subset, and $\R$ is a crossing lattice over
$R$, then we say that $\V$ \defterm{separates} $\R$ if for every
$\S \in \R$, there is an $\S$-cut $\X \in \V$.  The following lemma
observes that cuts are closed under the two lattice operations. 

\begin{restatable}{lemma}{LatticeCuts}
  \labellemma{lattice-cuts} Let $V$ be a set and let $R \subseteq
  V$. Let $\defV$ be a crossing lattice over $V$ and let $\defR$ be a
  crossing lattice over $R$. Suppose that $\V$ separates $\R$. Let
  $\S_1,\S_2 \in \R$, let $\X_1 \in \V$ be an $\S_1$-cut, and let
  $\X_2 \in \V$ be an $\S_2$-cut.  Then $\X_1 \lor \X_2$ is an
  $\S_1 \lor \S_2$-cut and $\X_1 \land \X_2$ is an
  $\S_1 \land \S_2$-cut.
\end{restatable}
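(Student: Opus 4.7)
The plan is a direct verification by unfolding all of the definitions. Write $\S_i = (S_i, T_i)$ and $\X_i = (X_i, Y_i)$ for $i = 1, 2$. By hypothesis we have $S_i \subseteq X_i$ and $T_i \subseteq Y_i$. The lattice operations on set-pairs are purely set-theoretic: $\S_1 \lor \S_2 = (S_1 \cup S_2,\, T_1 \cap T_2)$ and $\X_1 \lor \X_2 = (X_1 \cup X_2,\, Y_1 \cap Y_2)$, and analogously with unions and intersections swapped for $\land$. To show $\X_1 \lor \X_2$ is an $(\S_1 \lor \S_2)$-cut, I need to verify the two containments $S_1 \cup S_2 \subseteq X_1 \cup X_2$ and $T_1 \cap T_2 \subseteq Y_1 \cap Y_2$; the first is immediate from monotonicity of union applied to $S_i \subseteq X_i$, and the second from monotonicity of intersection applied to $T_i \subseteq Y_i$. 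The argument for $\X_1 \land \X_2$ is symmetric.

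A small additional point I should mention explicitly is that $\X_1 \lor \X_2$ and $\X_1 \land \X_2$ must actually lie in $\V$ for it to make sense to call them $\V$-cuts; this is guaranteed by the assumption that $\V$ is a crossing lattice, i.e., closed under both $\lor$ and $\land$. (The separation hypothesis on $\V$ and $\R$ is not strictly needed for the implication itself; it is part of the ambient setup guaranteeing that $\S_1$-cuts and $\S_2$-cuts in $\V$ exist to begin with.)

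There is essentially no obstacle here; the whole statement is a bookkeeping check that ``cuts'' are a monotone notion with respect to the componentwise partial order $\preceq$, and that the crossing-lattice closure axioms carry the two witness cuts into $\V$. I would present the proof as one or two lines of set containments for each of the two cases, with a parenthetical remark citing closure of $\V$ under $\lor$ and $\land$.
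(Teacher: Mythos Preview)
Your proposal is correct and essentially identical to the paper's proof: both unfold $\S_i=(S_i,T_i)$, $\X_i=(X_i,Y_i)$, use $S_i\subseteq X_i$ and $T_i\subseteq Y_i$, and then invoke monotonicity of union and intersection to obtain the four required containments. Your additional remarks about closure of $\V$ under $\lor,\land$ and the inessential role of the separation hypothesis are fine observations the paper leaves implicit.
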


Now, let $\V$ be a lattice. A real-valued function $f: \V \to \reals$
is \defterm{submodular} if for all $\X,\Y \in \V$,
\begin{align*}
  \f{\X} + \f{\Y} \geq \f{\X \lor \Y} + \f{\X \land \Y}.
\end{align*}
\emph{Bisubmodular functions} can be interpreted as submodular
  functions over particular crossing lattices. There are at least two
  definitions of bisubmodular function in the literature. These
  definitions are similar and we discuss both.

  In one definition (e.g., in \cite{schrijver}), a function
  $f: \subsetsof{V} \times \subsetsof{V} \to \reals$ is called
  \emph{bisubmodular} if for all $X_1,Y_1,X_2,Y_2 \subseteq V$, we
  have
  \begin{align*}
    \f{X_1,Y_1} + \f{X_2,Y_2} \geq %
    \f{X_1 \cup X_2, Y_1 \cap Y_2} + \f{X_1 \cap X_2, Y_1 \cup Y_2}.
    \labelthisequation{bisubmodular}
  \end{align*}  A bisubmodular function
  $f: \subsetsof{V} \times \subsetsof{V} \to \reals$ is submodular
  over the crossing lattice of all set-pairs,
  $\V = \subsetsof{V} \times \subsetsof{V}$ (\refexample{all-sets}).

  Another definition (e.g.,
  \cite{bouchet-87,afn-96,af-96,fujishige-iwata}) of a bisubmodular
  function $f$ is that $f(X_1,Y_1)$ is only defined for disjoint sets
  $X_1$ and $Y_1$, and otherwise satisfies inequality
  \refequation{bisubmodular} for these inputs. In this version, $f$ is
  bisubmodular iff it is a submodular function over the lattice of
  disjoint sets,
  $\V = \setof{(X,Y) \where X,Y \subseteq V \andcomma X \cap Y =
    \emptyset}$ (\refexample{disjoint-sets}).

Now, let $\defV$ be a symmetric crossing lattice. A function
$f: \V \to \reals$ is \defterm{symmetric} if for all $\X \in \V$,
\begin{math}
  \f{\X} = \f{\X^{T}}.
\end{math}
This is a different definition then for symmetric submodular
\emph{set} functions and generalizes the (more standard) set-based
definition. Both undirected edge cuts and vertex cuts are examples of
symmetric submodular functions over appropriate symmetric crossing
lattices.

There is an important relationship between the sets of terminals being
separated and \emph{minimal} minimum cuts that separate them,
highlighted in the following lemma. See \reffigure{lattice-min-cuts}
for a diagrammatic description of the following lemma.

\begin{restatable}{lemma}{LatticeMinCuts}
  \labellemma{lattice-min-cuts} Let $V$ be a set and $R \subset
  V$. Let $\defV$ be a symmetric crossing lattice over $V$ and let
  $\defR$ be a symmetric crossing lattice over $R$, such that $\V$
  separates $\R$.  Let $f: \V \to \reals$ be a symmetric bisubmodular
  function. Consider the function $h: \R \to \V$ where $h(\S)$ is
  defined as $\preceq$-minimum, $f$-minimum $\S$-cut. Then $h$ is
  well-defined and carries the partial orders on $\R$ to $\V$; that
  is, $\S_1 \preceq \S_2 \implies h(\S_1) \preceq
  h(\S_2)$.
\end{restatable}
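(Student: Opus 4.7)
The plan is to prove both claims of the lemma --- well-definedness of $h$ and its monotonicity --- by a standard ``uncrossing'' argument that combines \reflemma{lattice-cuts} with the bisubmodular inequality. For $\S \in \R$, define $F(\S) \subseteq \V$ to be the set of $f$-minimum $\S$-cuts. Since $\V$ separates $\R$ some $\S$-cut exists, and since $\V \subseteq 2^V \times 2^V$ is finite, $F(\S)$ is non-empty.

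First I would show that $F(\S)$ is closed under $\land$ (and $\lor$). If $\X,\X' \in F(\S)$, then applying \reflemma{lattice-cuts} with $\S_1 = \S_2 = \S$ (using $\S \lor \S = \S \land \S = \S$) shows that both $\X \lor \X'$ and $\X \land \X'$ are $\S$-cuts. Bisubmodularity of $f$ over $\V$ gives
\[
  \f{\X} + \f{\X'} \geq \f{\X \lor \X'} + \f{\X \land \X'},
\]
and since the left side is twice the minimum value of $f$ on $\S$-cuts while each summand on the right is at least this minimum, equality must hold throughout. Hence $\X \land \X' \in F(\S)$. Iterating this over the finite set $F(\S)$, the element $\bigwedge_{\X \in F(\S)} \X$ itself lies in $F(\S)$ and is $\preceq$-below every element of $F(\S)$; this is the unique $\preceq$-minimum, $f$-minimum $\S$-cut and defines $h(\S)$.

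Next I would establish monotonicity. Suppose $\S_1 \preceq \S_2$ in $\R$, and let $\X_i = h(\S_i)$. From the partial order definition, $\S_1 \land \S_2 = \S_1$ and $\S_1 \lor \S_2 = \S_2$, so \reflemma{lattice-cuts} yields that $\X_1 \land \X_2$ is an $\S_1$-cut and $\X_1 \lor \X_2$ is an $\S_2$-cut. Bisubmodularity gives
\[
  \f{\X_1} + \f{\X_2} \geq \f{\X_1 \lor \X_2} + \f{\X_1 \land \X_2},
\]
while the $f$-minimality of each $\X_i$ as an $\S_i$-cut gives $\f{\X_1 \land \X_2} \geq \f{\X_1}$ and $\f{\X_1 \lor \X_2} \geq \f{\X_2}$. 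Equality is forced, so $\X_1 \land \X_2 \in F(\S_1)$. Since $\X_1$ is the $\preceq$-minimum element of $F(\S_1)$ and $\X_1 \land \X_2 \preceq \X_1$ automatically (as $\land$ is the greatest lower bound under $\preceq$), we obtain $\X_1 = \X_1 \land \X_2$, which is exactly $\X_1 \preceq \X_2$.

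The one thing to watch out for is the paper's reversed naming convention (the operators $\lor$ and $\land$ are called ``meet'' and ``join'' respectively, even though $\lor$ is the least upper bound and $\land$ the greatest lower bound under $\preceq$); once the directions are pinned down, the containment identities like $\X_1 \land \X_2 \preceq \X_1$ reduce to elementary set-theoretic inclusions such as $X_1 \cap X_2 \subseteq X_1$ and $Y_1 \subseteq Y_1 \cup Y_2$. Symmetry of $\V$, $\R$, and $f$ is part of the ambient framework but is not actually invoked in this argument; the proof only uses bisubmodularity of $f$ and the crossing-lattice closure properties of $\V$ and $\R$.
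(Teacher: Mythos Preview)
Your proof is correct and follows essentially the same uncrossing argument as the paper: closure of the set of $f$-minimum $\S$-cuts under $\land$ for well-definedness, and then comparing $\X_1$ with $\X_1 \land \X_2$ for monotonicity. The only cosmetic difference is that the paper phrases the monotonicity step as a contradiction (if $\X_1 \not\preceq \X_2$ then $\X_1 \land \X_2 \prec \X_1$ strictly, forcing $f(\X_1 \land \X_2) > f(\X_1)$), whereas you argue directly that $\X_1 \land \X_2 \in F(\S_1)$ together with $\preceq$-minimality of $\X_1$ forces $\X_1 = \X_1 \land \X_2$; your observation that symmetry is not invoked in this particular lemma is also accurate.
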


The following is a particularly convenient form of
\reflemma{lattice-min-cuts}, and the one applied directly in the
sequel. A diagram depicting the following lemma is given in
\reffigure{uncrossing-min-cuts}. The proof can be found in the
appendix.

\begin{restatable}{lemma}{UncrossingMinCuts}
  \labellemma{uncrossing-min-cuts} Let $V$ be a set and $R \subset
  V$. Let $\defV$ be a symmetric crossing lattice over $V$ and let
  $\defR$ be a symmetric crossing lattice over $R$, such that $\V$
  separates $\R$.  Let $f: \V \to \reals$ be a symmetric bisubmodular
  function. Let $\S_1,\dots,\S_k \in \R$ and $\X_1,\dots,\X_k \in \V$
  such that for all $i \in [k]$, $\X_i$ is an $f$-minimum
  $\S_i$-cut. Then for any $\S \in \R$ such that $\S \preceq \S_i$ for all $i$,
  there is an $f$-minimum $\X$-cut with
  $\X \preceq \X_1 \land \cdots \land \X_k$.
\end{restatable}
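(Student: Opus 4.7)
The plan is to deduce this lemma directly from \reflemma{lattice-min-cuts} by exhibiting a single canonical $f$-minimum $\S$-cut that simultaneously sits below all of the given cuts $\X_1,\dots,\X_k$. Specifically, I would introduce the map $h:\R \to \V$ from \reflemma{lattice-min-cuts}, where $h(\T)$ is the $\preceq$-minimum among all $f$-minimum $\T$-cuts. The prior lemma guarantees that $h$ is well-defined and that it carries the partial order on $\R$ to that on $\V$, i.e., it is monotone. The candidate $\X$ we produce will simply be $h(\S)$.

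The execution has three short steps. First, for each $i \in [k]$, since $\X_i$ is an $f$-minimum $\S_i$-cut and $h(\S_i)$ is the $\preceq$-minimum among such cuts, we have $h(\S_i) \preceq \X_i$. Second, because $\S \preceq \S_i$ for every $i$, monotonicity of $h$ yields $h(\S) \preceq h(\S_i) \preceq \X_i$ for all $i$. Third, the meet $\land$ is by definition the greatest lower bound in the lattice $\V$, so any common lower bound of $\X_1,\dots,\X_k$ lies at or below their meet; hence $h(\S) \preceq \X_1 \land \cdots \land \X_k$. Since $h(\S)$ is by construction an $f$-minimum $\S$-cut, setting $\X = h(\S)$ furnishes the desired element.

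I do not expect any real obstacle here: the lemma is essentially a repackaging of \reflemma{lattice-min-cuts} combined with the universal property of the meet. The only points to check along the way are that $\X_1 \land \cdots \land \X_k$ is itself a legitimate element of $\V$, which follows from $\V$ being a crossing lattice and hence closed under $\land$, and that $h(\S)$ indeed exists and is $f$-minimum, which is exactly the content of the earlier lemma. The symmetry of $\V$, $\R$, and $f$ is used only implicitly through its role in establishing \reflemma{lattice-min-cuts}; it does not need to be reinvoked here.
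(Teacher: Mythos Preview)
Your proposal is correct and follows essentially the same route as the paper: both take $\X = h(\S)$ from \reflemma{lattice-min-cuts}, use monotonicity of $h$ together with $h(\S_i)\preceq \X_i$, and conclude via the greatest-lower-bound property of $\land$. The paper additionally routes the argument through the intermediate element $\T = \S_1 \land \cdots \land \S_k$ and $\Y = h(\T)$, but this detour is inessential and your more direct chain $h(\S)\preceq h(\S_i)\preceq \X_i$ already suffices.
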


We now come to the issue of computing isolating cuts. We formalize
this as follows. Let $V$ be a set and $R \subset V$. Let $\defV$ be a
symmetric and \emph{pairwise disjoint} crossing lattice over $V$ and
let $\defR$ be the symmetric and \emph{pairwise disjoint} crossing
lattice over $R$ consisting of all partitions of $R$; i.e.,
\begin{math}
  \R = \setof{(S,T) \where S \cup T = R \andcomma S \cap T =
    \emptyset}.
\end{math}
Let $f: \V \to \reals$ be a symmetric bisubmodular function.  For
each $r \in R$ we wish to find an $f$-minimum cut $\Y_r$ for the set-pair
$(\{r\}, R-\{r\})$ (which we abbreviate as $(r,R-r)$ for notational
simplicity). The main property that leads to efficiency
is captured by the next lemma.

\begin{restatable}{lemma}{IsolatingCutPartition}
  \labellemma{isolating-cut-partition} Let $V$ be a set and
  $R \subset V$. Let $\defV$ be a symmetric and \emph{pairwise disjoint}
  crossing lattice over $V$ and let $\defR$ be the symmetric and
  \emph{pairwise disjoint} crossing lattice over $R$ consisting of all
  partitions of $R$; i.e.,
  \begin{math}
    \R = \setof{(S,T) \where S \cup T = R \andcomma S \cap T =
      \emptyset}.
  \end{math}
  Let $f: \V \to \reals$ be a symmetric bisubmodular function.
  Suppose we had access to an oracle that, given $\S \in \R$, returns
  a minimum $\S$-cut $\W \in \V$. Let $k = \logup{\sizeof{R}}$. Then
  with $k$ calls to the oracle, one can compute $k$ cuts
  $\W_1,\dots,\W_k \in \V$ such that the following holds.

  For each $r \in R$, let
  \begin{math}
    \X_r = \parof{\Land_{i \where (r, V-r) \preceq \W_i} \W_i} %
    \land %
    \parof{\Land_{i \where (r, V-r) \preceq \W_i^T} \W_i^T}
  \end{math}
  be the intersection of cuts transposed to always include $r$ in the
  first component. Then we have the following.  (1) For all $r \in R$,
  $\X_r$ is an $(r, R-r)$-cut. (2) For all $r \in R$, there is a
  minimum $(r, R - r)$-cut $\Y_r$ such that $\Y_r \preceq \X_r$. (3)
  For any two distinct elements $r,q \in R$,
  $\X_r \land \X_q \preceq \parof{\emptyset, R}$. (That is, the first
  components of the set pairs $\X_r$ are pairwise disjoint.)
\end{restatable}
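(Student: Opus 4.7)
The plan is to adapt the binary labeling trick of Li--Panigrahi to the bisubmodular lattice setting. First, I would assign each $r \in R$ a distinct label in $\{0,1\}^k$ with $k = \logup{\sizeof{R}}$, and for each coordinate $i \in [k]$ partition $R$ into $S_i$ (bit $i$ equals $0$) and $T_i$ (bit $i$ equals $1$). Then I would query the oracle on $(S_i,T_i) \in \R$ to obtain an $f$-minimum cut $\W_i \in \V$. That accounts for the $k$ oracle calls.

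To verify (1), I would use pairwise disjointness of $\V$: writing $\W_i = (A_i,B_i)$, if $r \in S_i$ then $r \in A_i$ forces $B_i \subseteq V - r$, so $(r,V-r) \preceq \W_i$; the symmetric case gives $(r,V-r) \preceq \W_i^T$ when $r \in T_i$. Hence every coordinate contributes exactly one factor to the meet $\X_r$ (not both, by pairwise disjointness again), and each factor is a $(r,V-r)$-cut. Iterated application of \reflemma{lattice-cuts} with $\S_1 = \S_2 = (r,V-r)$ then shows $\X_r$ is itself a $(r,V-r)$-cut, which is in particular a $(r,R-r)$-cut since $R - r \subseteq V - r$.

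Property (2) is the heart of the argument and is where the bisubmodular uncrossing lemma enters. For each $i$, let $\S_i^r$ be whichever of $(S_i,T_i)$ or $(T_i,S_i)$ has $r$ on the first side, and let $\W_i^r$ be the correspondingly transposed cut. Symmetry of $f$ makes $\W_i^r$ an $f$-minimum $\S_i^r$-cut, and by construction $(r,R-r) \preceq \S_i^r$ for every $i$. Then \reflemma{uncrossing-min-cuts}, applied with $\S = (r,R-r)$, directly yields an $f$-minimum $(r,R-r)$-cut $\Y_r$ with $\Y_r \preceq \W_1^r \land \cdots \land \W_k^r = \X_r$.

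For (3), I pick distinct $r,q \in R$ and a coordinate $i$ on which their labels differ, say $r \in S_i$ and $q \in T_i$. Then $\X_r \preceq \W_i$ and $\X_q \preceq \W_i^T$, so the first components of $\X_r$ and $\X_q$ sit inside the disjoint sets $A_i$ and $B_i$ respectively, and hence have empty intersection. Combined with (1) --- which gives $R - r \subseteq Y_r$ and $R - q \subseteq Y_q$, so $R \subseteq Y_r \cup Y_q$ --- the meet $\X_r \land \X_q$ has empty first component and second component containing $R$, which is exactly $\preceq (\emptyset, R)$. The main subtlety I anticipate is the transpose bookkeeping: aligning each $\W_i$ or $\W_i^T$ so that $r$ sits in the first component, and invoking symmetry of $f$ and $\V$ to argue the transposed cut is still an $f$-minimum cut of the transposed partition, so that \reflemma{uncrossing-min-cuts} applies verbatim.
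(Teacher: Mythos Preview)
Your approach is essentially identical to the paper's: the same binary-labeling construction for the $k$ partitions, the same appeal to \reflemma{uncrossing-min-cuts} (with symmetry of $f$ handling the transposed cuts) for property~(2), and the same pairwise-disjointness argument via a separating coordinate for property~(3).

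There is one small slip in your argument for~(1). The factors $\W_i^r$ are \emph{not} $(r,V-r)$-cuts in the paper's sense of the cut relation $\subseteq$: that would require $V-r$ to lie inside the second component of $\W_i^r$, which is false in general. What you actually established is $(r,V-r) \preceq \W_i^r$ in the partial order, which is a different (and much weaker) statement. The fix is immediate and is what you already do correctly in part~(2): each $\W_i^r$ is an $\S_i^r$-cut, so by \reflemma{lattice-cuts} the meet $\X_r$ is a $\bigl(\bigwedge_i \S_i^r\bigr)$-cut, and $\bigwedge_i \S_i^r = (r,R-r)$ because every $q\in R-r$ is separated from $r$ by some coordinate. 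This is precisely the one-line argument the paper gives.
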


Using the preceding lemma the problem of computing the $f$-minimum
$r$-isolating cuts is reduced to finding such a cut in $\X_r$. The
advantage, in terms of running time, is captured by the disjointness
property: for distinct $r, q \in R$ we have
$\X_r \land \X_q \preceq (\emptyset, R)$.  For each $r$ let
$\X_r = (A_r, B_r)$. Thus we have $\sum_r |A_r| \le |V|$.  Given $r$
and $\X_r$, the problem of computing the $f$-minimum cut
$\Y_r \preceq \X_r$ can in several settings be reduced to solving a
problem that depends only on $|A_r|$ and $|V|$. We capture this in the
following lemma.

\begin{restatable}{lemma}{IsolatingCuts}
  \labellemma{isolating-cuts} Let $V$ be a set and
  $R \subset V$. Let $\defV$ be a symmetric and \emph{pairwise disjoint}
  crossing lattice over $V$ and let $\defR$ be the symmetric and
  \emph{pairwise disjoint} crossing lattice over $R$ consisting of all
  partitions of $R$; i.e.,
  \begin{math}
    \R = \setof{(S,T) \where S \cup T = R \andcomma S \cap T =
      \emptyset}.
  \end{math}
  Let $f: \V \to \reals$ be a symmetric bisubmodular function.
  Suppose we had access to an oracle that, given $\S \in \R$, returns
  a minimum $\S$-cut $\W \in \V$ and let $\SMtime{n}$ denote its
  running time where $n=|V|$.  Moreover, suppose we have an oracle
  that given any $u \in R$ and $(A_u,B_u) \in \V$ with $u \in A_u$
  outputs an $f$-minimum cut $\Y_u \preceq (A_u,B_u)$ in time
  $\SMItime{|A_u|,n}$.  Let $k = \logup{\sizeof{R}}$.  Then, one can
  compute for each $r \in R$ an $f$-minimium $r$-isolating cut in
  in total time
  \begin{math}
    O(k \SMtime(n) + \max_{0 \le n_1,n_2,\ldots,n_{|R|}: \sum_i n_i =
      n} \sum_{i=1}^{|R|}\SMItime{n_i,n}).
  \end{math}
\end{restatable}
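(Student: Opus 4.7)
The plan is to combine \reflemma{isolating-cut-partition} with the second oracle in two phases. First, I invoke \reflemma{isolating-cut-partition} on $\V$, $\R$ and $f$ to obtain the $k = \lceil \log|R|\rceil$ cuts $\W_1, \dots, \W_k$, and from them the set-pair $\X_r = (A_r, B_r)$ for each $r \in R$. This phase uses exactly $k$ calls to the first oracle, for a running time of $O(k \cdot \SMtime{n})$, and hands us three guarantees to exploit: (1) each $\X_r$ is itself an $(r, R-r)$-cut, (2) there is an $f$-minimum $(r, R-r)$-cut $\Y_r$ with $\Y_r \preceq \X_r$, and (3) the first coordinates are pairwise disjoint, so $\sum_{r \in R} |A_r| \leq |V| = n$.

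Second, for each $r \in R$ I invoke the second oracle on input $(r, \X_r)$; by hypothesis this returns, in time $\SMItime{|A_r|, n}$, an $f$-minimum cut $\widetilde\Y_r \preceq \X_r$ whose first coordinate contains $r$. I claim $\widetilde\Y_r$ is in fact an $f$-minimum $r$-isolating cut. On the one hand, any $\Z \preceq \X_r$ with $r \in Z_1$ is automatically a $(r, R-r)$-cut: by (1) we have $R - r \subseteq B_r$, and the definition of $\preceq$ forces $B_r \subseteq Z_2$, so $\{r\} \subseteq Z_1$ and $R - r \subseteq Z_2$. On the other hand, (2) produces a particular $f$-minimum $r$-isolating cut $\Y_r$ lying below $\X_r$ (and with $r \in Y_{r,1}$ since $\Y_r$ is an $(r,R-r)$-cut), so the oracle's optimum value is at most $f(\Y_r)$, i.e.\ exactly the minimum.

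Finally, to account for the total running time, the first phase contributes $O(k \cdot \SMtime{n})$ and the second phase contributes $\sum_{r \in R} \SMItime{|A_r|, n}$. Setting $n_r = |A_r|$ for each $r \in R$ and padding with zeros if needed, the pairwise disjointness from (3) gives $\sum_r n_r \leq n$, so this sum is at most
\begin{math}
  \max_{n_1,\dots,n_{|R|} \ge 0,\ \sum_i n_i = n} \sum_{i} \SMItime{n_i, n},
\end{math}
which matches the bound in the statement.

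The main obstacle is really conceptual rather than computational: one must argue that restricting the second oracle to the sublattice $\{\Z \in \V : \Z \preceq \X_r\}$ does not accidentally return a cut of smaller $f$-value that fails to separate $r$ from $R - r$. This is exactly where the structural parts (1) and (2) of \reflemma{isolating-cut-partition} are doing the real work: (1) ensures that staying below $\X_r$ automatically preserves $R - r$ on the correct side of the cut, and (2) ensures that we do not lose any globally $f$-optimal $(r, R-r)$-cut by restricting to the sublattice. Once this correctness is nailed down, everything else is bookkeeping with the disjointness bound.
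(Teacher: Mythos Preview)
Your proposal is correct and follows essentially the same approach as the paper: the paper treats this lemma as an immediate consequence of \reflemma{isolating-cut-partition}, sketching in prose that the disjointness of the first coordinates $A_r$ gives $\sum_r |A_r| \le n$ and that the second oracle applied to each $\X_r$ recovers the $f$-minimum $(r,R-r)$-cut. Your write-up spells out the correctness argument (via properties (1) and (2)) and the running-time bookkeeping (via property (3)) in more detail than the paper does, but the underlying idea is identical.
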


A simple random sampling approach combined with isolating cuts, as
shown in \cite{li-panigrahi} for edge cuts in graphs, yields the
following theorem in a much more abstract setting.

\begin{restatable}{theorem}{BisubmodMincut}
  \labeltheorem{bisubmod-mincut}
  Let $V$ be a set and
  $R \subset V$. Let $\defV$ be a symmetric and pairwise disjoint
  crossing lattice over $V$ and let $\defR$ be the symmetric and
  pairwise disjoint crossing lattice over $R$ consisting of all
  disjoint subsets of $R$; i.e.,
  \begin{math}
    \R = \setof{(S,T) \where S, T \subseteq R \andcomma S \cap T = \emptyset}.
  \end{math}
  Let $f: \V \to \reals$ be a symmetric bisubmodular function.
  Suppose we had access to an oracle that, given $\S \in \R$, returns
  a minimum $\S$-cut $\W \in \V$ and let $\SMtime{n}$ denote its
  running time where $n=|V|$.  Moreover, suppose we have an oracle
  that given any $u \in R$ and $(A_u,B_u) \in \V$ with $u \in A_u$
  outputs an $f$-minimum cut $\Y_u \preceq (A_u,B_u)$ in time
  $\SMItime{|A_u|,n}$.  Then one can
  compute the minimum (nontrivial) $\R$-cut with constant probability
  in
  \begin{math}
    \bigO{\SMtime{n} \log^2{\sizeof{R}} + \max_{0 \le
    n_1,n_2,\ldots,n_{|R|}:  \sum_i n_i = n} \log{\sizeof{R}}\sum_{i=1}^{|R|}\SMItime{n_i,n}}
  \end{math}
  time.
\end{restatable}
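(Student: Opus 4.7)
The plan is to lift the Li--Panigrahi sampling-plus-isolating-cuts reduction into the bisubmodular framework developed in the previous lemmas. I fix a $\preceq$-minimal $f$-minimum nontrivial $\R$-cut $\X^{*} = (X^{*}, Y^{*})$ and let $a = \sizeof{X^{*} \cap R}$ and $b = \sizeof{Y^{*} \cap R}$. Since $\V$ and $\R$ are symmetric we may transpose $\X^{*}$ if necessary and assume $1 \le a \le b$. The algorithm runs $\logup{|R|} + 1$ iterations indexed by $j = 0,1,\ldots$; iteration $j$ samples $R_j \subseteq R$ independently with inclusion probability $p_j = 2^{-j}$, discards the iteration if $|R_j| < 2$, and otherwise invokes \reflemma{isolating-cuts} on $R_j$ (with the same cut oracles) to obtain, for each $r \in R_j$, an $f$-minimum $(r, R_j - r)$-cut $\Y_{r,j}$. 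The algorithm returns the cheapest such $\Y_{r,j}$ across all iterations.

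For correctness I focus on the ``lucky'' iteration $j^{\star}$ with $2^{j^{\star}-1} < a \le 2^{j^{\star}}$, so $p_{j^{\star}} \in (1/(2a), 1/a]$. Standard binomial estimates give that the event $\mathcal{E}$ ``$R_{j^{\star}}$ contains exactly one element $r^{\star}$ of $X^{*} \cap R$ and at least one element of $Y^{*} \cap R$'' has probability at least $\tfrac{1}{2e}\!\cdot\!(1 - e^{-1/2})$, using $a p_{j^{\star}}(1-p_{j^{\star}})^{a-1} \ge \tfrac{1}{2e}$ via $(1-1/a)^{a-1} \ge 1/e$, and $1 - (1-p_{j^{\star}})^{b} \ge 1 - e^{-b p_{j^{\star}}} \ge 1 - e^{-1/2}$ since $b \ge a$ (the case $a = 1$ is handled separately: $p_{j^{\star}} = 1$ and $\mathcal{E}$ holds with certainty). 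Under $\mathcal{E}$, $\setof{r^{\star}} \subseteq X^{*}$ and $R_{j^{\star}} - r^{\star} \subseteq Y^{*}$, so $(\setof{r^{\star}}, R_{j^{\star}} - r^{\star}) \preceq \X^{*}$; that is, $\X^{*}$ is an $(r^{\star}, R_{j^{\star}} - r^{\star})$-cut. Consequently the $f$-minimum such cut $\Y_{r^{\star},j^{\star}}$ that the algorithm computes via \reflemma{isolating-cuts} satisfies $\f{\Y_{r^{\star},j^{\star}}} \le \f{\X^{*}}$, and since any $q \in R_{j^{\star}} \cap Y^{*}$ witnesses $(\setof{r^{\star}}, \setof{q}) \preceq \Y_{r^{\star},j^{\star}}$, the cut $\Y_{r^{\star},j^{\star}}$ is a valid nontrivial $\R$-cut. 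Optimality of $\X^{*}$ then forces equality, so with constant probability the algorithm returns an optimal $\R$-cut.

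For the running time, each of the $O(\log|R|)$ iterations invokes \reflemma{isolating-cuts} with the terminal set $R_j \subseteq V$, costing $\bigO{\SMtime{n}\log|R| + \max_{\sum_i n_i = n}\sum_{i} \SMItime{n_i,n}}$; summing over the $O(\log|R|)$ iterations gives the claimed bound. The main obstacle in this plan is not conceptual but lies in the probability calculation: I must carefully justify the ``exactly one from the small side and at least one from the large side'' bound independently of the sizes $a,b$, handling the boundary case $a = 1$ and exploiting $a \le b$ (obtained via the symmetry/transpose of $\V$) to control the second factor. Everything else is a direct composition of \reflemma{isolating-cut-partition} and \reflemma{isolating-cuts}.
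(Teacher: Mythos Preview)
Your proposal is correct and follows essentially the same approach as the paper: guess the size of the smaller side of the optimal cut by enumerating powers of two, sample $R$ at the corresponding rate, and invoke \reflemma{isolating-cuts} on each sampled subset, returning the overall cheapest isolating cut. The paper's proof is terser (it simply asserts ``with constant probability, $\sizeof{R'\cap S}=1$ and $R'\cap T\neq\emptyset$'' without the explicit binomial estimates you provide), but the argument is identical.
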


We derive the following corollary for symmetric submodular set
functions. 

\begin{restatable}{corollary}{SymsubmodMincut}
  \labelcorollary{symsubmod-mincut}
  Let $\deff$ be a symmetric submodular function and $R \subseteq V$
  and let $n = |V|$. Suppose there is an algorithm for submodular
  function minimization in the value oracle model in time
  $\SFMtime{n} = g_1(n) \text{EO} + g_2(n)$ where $\text{EO}$ is the time for the
  evaluation oracle. Assuming that $g_1(n) = \Omega(n)$ and
  $g_2(n) = \Omega(n)$, a minimum $f$-cut that separates some two
  terminals in $R$ can be found in
  $\bigO{\SFMtime{n} \log^2 (n)}$ time.
\end{restatable}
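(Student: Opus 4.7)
The plan is to realize symmetric submodular set functions as a special case of the symmetric bisubmodular framework and then invoke \reftheorem{bisubmod-mincut} directly. I would identify each set $A \subseteq V$ with the set-pair $(A, V \setminus A)$ and take
\begin{math}
  \V = \setof{(A, V \setminus A) \where A \subseteq V}.
\end{math}
Closure under $\lor$ and $\land$ is immediate from de Morgan's law, so $\V$ is a crossing lattice; it is pairwise disjoint by construction and symmetric because the transpose sends $(A, V \setminus A)$ to $(V \setminus A, A) \in \V$. Setting $\tilde{f}(A, V \setminus A) = f(A)$, the submodular inequality for $f$ translates directly into the bisubmodular inequality for $\tilde{f}$ on $\V$, and the set-based symmetry $f(A) = f(V \setminus A)$ becomes $\tilde{f}(\X) = \tilde{f}(\X^T)$. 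If $\R$ is the partition lattice over $R$, then trivially $\V$ separates $\R$.

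Next I would implement the two oracles required by \reftheorem{bisubmod-mincut} using ordinary submodular function minimization. For the separating-cut oracle, given a partition $(S,T) \in \R$ one needs to minimize $f(A)$ subject to $S \subseteq A \subseteq V \setminus T$; this is SFM on a universe of size at most $n$ after conditioning on the forced/forbidden elements, taking $\SFMtime{n}$ time. For the local oracle, given $r \in R$ and $(A_r, V \setminus A_r) \in \V$ with $r \in A_r$, one minimizes $f(Y)$ subject to $\{r\} \subseteq Y \subseteq A_r$; this is SFM on a universe of size $n_r = \sizeof{A_r}$, costing $\SMItime{n_r,n} \le \SFMtime{n_r}$.

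Plugging these bounds into \reftheorem{bisubmod-mincut} yields a running time of order
\begin{math}
  \SFMtime{n}\,\log^2 \sizeof{R} \;+\; \log \sizeof{R} \cdot \max_{\sum_i n_i \le n} \sum_i \SFMtime{n_i}.
\end{math}
To collapse the maximum to $O(\SFMtime{n})$, I would argue that $g_1$ and $g_2$ can be assumed superadditive. The hypothesis $g_1, g_2 = \Omega(n)$ together with the fact that every known SFM algorithm has a running time that is a polynomial of degree at least $1$ lets us replace each $g_j$ by its superadditive upper envelope without changing the asymptotic value of $\SFMtime{n}$. Superadditivity then gives $\sum_i g_j(n_i) \le g_j(\sum_i n_i) \le g_j(n)$, so $\sum_i \SFMtime{n_i} \le \SFMtime{n}$, and the overall bound collapses to $O(\SFMtime{n} \log^2 n)$.

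The main subtlety is the last step: cleanly justifying $\sum_i \SFMtime{n_i} = O(\SFMtime{n})$ from only the $\Omega(n)$ hypothesis. For any concrete SFM algorithm in the literature this is immediate from convexity of the running time in the universe size, but stated abstractly it requires an envelope/padding argument to reduce to the superadditive case. Everything else in the proof is a routine translation of the set-based setting into the bisubmodular framework already developed.
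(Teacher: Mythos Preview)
Your proposal is correct and follows essentially the same route as the paper: identify sets with bipartitions to get a symmetric pairwise-disjoint crossing lattice, realize the two oracles of \reftheorem{bisubmod-mincut} as constrained SFM instances, and conclude. The only cosmetic difference is that the paper phrases the local oracle as ``contract $V\setminus A_u$ into a single element'' (giving SFM on $|A_u|+1$ elements) rather than your restriction to $Y \subseteq A_r$, but these are equivalent reductions. You are actually more explicit than the paper about the last step: the paper simply asserts that plugging $\SMItime{p,n}=O(\SFMtime{p+1})$ into \reftheorem{bisubmod-mincut} yields $O(\SFMtime{n}\log^2 n)$, leaving the $\sum_i \SFMtime{n_i)=O(\SFMtime{n})$ bound implicit in the $g_1,g_2=\Omega(n)$ hypothesis, whereas you correctly flag that this needs a superadditivity/envelope argument.
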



\section{Element connectivity}
\labelsection{elem-conn}

\setupG Let $T \subseteq V$ be a set of terminals and let
$N= V\setminus T$ be the non-terminal set.  For any two distinct
terminals $u,v \in T$, the element connectivity between $u$ and $v$ is
defined as the maximum number of paths from $u$ to $v$ that are
edge-disjoint and vertex-disjoint in the non-terminal vertices
$V \setminus T$. That is, only terminal vertices may be reused across
paths. This notion can be easily generalized to the weighted setting
where edges and non-terminals have non-negative weights/capacities.
For any two terminals $s, t \in T$, we denote by $\elemconn(s,t)$ the
element connectivity between them.  One can compute $\elemconn(s,t)$
via a simple reduction to $s$-$t$ maximum flow in a directed graph
which takes $\ectime{m,n}$ time.
In this section we
are concerned with the problem of computing the global element
connectivity which is defined as
$\elemconn = \min_{s,t \in T, s \neq t} \elemconn(s,t)$. In fact we
are also interested in computing the more general problem of computing
$\elemconn(R) = \min_{s,t \in R, s \neq t} \elemconn(s,t)$ where
$R \subseteq T$; note that $\elemconn = \elemconn(T)$.
Here we apply our general
framework that obtains a randomized algorithm with running time
$O(\ectime{m,n} \log^2 |R|)$. In addition to the global minimum cut for
$R$, as we will see in the next section, finding all the isolating
cuts can be used with other ideas for vertex connectivity.

Let $G=(T \uplus N, E)$ be an instance of a weighted element
connectivity problem. Let $w: N \cup E \rightarrow \nnreals$ assign
weights to the elements.  Let $R \subseteq T$ be a subset of
terminals with $|R| \ge 2$. We reduce the  problem of computing
$\elemconn(R)$ to \reftheorem{bisubmod-mincut} as follows.

For ease of notation, let
$\barV = \elements$ denote the elements. Consider the family of pairs
of sets $\V \subseteq 2^{\barV} \times 2^{\barV}$ defined as the set
of pairs $(X,Y) \in \barV \times \barV$ with the following properties:
(i) $X$ and $Y$ are disjoint, (ii) no edge in $X$ is adjacent to a vertex in $Y$, and no edge in
$Y$ is adjacent to a vertex in $X$, and (iii) $T \subseteq X \cup Y$.
$\V$ describes the disjoint sets that are element-wise disconnected
and cover $T$. Clearly $\barV$ is symmetric and pairwise
disjoint. It is also straightforward to verify that $\barV$ is an
uncrossing lattice.

We define a function $\f: \V \to \reals$ by $\f{X,Y} = \sum_{x \in \barV - (X \cup Y)} \weight{x}$
$\f{\X}$ gives the total weight of elements that are not a member of
either of the two sets in $\X$. This function $\f$ is submodular
and in fact it is modular. One can easily verify that
$\f{X_1,Y_1} + \f{X_2,Y_2} =  \f{X_1 \cup X_2, Y_1 \cap Y_2} + \f{X_1 \cap X_2, Y_1 \cup Y_2}$.
Thus $\V$ is a symmetric and pairwise disjoint crossing lattice, and
$\f: \V \to \reals$ is a symmetric submodular function over $\V$.

\paragraph*{Isolating (weighted) element cuts and global
  connectivity}
Let $R \subseteq T$ and let $\R$ be the crossing lattice consisting
of all pairwise disjoint subsets of $R$.  Given a partition of $R$
into two sets $(A,B)$, an $\f$-minimum $(A,B)$-cut, which corresponds
to the minimum element cut separating $A$ from $B$, can
be computed via directed $(s,t)$-maxflow, in $\ectime{m,n}$-time.
By \reflemma{isolating-cut-partition}, we can
compute disjoint sets of elements
\begin{math}
  \setof{\barU_r \subset \barV \where r \in R}
\end{math}
where for each $r \in R$, $\barU_r$ contains the $r$-side component of a
miminimum $(r,R-r)$-element cut. Moreover, because the $\barU_r$'s
are obtained as intersections of sides of element cuts, for any
distinct $r,q \in R$, there is no edge from $\barU_r$ incident to a
vertex from $\barU_q$ and (symmetrically) vice-versa.

For each $r$, let $\barU_r' \subset \barV$ be the set of vertices
outside $\barU_r$ and incident to an edge in $\barU_r$, and the edges
outside $\barU_r$ incident to vertices in $\barU_r$. Informally
speaking, $\barU_r'$ is the ``boundary'' of $\barU_r$ in an element
connectivity sense. Let
$n_r = \sizeof{V \cap \parof{\barU_r \cup \barU_r'}}$ be the number of
vertices in $\barU_r \cup \barU_r'$ and let
$m_r = \sizeof{E \cap \parof{\barU_r \cup \barU_r'}}$ be the number of
edges in $\barU_r \cup \barU_r'$.  Note that $\sum_r m_r \leq 2 m$
since each edge can either appear in $\barU_r$ for a unique choice of
$r$ or in $\barU_r'$ for two choices of $r$.

To find an isolating cut for $r$ we need to find the cheapest element
cut contained in $\barU_r$. We can do this via a flow computation as
described below.  For each $r$, consider the graph $G_r$ where we
first take the graph $\barU_r \cup \barU_r'$ and introduce an
auxiliary vertex $\bar{t}$. We connect $\bar{t}$ to all vertices in
$\barU_r'$ with infinite capacity.
For every edge $e \in \barU_r'$ with exactly one endpoint in
$\barU_r$, we replace the opposite endpoint with $\bar{t}$. Observe
that the minimum $(r,\bar{t})$-element cut in $G_r$ coincides with the
minimum $(r,R-r)$-element cut in $G$. $G_r$ has $\bigO{m_r}$ edges and
$\bigO{n_r}$ vertices, and the element $(r,\bar{t})$-cut problem can
be solved in $\ectime{m_r}{n_r}$ time. Summing over all $r \in R$
gives the following theorem.

\begin{theorem}
  \labeltheorem{element-isolating-cuts} %
  Let $G=(T \cup N, E)$ be an instance of element connectivity with
  $n$ nodes and $m$ edges and let $R \subseteq T$. Let
  $\weight: \elements \to \pextreals$ assign positive weight to each
  vertex and edge. Let $k = \sizeof{T}$. Then one can compute, for all
  $r \in R$, the minimum weight element $(r,R-r)$-cut in
  \begin{math}
    \bigO{\ectime{m}{n} \log k + \max[{m_1,\dots,m_k}]{\sum_{i=1}^k
        \ectime{m_i}{n} \where m_1 + \cdots + m_k \leq 2 m}},
  \end{math}
  where $\ectime{m}{n}$ is the running time for element $(S,T)$-cut
  with $m$ edges and $n$ vertices.
\end{theorem}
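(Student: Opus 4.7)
The plan is to obtain this theorem as a direct instantiation of the abstract isolating cut framework, using the symmetric bisubmodular structure $(\V,f)$ on element set-pairs that was established immediately above the statement. Concretely, I will apply \reflemma{isolating-cuts} with $n = |V \cup E|$ and the two oracles specialized as follows. The partition oracle (of cost $\SMtime$) is the standard directed $(A,B)$-element maxflow, which for any bipartition $(A,B)$ of $R$ runs in $\ectime{m}{n}$ time and returns an $f$-minimum $(A,B)$-cut in $\V$. By \reflemma{isolating-cut-partition}, $O(\log |R|) = O(\log k)$ such calls produce the pairwise-disjoint element sets $\barU_r$ together with their complements $\barU_r'$, and we are guaranteed that for each $r \in R$ there is a true $f$-minimum $(r,R-r)$-cut $\Y_r \preceq (\barU_r, \barU_r')$ contained in $\barU_r$. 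This accounts for the $\bigO{\ectime{m}{n} \log k}$ term.

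Next I would realize the instance-oracle $\SMItime$ via the reduction graph $G_r$ described in the paragraph preceding the theorem. Given $r$ and the pair $(\barU_r, \barU_r')$, form $G_r$ by taking the subgraph induced on $\barU_r \cup \barU_r'$, adding an auxiliary terminal $\bar t$, making every vertex in $\barU_r'$ adjacent to $\bar t$ with infinite capacity, and replacing, for each edge $e \in \barU_r'$ with one endpoint in $\barU_r$, that opposite endpoint by $\bar t$. The key correctness claim to verify is that a minimum-weight element $(r,\bar t)$-cut in $G_r$ is exactly an $f$-minimum cut $\Y_r \preceq (\barU_r, \barU_r')$: the infinite-capacity attachments force any finite cut to leave $\bar t$ (together with all of $\barU_r'$ and everything outside) on the far side of $r$, so the finite cuts in $G_r$ are in bijection with those set-pairs $\X \in \V$ satisfying $\X \preceq (\barU_r, \barU_r')$ and $r$ on the near side, with the same $f$-value. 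Computing this minimum $(r,\bar t)$-element cut takes $\ectime{m_r}{n_r}$ time by a single directed maxflow.

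Finally I would verify the accounting $\sum_r m_r \le 2m$. Each edge $e \in E$ contributes to $m_r$ only when $e \in \barU_r \cup \barU_r'$. Since the family $\{\barU_r\}_{r \in R}$ is pairwise disjoint in $\V$ (part~(3) of \reflemma{isolating-cut-partition}), $e$ lies in at most one $\barU_r$. It can lie in $\barU_q'$ only if it is incident to some vertex of $\barU_q$, and since each edge has at most two endpoints, $e$ can belong to $\barU_q'$ for at most two distinct $q$. Thus each edge is counted at most twice across all $m_r$, giving $\sum_r m_r \le 2m$. Summing the per-instance costs and taking the worst case over admissible splits $m_1+\dots+m_k \le 2m$ yields the second term of the bound.

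The main technical obstacle is the correctness of the $G_r$ reduction: one must check that rerouting boundary edges to $\bar t$ and attaching $\bar t$ with infinite capacity yields a graph whose minimum $(r,\bar t)$-element cut matches, in both value and support, the $f$-minimum element cut $\Y_r$ contained in $(\barU_r, \barU_r')$. Everything else --- the $O(\log k)$ upper-level calls, the edge-counting argument, and the final summation --- is a routine bookkeeping consequence of \reflemma{isolating-cut-partition} and \reflemma{isolating-cuts}, so once the reduction is justified the running time bound in the theorem statement follows directly.
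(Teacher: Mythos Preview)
Your proposal is correct and follows essentially the same approach as the paper: apply \reflemma{isolating-cut-partition} via $O(\log k)$ element-cut maxflows to obtain the disjoint sets $\barU_r$, then solve each $(r,R-r)$ problem in the auxiliary graph $G_r$ and sum. Two small points to clean up: (i) you call $\barU_r'$ the ``complement'' (second component of the lattice pair $\X_r$), but the construction of $G_r$ and your edge-counting argument both require the paper's definition of $\barU_r'$ as the \emph{boundary} of $\barU_r$ (vertices and edges just outside $\barU_r$ and incident to it) --- with the second-component reading, $G_r$ would not have $O(m_r)$ edges; (ii) your count as written gives at most one $\barU_r$ plus at most two $\barU_q'$, i.e.\ three; to get $\sum_r m_r \le 2m$ you also need that an edge in $\barU_r$ cannot be incident to any vertex of $\barU_q$ for $q\neq r$ (which holds because the $\barU_r$ arise as intersections of element cuts), making the two cases mutually exclusive.
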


With \reftheorem{element-isolating-cuts} in place we can reduce
the global mincut problem for $R$ to the isolating cut computation
via sampling \cite{li-panigrahi}, and captured in the abstract setting
\reftheorem{bisubmod-mincut}, to obtain the following theorem to
compute $\elemconn(R)$.

\begin{theorem}
  \labeltheorem{global-elem-mincut}
  Let $G=(T \cup N, E)$ be an instance of element connectivity with
  $n$ nodes and $m$ edges and let $R \subseteq T$. Let
  $\weight: \elements \to \pextreals$ assign positive weight to each
  vertex and edge. Let $k = \sizeof{R}$. Then one can compute $\elemconn(R)$
  with constant probability in time
  \begin{math}
    \bigO{                      %
      \parof{               %
        \ectime{m}{n} \log{k} %
        +                                 %
        \max[{m_1,\dots,m_n}]{\sum_{i=1}^n \ectime{m_i}{n} \where m_1 + \cdots +
          m_n \leq 2m}                 %
      }\log n          %
    }.
  \end{math}
\end{theorem}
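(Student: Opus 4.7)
The plan is to apply \reftheorem{bisubmod-mincut} to the element-connectivity instantiation developed in the first half of this section. The groundwork is already done: $(\V, f)$ with $f(X,Y) = \sum_{x \in \barV - (X \cup Y)} w(x)$ is a symmetric bisubmodular function on a symmetric pairwise-disjoint crossing lattice over $\barV = \elements$, and $\R$ (the crossing lattice of partitions of $R$) is separated by $\V$. Since $f$ counts the weight of elements outside $X \cup Y$, a minimum nontrivial $\R$-cut under $f$ coincides exactly with a minimum element cut separating some pair of terminals in $R$, namely $\elemconn(R)$.

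The two oracles required by \reftheorem{bisubmod-mincut} are instantiated as follows. The outer $\SMtime$ oracle, which given a partition $(S,T) \in \R$ returns an $f$-minimum $(S,T)$-cut, is implemented by the standard reduction from minimum element $(S,T)$-cut to directed $(s,t)$-maxflow, running in $\ectime{m}{n}$ time. The inner $\SMItime$ oracle, which given $u \in R$ and $(A_u, B_u) \in \V$ with $u \in A_u$ returns an $f$-minimum cut $\preceq (A_u, B_u)$, is implemented via the local-graph construction $G_u$ that appears just before \reftheorem{element-isolating-cuts}: restrict to $\barU_u \cup \barU_u'$, collapse the boundary into a super-sink $\bar{t}$ via infinite-capacity edges to $\barU_u'$ (rerouting cut-crossing edges accordingly), and solve a single element $(u, \bar{t})$-cut on the resulting graph in $\ectime{m_u}{n_u}$ time. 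The charging $\sum_u m_u \le 2m$ already recorded in that paragraph matches the constraint on the max-sum in the conclusion of the present theorem.

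Substituting these oracle implementations into \reftheorem{bisubmod-mincut} gives a running time of $\bigO{\ectime{m}{n} \log^2 k + \log k \cdot \max \sum_i \ectime{m_i}{n}}$ subject to $\sum_i m_i \le 2m$ and $k = \sizeof{R}$. Absorbing $\log k \le \log n$ into both terms yields the stated bound, and the constant success probability is inherited directly from \reftheorem{bisubmod-mincut}. The main ``obstacle'' here is really just the careful matching of oracle interfaces and the disjointness hypotheses of the abstract framework to the element-connectivity setting; the underlying Li--Panigrahi-style geometric sampling argument (pick the sampling rate $2^{-i}$ that isolates exactly one terminal on the smaller $R$-side of some optimal cut while retaining at least one terminal on the larger side) is fully absorbed into \reftheorem{bisubmod-mincut} and requires no further reproduction here.
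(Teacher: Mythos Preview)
Your proposal is correct and follows essentially the same route as the paper: the paper states \reftheorem{global-elem-mincut} immediately after \reftheorem{element-isolating-cuts} with no separate proof, remarking only that the sampling reduction of \cite{li-panigrahi}, as captured abstractly in \reftheorem{bisubmod-mincut}, combined with the isolating-cut subroutine of \reftheorem{element-isolating-cuts}, yields the bound. Your instantiation of the two oracles and the bookkeeping $\log k \le \log n$ are exactly what is intended; the one minor slip is that $\R$ should be the lattice of all \emph{disjoint} subsets of $R$ (as in the hypothesis of \reftheorem{bisubmod-mincut}) rather than only partitions, but this does not affect the argument.
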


\subsection{Refined running times for element connectivity}

\begin{figure}
  \centering
  \includegraphics{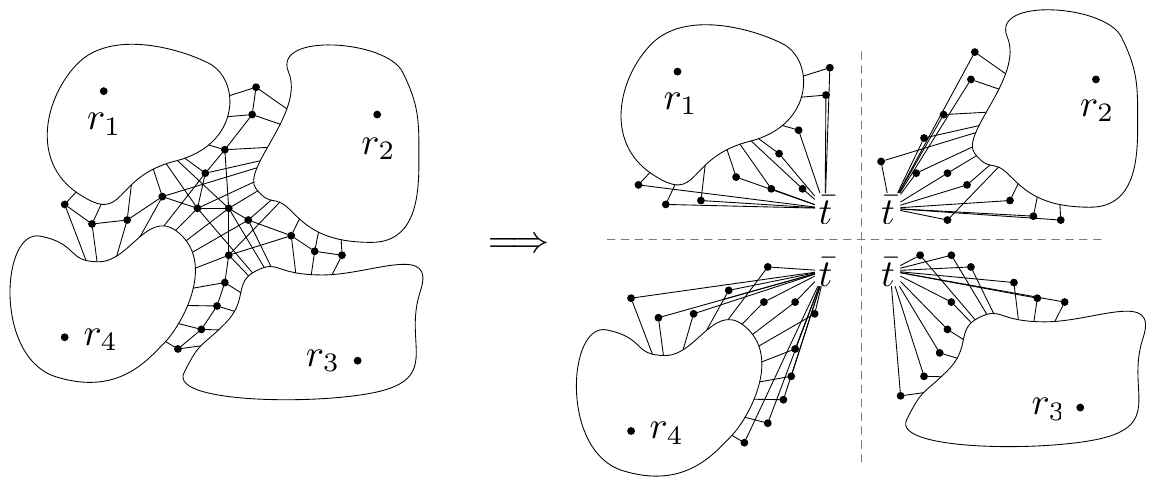}
  \caption{Applying the uncrossing framework to vertex connectivity
    reduces vertex isolating cuts to edge disjoint cut problems. Note
    that the separating vertices may appear in multiple subproblems,
    which is an obstruction towards a direct $\apxO{\ectime{m,n}}$
    overall running time for isolating vertex
    cuts.\labelfigure{isolating-vertex-cuts}}
\end{figure}

Until recently, the leading running times for $\ectime{m}{n}$ (e.g.,
$\ectime{m}{n} = \apxO{m \sqrt{n}}$ \cite{lee-sidford}) plug directly
into \reftheorem{global-elem-mincut} to give running times of the form
$\apxO{\ectime{m}{n}}$ to compute the global element connectivity.  A
recent breakthrough work by \citet{brand++} has obtained a running
time of
\begin{math}
  \ectime{m}{n} = \apxO{m + n^{1.5}}
\end{math}
for polynomially bounded and integral capacities.  However, \reftheorem{global-elem-mincut}
does not directly benefit from this running time because the vertices
are not partitioned across subproblems. See
\reffigure{isolating-vertex-cuts} for an illustration in the concrete
setting of vertex cuts.  Consequently, plugging
$\ectime{m}{n} = \apxO{m + n^{1.5}}$ directly into
\reftheorem{global-elem-mincut} generates a running time of
$\apxO{m + n^{1.5} k}$, where $k = \sizeof{T}$. The additional factor
of $k$ (to a certain extent) defeats the purpose of the isolating cuts
framework.

In this section, we develop more advanced algorithms that take the
isolating cut framework as a starting point, and incorporates
additional ideas to take advantage of
$\ectime{m}{n} = \apxO{m + n^{1.5}}$. In addition to obtaining faster
algorithms, these results point to a general algorithm design space
where additional ideas can be introduced to obtain even better running
times.  The first algorithm we present leverages the fact that the
edges are partitioned across subproblems, even if the vertices are
not.

\begin{restatable}{theorem}{IsoECTwo}
  Let $G = (T \cup N, E)$ be an instance of
  element connectivity with $n$ nodes and $m$ edges and let
  $R \subseteq T$. Let $w: V \cup E \to [1..U]$ assign integer
  (or infinite) weights to each vertex and edge. For $R \subseteq T$,
  the minimum $R$-isolating vertex cut can be
  computed in
  \begin{align*}
    \apxO{m^{1 + o(1)} n^{3/8} U^{1/4} + n^{1.5}}
  \end{align*}
  time.
\end{restatable}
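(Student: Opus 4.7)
The plan is to invoke the isolating-cut framework of this section and then dispatch each of the resulting subproblems to a carefully chosen max-flow subroutine, so that edge-disjointness across subproblems (but not vertex-disjointness) suffices to control the total running time.

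First I would apply \reflemma{isolating-cut-partition} with the symmetric bisubmodular lattice for element connectivity that we built earlier in this section. This costs $O(\log k)$ top-level $(S,T)$-cut oracle calls on the full instance, with $k = \sizeof{R}$. Each such call reduces to a single $s$-$t$ max-flow on an auxiliary graph with $O(m)$ edges and $O(n)$ vertices; solving these calls with the $\apxO{m + n^{1.5}}$ max-flow algorithm of \cite{brand++} yields the additive $n^{1.5}$ term in the claimed bound.

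Second, the lemma hands me, for each $r \in R$, the local piece $\barU_r$ together with its boundary $\barU_r'$. As in the construction preceding \reftheorem{element-isolating-cuts}, I would collapse the boundary into a single sink $\bar t$ to obtain a local element-flow instance $G_r$ with $m_r$ edges and $n_r$ vertices, whose minimum $(r,\bar t)$-element cut agrees with the target $f$-minimum $(r, R-r)$-cut. Crucially, each edge is charged to at most two pieces, so $\sum_r m_r \le 2m$, whereas a boundary non-terminal can appear in many pieces simultaneously and $\sum_r n_r$ can be as large as $\Theta(n k)$. Third, I would solve each $G_r$ with a max-flow subroutine whose running time scales as $\apxO{m_r^{1+o(1)} n_r^{3/8} U^{1/4}}$ rather than the $\apxO{m_r + n_r^{1.5}}$ bound of \cite{brand++}. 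Using $n_r \le n$ together with the elementary superadditivity bound $\sum_r m_r^{1+o(1)} \le \bigl(\sum_r m_r\bigr)^{1+o(1)} \le (2m)^{1+o(1)} = O(m^{1+o(1)})$, the subproblem cost telescopes to $\apxO{m^{1+o(1)} n^{3/8} U^{1/4}}$, which combined with the first step gives the claimed running time.

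The main obstacle is precisely the edge-versus-vertex mismatch emphasized by \reffigure{isolating-vertex-cuts}: plugging the $\apxO{m + n^{1.5}}$ bound of \cite{brand++} directly into each subproblem would cost $k \cdot n^{1.5}$, exactly the factor of $k$ that the isolating-cut framework is supposed to amortize away. The workaround is to trade the $n^{1.5}$ vertex dependence for a much milder $n^{3/8}$ dependence (at the cost of a $U^{1/4}$ factor and an $m^{o(1)}$ overhead in the edge term), since edges \emph{are} edge-partitioned across subproblems, so the $m^{1+o(1)}$ edge dependence telescopes cleanly while the remaining $n^{3/8}$ is small enough to absorb the shared-vertex overhead.
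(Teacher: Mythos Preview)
Your high-level plan is right---apply the isolating-cut decomposition, pay $\apxO{m+n^{1.5}}$ for the $O(\log k)$ top-level cuts via \cite{brand++}, then exploit that $\sum_r m_r \le 2m$ while only $n_r \le n$ is available---but the proof has a gap at its central step. You invoke ``a max-flow subroutine whose running time scales as $\apxO{m_r^{1+o(1)} n_r^{3/8} U^{1/4}}$'' as if it were an off-the-shelf black box. No such algorithm is cited anywhere; that bound is precisely what the theorem is producing, not an input to it. You have correctly identified the \emph{shape} of what is needed (mild $n$-dependence, near-linear $m$-dependence, polynomial $U$-dependence) but skipped the derivation.

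The paper obtains that bound by explicitly balancing two known algorithms across the subproblems: the \cite{liu-sidford-20-b} bound $\ectime{m_r}{n_r}=\apxO{m_r^{4/3+o(1)}U^{1/3}}$ (purely edge-based, so it sums well, but superlinear in $m_r$) and the \cite{brand++} bound $\ectime{m_r}{n_r}=\apxO{m_r+n^{1.5}}$ (linear in $m_r$, but every call pays $n^{1.5}$). A threshold parameter $\alpha$ routes small-$m_r$ subproblems to the first algorithm and large-$m_r$ subproblems (of which there are at most $k/\alpha$) to the second; a convexity argument bounds the first group, and balancing at $\alpha \approx kn^{9/8}/(mU^{1/4})$ gives $m^{1+o(1)}n^{3/8}U^{1/4}$.

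Your route is salvageable and in fact slightly cleaner once completed: one can verify directly that on a single instance $\min\bigl(m_r^{4/3+o(1)}U^{1/3},\, m_r+n^{1.5}\bigr)=\apxO{m_r^{1+o(1)}n^{3/8}U^{1/4}}$ by case-splitting on whether $m_r$ is below or above $n^{9/8}U^{-1/4}$, after which your superadditivity step $\sum_r m_r^{1+o(1)}\le(2m)^{1+o(1)}$ finishes. But as written, you have asserted the conclusion of this per-instance calculation rather than carried it out, and without naming the two underlying algorithms a reader cannot fill it in.
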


\begin{proof}
  Let $k = \sizeof{R}$. We apply
  \reftheorem{element-isolating-cuts} and give concrete upper
  bounds using known upper bounds for $\ectime{m}{n}$.  Let
  $m_1,\dots,m_k \in \naturalnumbers$ with
  $m_1 + \cdots + m_k \leq m$. Recall that
  \begin{math}
    \ectime{m}{n} = \apxO{m^{4/3 + o(1)} U^{1/3}}
  \end{math}
  by \cite{liu-sidford-20-b} and
  \begin{math}
    \ectime{m}{n} = \apxO{m + n^{3/2}}
  \end{math}
  by \cite{brand++}. Let $\alpha > 0$ be a parameter to be
  determined. We apply the first running time when
  $m_i < m \alpha / k$ and the second running time then
  $m_i \geq m \alpha / k$. At most $k / \alpha$
  indices $i$ have $m_i \geq m \alpha / k$. Thus,
  \begin{align*}
    \sum_{i=1}^k \ectime{m_i}{n}    %
    &=                           %
      \sum_{i \where m_i \geq m \alpha / k} \ectime{m_i}{n}
      +                           %
      \sum_{i \where m_i < m \alpha / k} \ectime{m_i}{n}
    \\
    &\leq                       %
      \apxO{m + \frac{k}{\alpha} n^{1.5} + \sum_{i \where m_i < m \alpha /
      k} m_i^{4/3 + o(1)} U^{1/3}}
    \\
    &\tago{\leq}
      \apxO{m + \frac{k}{\alpha} \parof{n^{1.5} + \prac{\alpha m}{k}^{4/3+
      o(1)} U^{1/3}}}.
  \end{align*}
  Here \tagr is by convexity: the quantity
  $\sum_{i \where m_i < m \alpha / k} m_i^{4/3 + o(1)}$ subject to the
  condition that $\sum_i m_i \le m$ is at most
  $(k/\alpha) (\alpha m/k)^{4/3 + o(1)}$.  Balancing terms at
  $\alpha = k n^{9/8} / m$, this gives an upper bound of
  \begin{math}
    \apxO{m^{1 + o(1)} n^{3/8} U^{1/4}},
  \end{math}
  hence the claimed running time.
\end{proof}

We point out that other running time tradeoffs between $m$ and $U$ can
be obtained by instead applying the flow alogrithms from
\cite{liu-sidford-20-a, madry-16}.

The next theorem, which is particularly good for dense graphs,
leverages the fact that while the vertices are not necessarily
partitioned across subproblems, at least the ``inner'' vertex sets
$\barU_r \cap V$ are disjoint and all of the repeating ``boundary'' vertices
are guaranteed to be outside the $r$-component of each
$(r,R-r)$-minimum cut. The following algorithm balances a tradeoff
between the recent algorithm with \cite{brand+} with blocking flows
\cite{goldberg-tarjan-90}.  In the application of blocking flows, we
argue that with an appropriate construction of the auxiliary graph in
the components given by the decomposition by isolating cuts, the
maximum length of any augmenting paths is proportional to the number
of inner vertices (rather than the total number of vertices) for that
component.

\begin{restatable}{theorem}{IsoECThree}
  \labeltheorem{iso-ec-3} Let $G = (T \cup N, E)$ be an instance of
  element connectivity with $n$ nodes and $m$ edges and let
  $R \subseteq T$. Let $w: V \cup E \to [0,U]$ assign positive
  (or infinite) weights to each vertex and edge. For $R \subseteq T$,
  the minimum $R$-isolating cut can be computed in
  \begin{align*}
    \apxO{m^{1/2} n^{5/4}}
  \end{align*}
  randomized time, where $\apxO{\cdots}$ hides
  $\poly{\log{n}, \log{U}}$-factors.
\end{restatable}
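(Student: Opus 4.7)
The plan is to combine the isolating cut decomposition from \reftheorem{element-isolating-cuts} with two different max-flow algorithms in the subproblems, chosen according to a size threshold. First I would apply \reflemma{isolating-cut-partition} together with the reduction described earlier in \refsection{elem-conn} to obtain, for each $r \in R$, an auxiliary graph $G_r$ built from $\barU_r \cup \barU_r'$ with an extra sink $\bar t$ joined to every vertex of $\barU_r'$ by an infinite-capacity edge, so that the minimum $(r,\bar t)$-element cut in $G_r$ realizes the minimum $(r, R-r)$-element cut in $G$. Let $n_r^{\mathrm{in}} = \sizeof{\barU_r \cap V}$ count the \emph{inner} vertices and let $m_r'$ denote the number of edges in $G_r$. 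Disjointness of the inner sets gives $\sum_r n_r^{\mathrm{in}} \leq n$, while an edge-charging argument (each edge of $G$ contributes to at most two $\barU_r'$'s) gives $\sum_r m_r' = O(m)$.

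The central structural observation is the following diameter bound: any $r$-to-$\bar t$ path in $G_r$ must walk through inner vertices and edges in $\barU_r$ until it first enters $\barU_r'$, from which it jumps to $\bar t$ in a single edge; hence every such path uses $O(n_r^{\mathrm{in}})$ edges. Under the usual vertex-splitting reduction from element to edge cuts this bound is preserved up to a factor of two. Consequently, the level graph in Dinic's method has depth $O(n_r^{\mathrm{in}})$ throughout the execution, and implementing the blocking-flow phases with dynamic trees as in \cite{goldberg-tarjan-90} gives an overall max-flow time of $\apxO{m_r' \cdot n_r^{\mathrm{in}}}$ in $G_r$.

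With this bound in hand I would set a threshold $\tau = n^{5/4}/m^{1/2}$ and treat subproblems by size. For $n_r^{\mathrm{in}} \leq \tau$, use blocking flows at cost $\apxO{m_r' n_r^{\mathrm{in}}} \leq \apxO{m_r' \tau}$; summed over $r$ this is $\apxO{\tau m} = \apxO{m^{1/2} n^{5/4}}$. For $n_r^{\mathrm{in}} > \tau$, use the \cite{brand++} algorithm at cost $\apxO{m_r' + n^{1.5}}$; since $\sum_r n_r^{\mathrm{in}} \leq n$ there are at most $n/\tau$ such subproblems, and the total is $\apxO{m + (n/\tau) n^{1.5}} = \apxO{m + n^{2.5}/\tau} = \apxO{m^{1/2} n^{5/4}}$. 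The $O(\log \sizeof{R})$ invocations of the full-graph $(S,T)$-cut oracle supplied by \reflemma{isolating-cut-partition}, each costing $\apxO{m + n^{1.5}}$ via \cite{brand++}, are absorbed by the same bound.

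The most delicate step will be justifying the diameter bound rigorously: it must hold not only in the initial $G_r$ but in every residual graph arising during Dinic's algorithm, and the presence of infinite-capacity edges from every boundary vertex to $\bar t$ must not be allowed to create shorter augmenting paths via unintended detours that leave and re-enter $\barU_r$. The edge-disjointness between distinct $\barU_r$'s, together with the structure of the boundary-to-$\bar t$ edges as leaves hanging off a single sink, should make this bookkeeping tractable; once it is in place, the tradeoff calculation above finishes the proof.
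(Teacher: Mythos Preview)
Your overall strategy matches the paper's exactly: the same isolating-cut decomposition, the same pair of subroutines (blocking flows versus the $\apxO{m+n^{1.5}}$ algorithm of \cite{brand++}), the same threshold on the number of inner vertices, and the same balancing calculation leading to $\apxO{m^{1/2}n^{5/4}}$.

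The one substantive gap is precisely the point you flag as ``most delicate'': the bound of $O(n_r^{\mathrm{in}})$ on augmenting-path length in \emph{residual} graphs. Your stated reason---a path walks through inner vertices until it first enters $\barU_r'$ and then jumps to $\bar t$---is valid only in the initial graph. In a residual graph, flow may have saturated some boundary-to-$\bar t$ arc, and reverse arcs let a path re-enter the inner region from a boundary vertex, so the ``first boundary hit ends the path'' picture breaks down. The paper closes this gap with a different structural argument in the split-graph reduction: it contracts each infinite-capacity arc $(v^+,\bar t)$ for $v\in \barU_r'\cap V$ into $\bar t$, leaving only the in-copies $\{v^- : v\in \barU_r'\cap V\}$ as the surviving boundary vertices. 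Since every arc of the split graph (and hence of any residual graph) runs between a ``$+$'' vertex and a ``$-$'' vertex, no two of these boundary $v^-$ vertices are ever adjacent. Therefore on any $(r,\bar t)$-path, boundary vertices alternate with non-boundary vertices, and the path length is at most $2\tilde n_r$ where $\tilde n_r=n_r^{\mathrm{in}}$. This alternation argument is robust under taking residuals and is the missing ingredient in your outline; once you have it, your tradeoff computation is exactly the paper's.
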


\begin{proof}
  We recall the construction from \reftheorem{element-isolating-cuts},
  adopting the same notation. In addition, for each $r$, let $\tilde{n}_r$ be
  the number of vertices in $\barU_r$. Note that as the $\bar{U}_r$'s are
  disjoint, we have $\sum_{r \in R} \tilde{n}_r \leq n$.

  For each $r$, we employ two different approaches to computing the
  minimum $(r,\bar{t})$-element cut.  On one hand we can apply any max
  flow algorithm in $\ectime{m_r}{n_r}$ time. As remarked above we
  have $\ectime{m_r}{n_r} \leq \apxO{m_r + n_r^{1.5}}$ by
  \cite{brand++}.  The second approach is to apply blocking flows with
  the following additional observations.  Element connectivity
  can be modeled as maximum flow in undirected graphs with edge and vertex capacities,
  which in turn can be reduced to maximum flow in edge capacitated directed graphs.
  Recall the directed graph
  representation of vertex capacities, sometimes called the ``split
  graph''. We remind the reader that in the split graph, each
  non-terminal vertex $v \in V_r \setminus \setof{r,\bar{t}}$ is split
  into two vertices -- an ``in-vertex'' $v^-$ and an ``out-vertex''
  $v^+$ -- and there is an edge $(v^-, v^+)$ with capacity equal to
  $\weight{v}$. Each edge $(u,v) \in E_r$ is replaced with an edge
  $(u^+,v^-)$ with the same capacity. From this split graph, we
  contract the edges $(v^+,\bar{t})$ for all $v \in V \cap \barU_r'$,
  which is safe because $\bar{t}$ is the sink and each edge
  $(v^+, \bar{t})$ has infinite capacity. Now, in this directed
  auxiliary graph, we have $\bigO{m_r}$ edges and $\bigO{n_r}$
  vertices. We now observe that the auxiliary vertices corresponding
  to $\barU_r' \cap V$, $\setof{v^- \where v \in \barU_r' \cap V}$,
  do not have any edges between them.  Then any $(s,\bar{t})$ path in
  this graph or in any residual graph that may arise cannot have
  consecutive auxiliary vertices from $\barU'_r$. Therefore, every
  augmenting path has length at most $2\tilde{n}_r$.  In turn, $\bigO{\tilde{n}_r}$
  iterations of blocking flows suffice to find the minimum
  $(r, \bar{t})$ cut in $G_r$, which takes
  $\bigO{m_r \log{m_r / n_r}}$ time per iteration
  \cite{goldberg-tarjan-90} and $\bigO{m_r \tilde{n}_r \log{m_r / n_r}}$ time
  overall.

  Let $\alpha > 0$ be a parameter to be determined. We have
  $\tilde{n}_r \geq \alpha n / k$ for at most $k/\alpha$ vertices $r \in
  R$. We have
  \begin{align*}
    \bigO{\sum_r \min{\ectime{m_r}{n_r}, m_r \tilde{n}_r \log{m}}} %
    &\leq                                            %
      \apxO{\sum_{r \where \tilde{n}_r \leq \alpha n / k} m_r \tilde{n}_r + \sum_{r
      \where \tilde{n}_r \geq \alpha n /k } \parof{m_r + n_r^{1.5}}}
    \\
    &\leq                        %
      \apxO{m +  \prac{\alpha}{k} m n + \prac{k}{\alpha} n^{1.5}}
    \\
    &\tago{\leq}
      \apxO{m + m^{1/2} n^{5/4}}
      =
      \apxO{m^{1/2} n^{5/4}},
  \end{align*}
  as desired.  Here, in step \tagr, we substituted
  $\alpha = k n^{1/4}/m^{1/2}$.
\end{proof}

\section{Vertex connectivity}

\newcommand{\vconn}{\kappa} \labelsection{vc} %

In this section we consider the problem of computing the vertex
connectivity in weighted and unweighted
graphs.
\setupUG Given distinct nodes $s,t \in V$ such that $st \not \in E$,
the minimum weight vertex separator between $s$ and $t$ can be
computed via flow techniques.  Recently there has been significant
improvement in the running time of vertex capacitated flow to
$\tilde{O}(m+ n^{1.5})$ \cite{brand+}. We use $\vctime{m,n}$ to denote
the complexity of computing such a separator. We let $\vconn(s,t)$
denote the weight of the separator between $s,t$ with the
understanding that $\vconn(s,t) = \infty$ if $\setof{s,t} \in E$.
Here we are interested in the minimum vertex weight separator of $G$
which can be defined as $\min_{s,t \in V, s \neq t} \vconn(s,t)$.

Let $R \subset V$ such that $R$ is an \emph{independent set} in $G$;
that is, no two vertices in $R$ share an edge. One can then define
$\vconn(R)$ to be $\min_{s,t \in R, s \neq t} \vconn(s,t)$.  We
observe that $\vconn(R)$ is the same as the element connectivity of
$R$ in the graph where $R$ is the set of terminals and $V \setminus R$
are the non-terminals and edge weights are set to $\infty$; i.e., only
vertices are allowed to be removed. We have already seen algorithms
for element connectivity, which immediately convert to isolating cut
algorithms for vertex connectivity. For instance, one can compute the
minimum isolating cut in $\apxO{m^{1+o(1)}n^{3/8}U^{1/4} + n^{1.5}}$
time with integral vertex weights between $1$ and $U$, or in
$\apxO{\sqrt{m} n^{5/4}}$ time for polynomially bounded weights.

These running times for isolating cuts do not, however, immediately
convert to running times for vertex cuts.  To obtain the minimum
vertex cut as an isolating cut, we must initialize the algorithm with
a set of vertices $R$ for which the minimum vertex cut is also an
isolating cut.  Let $(S,T)$ be opposite sides of a minimum vertex cut
$N(S) = N(T)$. Without loss of generality suppose $S$ has weight less
than or equal to $T$. We would like a set $R$ that samples exactly one
point from $S$, at least one point from $T$, and avoids $N(S)$
altogether. Even in the unweighted setting, uniform sampling is
thwarted by the fact that $N(S)$ may be much larger than $S$, and it
is difficult to hit $S$ without hitting $N(S)$ too. In the following
lemma, we observe that if we relax our problem to an
$\epsmore$-approximately minimum vertex cut, then we can sample a
useful set $R$ with reasonably good probability.
\begin{restatable}{lemma}{VCSampling}
  \labellemma{vc-sampling} Let $\eps > 0$ be fixed.  \setupG Let
  $w: V \to [1, U]$ be positive vertex weights and let
  $W = \sum_{v \in V} \weight{v}$ be the total weight.  \setupvc
  Suppose the minimum weighted degree is greater than $\epsmore
  \vc$. Then one can compute a randomized independent set
  $R \subset V$ such that the minimum vertex cut is an $R$-isolating
  set with probability at least
  \begin{math}
    \bigOmega{\parof{\eps / \log{n U}} \max{\eps, \parof{1 -
          \vc / {W}}}}
  \end{math}
\end{restatable}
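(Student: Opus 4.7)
I begin by fixing a specific minimum weighted vertex cut: let $(S,T)$ be opposite sides with separator $N = N(S) = N(T)$, so $w(N) = \vc$ and $w(S) + w(T) = W - \vc$; swap $S$ and $T$ if needed so $w(S) \le w(T)$. The first step is to leverage the minimum-weighted-degree hypothesis. For any $v \in S$, the neighbors of $v$ lie in $(S \setminus \{v\}) \cup N$, so $w(N(v)) \le w(S) - w(v) + \vc$; combined with $w(N(v)) > (1+\eps)\vc$, this forces $w(S) > \eps \vc + w(v) \ge \eps \vc$. The same argument in $T$ gives $w(T) > \eps \vc$; summing yields $W - \vc > 2\eps \vc$, i.e., $\tau := 1 - \vc/W > 2\eps/(1+2\eps) = \bigOmega{\eps}$. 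In particular, $\max(\eps, 1 - \vc/W) = \Theta(\tau)$ under the hypothesis.

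I sample $R$ in two stages. First, I draw $q$ uniformly from a dyadic sequence $\{q_i = 2^{-i}\}$ of size $\bigO{\log(nU)}$ covering $[1/W, 1]$, and independently include each $v \in V$ into a provisional set $R_0$ with probability $\min(1, q \cdot w(v))$. Second, I extract $R$ as any maximal independent subset of $R_0$ (e.g., greedy on a random permutation). The \emph{target event} that makes the fixed minimum cut an $R$-isolating cut is $\sizeof{R_0 \cap S} = 1$, $\sizeof{R_0 \cap T} \ge 1$, and $R_0 \cap N = \emptyset$. On this event, $R_0 \subseteq S \cup T$ and the unique $S$-member of $R_0$ has no neighbors in $R_0$ (no $S$-$T$ edges exist, $R_0 \cap N = \emptyset$, and $R_0 \cap S$ is a singleton), so it lies in every MIS of $R_0$; likewise, MIS-maximality forces at least one vertex of $R_0 \cap T$ into $R$ (if $t \in R_0 \cap T$ is excluded, some neighbor of $t$ in $R_0$ must be in $R$, and all such neighbors lie in $T$). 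Hence the target event on $R_0$ implies the desired isolation property for the independent set $R$.

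The probability of the target event for the right $q$ is estimated via a case split on whether $w(S) \le \vc$ or $w(S) > \vc$. When $w(S) \le \vc$ (which covers all $\tau \le 2/3$), I take $q = \Theta(1/\vc)$: then $\mu_S := q w(S) \in [\eps, O(1)]$, $\mu_N := q \vc = \Theta(1)$, and $\mu_T := q w(T) \ge \tau/(2(1-\tau))$ from $w(T) \ge (W-\vc)/2$. A Poisson-type estimate gives $\Pr[\text{target}] \ge \bigOmega{\mu_S \min(1, \mu_T) e^{-\mu_N}} = \bigOmega{\eps \tau}$, by handling the subcases $\tau \ge 2/3$ (where $\mu_T \ge 1$ and the bound becomes $\bigOmega{\mu_S} \ge \bigOmega{\eps}$) and $\tau < 2/3$ (where $\mu_S \mu_T \ge \eps \cdot \tau/(2(1-\tau)) \ge \bigOmega{\eps \tau}$). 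When $w(S) > \vc$, which forces $\tau > 2/3$, I instead take $q = \Theta(1/w(S))$: then $\mu_S, \mu_T = \Theta(1)$ and $\mu_N < 1$, so $\Pr[\text{target}] = \bigOmega{1} \ge \bigOmega{\eps \tau}$. Both target values of $q$ lie in $[2/W, 1]$ and match some $q_i$ in the dyadic sequence up to a constant factor; averaging over the uniform guess divides by $\bigO{\log(nU)}$, producing the advertised $\bigOmega{(\eps/\log(nU)) \max(\eps, 1-\vc/W)}$.

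The main obstacle is achieving the uniform $\bigOmega{\eps \tau}$ lower bound across regimes: neither $q \approx 1/\vc$ nor $q \approx 1/w(S)$ alone suffices for every allowed $w(S)$, so the two-case split is essential, and getting the Poisson-type calculation tight in each case (particularly for $\tau$ close to $1$) is the most delicate part. The MIS post-processing, by contrast, is a short but important piece, relying on the singleton structure of $R_0 \cap S$ to rule out conflicts for the $S$-side vertex while keeping $R \cap T \ne \emptyset$.
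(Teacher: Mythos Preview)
Your proposal is correct and follows essentially the same approach as the paper: guess a sampling rate from a dyadic range (the paper parameterizes by $\mu$ with target $\mu \in [2\max\{w(S),\vc\},\,4\max\{w(S),\vc\}]$, which is exactly your $q=\Theta(1/\vc)$ when $w(S)\le\vc$ and $q=\Theta(1/w(S))$ when $w(S)>\vc$), sample vertices with probability proportional to weight, and post-process to an independent set while arguing the lone $S$-vertex survives. The only notable differences are cosmetic: you make the case split explicit where the paper hides it in the single expression $\max\{w(S),\vc\}$, and you additionally observe that the degree hypothesis already forces $1-\vc/W=\Omega(\eps)$, so the $\max$ in the bound is always attained by the $1-\vc/W$ term.
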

\begin{proof}
  For ease of notation, let
  \begin{align*}
    \eps_0 = \max{\eps, \frac12 \parof{1 -
    \frac{\vc}{W}}}
  \end{align*}
  Let $(S,T)$ be opposite sides of the minimum vertex cut
  $N(S) = N(T)$. Without loss of generality suppose
  $w(S) \leq w(T)$ where we use the notation $w(A)$ to
  denote the total weight of
  vertices in $A$, that is, $w(A) = \sum_{v \in A} w(v)$.
  Since $N(S)$ is the minimum weight vertex separator we have $w(N(S)) = \vc$.
  We would like a independent set
  $R \subset V$ that has exactly one point from $S$, at least one
  point from $T$, and avoids $N(S)$ altogether. Then $(S,T)$ would
  isolate the lone vertex in $R \cap S$ from $R -r \subseteq T$, as
  desired. We can achieve this via a sampling procedure that we
  described below.

  First we claim that $w(S) \geq \eps \vc$. Fix an arbitrary vertex $v
  \in S$.  By assumption, $w(N(v)) \ge
  (1+\eps)\vc$. Since $N(v) \subseteq S \cup N(S)$ and $S \cap N(S) =
  \emptyset$, we have $w(S) \ge \eps \vc$.

  Let $\mu$ be any value in the range
  $[2\max{w(S), \vc}, 4\max{w(S),\vc}]$. Since
  $\max{\sumweight{S}, \vc}$ lies in the range $[1, \poly{n,U}]$, we
  can sample a value $\mu$ that lies in the above range with
  probability $\bigOmega{1/\log{n U}}$ by randomly picking a power of
  $2$ in the range $[1, \poly{n,U}]$. Once we fix $\mu$, let $R$ be a
  random subset of vertices obtained by independently sampling each
  vertex $v$ with probability $\weight{v} / \mu$. Then, as long as $R$
  has an adjacent pair of vertices, we remove one of them from $R$.
  We claim that the intial sample for $R$ has one point from $S$, no
  points from $N(S)$, and at least one point from $T$ with probability
  $\geq \bigOmega{\eps \eps_0}$.  If so, then since $S$ and $T$ are
  independent from one another, dropping vertices in the second phase
  will not remove any vertices from $S$, and retain at least one
  vertex in $T$, as desired.  Observe that the three events are
  independent.

  The probability that $R$ avoids $N(S)$ is $\prod_{v
    \in N(S)} (1- \weight{v}/\mu)$.  Since $w(N(S)) = \vc$ and $\mu
  \ge 2 \vc$, for any $v \in N(S)$, $w(v)/\mu \le 1/2$.  For $x \in (0,1/2]$ the inequality
    $(1-x) \ge e^{-2x}$ holds. Hence, $\prod_{v \in N(S)} (1-
  \weight{v}/\mu) \ge \prod_{v \in N(S)} e^{-2\weight(v)/\mu} \ge e^{-2\vc/\mu} \ge 1/e$.

  Recall that $w(S) \geq \eps \vc$ and hence the probability
  that $R$ samples exactly one
  vertex from $S$ is
  $$\sum_{v \in S} \frac{\weight{v}}{\mu} \prod_{u \in
    S-\{v\}}(1-\frac{\weight{u}}{\mu}) \ge \sum_{v \in S}
  \frac{\weight{v}}{\mu} e^{-2 (w(S)-w(v))/\mu}\ge \frac{1}{e} \sum_{v
    \in S} \frac{\weight{v}}{\mu} \ge \frac{\eps}{4e}.$$ In the
  preceding set of inequalities we used the fact that $1-x \ge
  e^{-2x}$ for $x \in [0,1/2]$ since $w(S)/\mu \le 1/2$. In the final inequality
  we used the fact that $w(S) \ge \eps \vc$ which implies that $w(S) \ge \eps \mu/4$.

  We claim that  $w(T) \ge  \eps_0 \mu/4$. Assuming the claim,
  the probability that $R$ samples at least one vertex from $T$
  is $\geq 1 - e^{-w(T)/\mu}  \ge 1-e^{\eps_0/4} =  \bigOmega{\eps_0}$.
  To see the claim, recall that $w(T) \ge w(S) \ge \eps \vc$. We also have
  $w(S) + w(T) + w(N(S)) = W$ which implies that $w(T) \ge \frac{1}{2}
  (W - \vc) \ge \frac{1}{2}(1-\frac{\vc}{W})W \ge
  \frac{1}{2}(1-\frac{\vc}{W})w(S)$. Since $\mu \le 4
  \max\{w(S),\vc\}$, we have $w(T) \ge \eps_0 \mu/4$.

  Thus, given $\mu$ lies in the range $[2\max{w(S), \vc}, 4\max{w(S),\vc}]$ which
  happens with probability $\bigOmega{1/\log{n U}}$ we have the desired sample
  $R$ with probability $\bigOmega{\eps \cdot \eps_0}$.
\end{proof}

\subsection{Approximate vertex connectivity}

By combining the isolating cut algorithms
with the sampling lemma above (for the case where no singleton already induces a good enough vertex
cut), we obtain the following approximation algorithm for vertex
connectivity.  We point out that in the running time below, the
trailing factor ($\min{1/\eps, W/(W-\vc)}$) is simply a constant
except in the relatively uninteresting setting where the minimum
weight vertex cut is almost all of the weight of the graph. In the
regime of interest, the following is a $\apxO{1/\eps}$ factor greater
than the running time to compute an isolating vertex cut.
\begin{restatable}{theorem}{ApxVC}
  \labeltheorem{apx-vertex-connectivity} Let $\eps > 0$ be fixed.
  \setupG Let $w: V \to \preals$ be positive vertex weights and let
  $W = \sum_{v \in V} \weight{v}$ be the total weight.  \setupvc Then
  a minimum vertex cut can be computed with high probability in
  \begin{math}
    \apxO{(1/\eps) \ivctime{m,n} \min{(1/\eps), W / (W - \vc)}}
  \end{math}
  randomized time, where $\ivctime{m}{n}$ is the running time to compute
  the minimum isolating vertex cut in a weighted graph of $m$ edges
  and $n$ vertices.
\end{restatable}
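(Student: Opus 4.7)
The plan is to combine \reflemma{vc-sampling} with the isolating vertex cut subroutine via amplification by independent trials, while separately handling the case where the minimum vertex cut arises as the neighborhood of a single vertex. As preprocessing, I would compute the minimum weighted degree
\begin{math}
  \delta^* = \min_{v \in V} \sum_{u \in N(v)} \weight{u}
\end{math}
in $\apxO{m}$ time; the neighborhood of the minimizer is a vertex cut of weight $\delta^* \geq \vc$. If $\delta^* \leq (1+\eps)\vc$ this is already a $(1+\eps)$-approximation, and otherwise the minimum weighted degree strictly exceeds $(1+\eps)\vc$, which is precisely the hypothesis required by \reflemma{vc-sampling}.

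The algorithm cannot distinguish the two cases directly, so it proceeds to sample regardless. In each trial, draw a random independent set $R \subseteq V$ using the procedure from \reflemma{vc-sampling}, compute the minimum $R$-isolating vertex cut in $\ivctime{m,n}$ time using the subroutine from the previous section, and update the current best-known cut weight $c$ (initialized to $\delta^*$). By \reflemma{vc-sampling}, in the regime $\delta^* > (1+\eps)\vc$, each trial recovers the true minimum vertex cut exactly with probability at least $p_\vc = \bigOmega{(\eps/\log{nU}) \max{\eps, 1 - \vc/W}}$. Returning the minimum of $\delta^*$ and the best isolating cut produced across all trials then yields a $(1+\eps)$-approximation in either case.

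It remains to set the number of trials $N$. By a standard union bound, $N = \bigO{\log(n)/p_\vc}$ trials suffice to drive the failure probability below $1/\poly{n}$, and the identity $1/\max{\eps, 1 - \vc/W} = \min{1/\eps, W/(W-\vc)}$ converts this into the claimed running time $\apxO{(1/\eps) \ivctime{m,n} \min{1/\eps, W/(W-\vc)}}$ after multiplication by the per-trial cost. Because $\vc$ is not known a priori, I would use an adaptive stopping rule: after each trial, compute $p_c = \bigOmega{(\eps/\log{nU}) \max{\eps, 1 - c/W}}$ from the current best-known cut weight $c$, and stop once $N \cdot p_c \geq C \log n$ for a suitable constant $C$. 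Since $c \geq \vc$ always implies $p_c \leq p_\vc$, termination guarantees enough trials to succeed with high probability; conversely, once sampling drives $c$ down to $\vc$ (which happens within $\bigO{\log(n)/p_\vc}$ trials w.h.p.), the stopping rule triggers and the total trial count matches the target bound up to a constant factor.

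The main obstacle is already absorbed into \reflemma{vc-sampling}: the straightforward sampling approach must contend with the possibility that one side $S$ of the minimum vertex cut carries far less weight than the separator $N(S)$, in which case weighted sampling would almost always hit $N(S)$ before ever hitting $S$. The singleton preprocessing step, by ensuring $\delta^* > (1+\eps)\vc$ whenever the sampling route is needed, forces $\sumweight{S} \geq \eps \vc$ and keeps the sampling probability nontrivial. Once this dichotomy is established, the remaining work is routine amplification and bookkeeping around the already-developed isolating cut machinery.
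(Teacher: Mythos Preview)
Your proposal is correct and follows essentially the same approach as the paper: compare singleton (neighborhood) cuts against repeated trials of sampling $R$ via \reflemma{vc-sampling} followed by an isolating-cut computation, with the singleton case handling the regime where the lemma's hypothesis fails. The paper's proof is terser and simply repeats the sampling subroutine $\bigO{\ell}$ times with $\ell$ set directly from $\vc$, whereas you supply an explicit adaptive stopping rule to avoid assuming $\vc$ is known; this is extra detail the paper glosses over, not a different argument.
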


\begin{proof}
  Let
  \begin{align*}
    \ell = \apxO{\frac{1}{\eps} \min{\frac{1}{\eps},
    \frac{W}{W - \vc}}}
  \end{align*}
  The algorithm first repeats the following subroutine
  $\bigO{\ell}$ times. This subroutine first generates an set $R \subset V$
  by \reflemma{vc-sampling}, and then it computes a minimum $R$-isolating cut.
  It compares the $\ell$ isolating cuts generated above with the
  singleton cuts in the graph, returning the minimum overall.

  We argue that the algorithm returns a $\epsmore$-approximate
  minimum weight cut with high probability by the following simple
  analysis. In one case, some singleton cut is an approximate minimum
  cut, in which case the algorithm always succeeds. In the second
  case, the minimum weighted degree is at least an
  $\epsmore$-multiplicative factor greater than the vertex
  connectivity. In that case, the minimum weight vertex cut is a
  minimum $R$-isolating cut for at least one of the random sets $R$
  with high probability, in which case we return the minimum weight
  vertex cut.
\end{proof}

We briefly compare our bound above to previous work. As mentioned
previously Henzinger, Rao and Gabow \cite{hgr} obtain a randomized
algorithm that gives the exact vertex connectivity in $\apxO{mn}$ time
for weighted graphs.  We obtain a $(1+\eps)$-approximation in
$\apxO{m \sqrt{n}/\eps}$ time or in $\apxO{m^{1/2}n^{5/4}/\eps}$ time;
other bounds are outlined in previous subsection. We are thus able to
obtain substantially faster algorithm if we settle for a small
approximation. There have been past works on approximation for vertex
connectivity but as far as we know they have been limited to
unweighted graphs. Henzinger \cite{h-97} obtained a $2$-approximation
in $O(n^2\min(\sqrt{n},\vc))$. Forster et al.\ obtained a
$(1+\eps)$-approximation in randomized time $\apxO{m+ n\vc^2/\eps}$
which is near-linear for small connectivity, and combining various
other results they improve upon Henzinger's result. We refer the
reader to \cite{fnsyy} for detailed bounds. Our running times are
useful for the larger connectivity regime and we can obtain improved
bounds in various other regimes of interest. We leave a more detailed
comparison to a future version of the paper.

\subsection{Exact vertex connectivity}

Now, for integral weights, the approximation algorithm above gives the
following exact algorithm for vertex connectivity by suitable choice
of $\eps$.  Again we highlight that in the running time below, the
trailing factor ($\min{\vc, \frac{W}{W -\vc}}$) is simply a constant
except in the relatively uninteresting setting where $\vc$ is almost
$\sum_{v \in V} \weight{v}$, in which case the remaining factors of
$\bigO{\vc \ivctime{m}{n}}$ are not as compelling anyway.
\begin{restatable}{theorem}{ExactVC}
  \labeltheorem{exact-vc} \setupIntUGvc Then the minimum vertex cut
  can be computed with high probability in
  \begin{math}
    \apxO{\vc \ivctime{m,n} \min{\vc, W / (W - \vc)}}
  \end{math}
  randomized time, where $\ivctime{m}{n}$ is the running time to compute
  the minimum isolating vertex cut in a weighted graph of $m$ edges
  and $n$ vertices.
\end{restatable}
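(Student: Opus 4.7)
The plan is to invoke \reftheorem{apx-vertex-connectivity} with $\eps$ taken small enough that a $(1+\eps)$-approximation is forced to be exact when vertex weights are integral. Specifically, if $\eps < 1/\vc$, then any vertex cut of weight $\vc'$ with $\vc \le \vc' \le (1+\eps)\vc < \vc + 1$ must satisfy $\vc' = \vc$ by integrality. Substituting $1/\eps = \Theta(\vc)$ into the approximation bound $\apxO{(1/\eps)\ivctime{m,n} \min{1/\eps, W/(W-\vc)}}$ yields exactly the claimed running time $\apxO{\vc\, \ivctime{m,n} \min{\vc, W/(W-\vc)}}$.

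The only obstacle is that we do not know $\vc$ in advance and so cannot set $\eps$ directly. I would handle this by a geometric doubling / guess-and-check procedure. For $i = 0, 1, 2, \dots$, run \reftheorem{apx-vertex-connectivity} with $\eps_i = 2^{-i}$, and let $\lambda_i$ denote the cut weight returned at iteration $i$. With high probability $\vc \leq \lambda_i \leq (1+\eps_i)\vc$, and since $\lambda_i$ is a positive integer, the integrality argument above gives $\lambda_i = \vc$ as soon as $\eps_i \lambda_i < 1$. I would therefore stop at the first index $i^\ast$ for which $\lambda_{i^\ast} < 2^{i^\ast}$ and output $\lambda_{i^\ast}$ as the exact minimum vertex cut.

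For correctness and the running time, note that $\lambda_0 \leq 2\vc$ from the $\eps_0 = 1$ run, and $\lambda_i$ is monotone non-increasing (we can take the minimum of all returned cuts). The stopping condition $\lambda_i < 2^i$ is therefore guaranteed once $2^i > 2\vc$, i.e., within $i^\ast = O(\log \vc)$ iterations, which in particular gives $1/\eps_{i^\ast} = \Theta(\vc)$. A union bound over the $O(\log \vc)$ iterations preserves the high-probability success guarantee, and the total running time is a geometric sum in $i$ dominated by the final iteration, giving $\apxO{\vc\, \ivctime{m,n} \min{\vc, W/(W-\vc)}}$. The one subtlety worth flagging is the regime where $\vc$ is realized by a singleton cut rather than by the sampling and isolating-cut machinery; this is already handled inside the approximation algorithm, which compares against all singleton cuts, so the returned $\lambda_i$ is always a valid upper bound on $\vc$ regardless of whether the sampling step succeeds.
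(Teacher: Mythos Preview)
Your proof is correct and follows essentially the same approach as the paper: set $\eps \approx 1/\vc$ so that, by integrality of the weights, a $(1+\eps)$-approximate cut must be exact, and invoke \reftheorem{apx-vertex-connectivity}. The paper's proof is a one-liner that simply asserts this choice of $\eps$; your additional geometric doubling to handle not knowing $\vc$ in advance is a detail the paper glosses over here (it mentions exponential search only later, in the proof of \refcorollary{exact-vc-sparse}), so if anything your treatment is more complete.
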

\begin{proof}
  For integral capacities, a $(1 + 1/(\vc + 1))$-approximation is an
  exact solution. Thus the result follows from
  \reftheorem{apx-vertex-connectivity}.
\end{proof}

For the unweighted case, combining the above with sparsification
\cite{nagamochi-ibaraki} gives the following.

\begin{restatable}{corollary}{ExactVCSparse}
  \labelcorollary{exact-vc-sparse} Let $G=(V,E)$ be a simple
  unweighted graph. Then the minimum vertex cut can be computed with
  high probability in
  \begin{math}
    \apxO{m + \vc \ivctime{n \vc,n} \min{\vc, n / (n - \vc)}}
  \end{math}
  randomized time, where $\ivctime{m}{n}$ is the running time to compute
  the minimum isolating vertex cut in a graph of $m$ edges
  and $n$ vertices.
\end{restatable}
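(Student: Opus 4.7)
The plan is to invoke \reftheorem{exact-vc} on a sparsified subgraph of $G$. Concretely, for each integer $k \geq 1$, the Nagamochi--Ibaraki style sparsification \cite{nagamochi-ibaraki} produces in $O(m)$ time a subgraph $H_k \subseteq G$ with at most $nk$ edges such that the minimum vertex cut of $H_k$ equals $\min{\vc, k}$. Since $H_k$ has $n$ vertices, unit capacities ($U = 1$), and total vertex weight $W = n$, applying \reftheorem{exact-vc} to $H_k$ costs
\begin{math}
  \apxO{k \, \ivctime{nk, n} \min{k, n/(n-k)}}
\end{math}
time.

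Since $\vc$ is not known in advance, I would wrap this reduction inside a doubling loop over $k = 1, 2, 4, \ldots$. At iteration $k$, I compute $H_k$, run \reftheorem{exact-vc} on it, and compare the returned cut value to $k$: if the value is strictly less than $k$ then the sparsifier guarantees it already equals $\vc$, so I output it; otherwise I double $k$ and retry. The procedure terminates at the first power of two $k^*$ with $\vc \leq k^*$, which satisfies $k^* \leq 2 \vc$. A standard minimum-degree preprocessing bounds the number of doubling rounds by $O(\log \vc) = O(\log n)$.

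For the running time, the per-iteration sparsifier cost of $O(m + n k)$ sums geometrically to $O(m + n \vc)$ across all iterations, while the \reftheorem{exact-vc} costs also form a geometric series in $k$ that is dominated by the final iteration at $k = k^* = \Theta(\vc)$, contributing $\apxO{\vc \, \ivctime{n \vc, n} \min{\vc, n / (n - \vc)}}$. Adding the two bounds and absorbing the $O(\log n)$ factor from the doubling into the $\apxO{\cdot}$ notation yields the claimed total time.

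The one ingredient to verify carefully is the exact-preservation guarantee of the $k$-vertex-connectivity certificate: so long as $k \geq \vc$, the minimum vertex cut of $H_k$ equals $\vc$ exactly, which justifies both the termination test and the correctness of doubling. This is the standard guarantee of Nagamochi--Ibaraki sparsification adapted to vertex connectivity, and once it is in hand the remainder is routine running-time bookkeeping. I do not expect any genuine obstacle here beyond confirming that the sparsifier interacts cleanly with the unit weights assumed by \reftheorem{exact-vc}.
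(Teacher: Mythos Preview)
Your proposal is correct and follows essentially the same approach as the paper: guess $\vc$ by exponential (doubling) search, apply Nagamochi--Ibaraki sparsification to reduce to $O(n\vc)$ edges, and invoke \reftheorem{exact-vc} on the sparsified graph. The paper states this in two sentences and absorbs the doubling overhead as an $O(\log \vc)$ factor; your more detailed bookkeeping (termination test via comparing the returned cut value to $k$, and the geometric summation of per-iteration costs) is a faithful elaboration of the same argument.
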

\begin{proof}
  For unweighted graphs we can assume we know $\vc$ (via exponential
  search which adds an additional $O(\log \vc)$ overhead).  We apply the
  well-known linear-time sparsification algorithm of Nagamochi and
  Ibaraki \cite{nagamochi-ibaraki} to reduce the number of edges to
  $O(n\kappa)$ and then run the algorithm in the preceding theorem
  on the sparsified graph which gives the claimed bound.
\end{proof}

\begin{table}[htb]
  \everymath{\displaystyle}
  \renewcommand\tabularxcolumn[1]{m{#1}}%
  \centering
  \begin{tabularx}{\textwidth}{| >{\emstrut{1.5}{1}}c | X |}
    \hline $\bigO{n^2 \vctime{\vc n,n}}$ & Combines trivial algorithm
    with
    sparsification \cite{nagamochi-ibaraki}. \\
    \hline $\bigO{n \vctime{\vc n,n}}$ & Combines randomized trivial
    algorithm with sparsification
    \cite{nagamochi-ibaraki}. $\vc \leq .99n$ \\
    \hline %
    $\bigO{n^{\omega} + n \vc^{\omega}}$. & \cite{llw}.
    \\
    \hline %
    $\apxO{\vc n^2}$ & \cite{hgr}. Randomized. \\
    \hline %
    $\bigO{\min{n^{3/4}, \vc^{3/2}} \vc^2 n + \vc n^2}$ &
    \cite{gabow-06}.
    \\
    \hline $\apxO{m + \vc^{7/3} n^{4/3}}$ &
    {\cite{nsy-19}. Randomized.} %
    \\ \hline %
    $\apxO{m + n \vc^3}$ & {\cite{fnsyy}. Randomized.}  \\ \hline %
    $\apxO{m + \vc^{7/3} n^{4/3}}$ &
    \refcorollary{exact-vc-sparse}. Randomized.  $\vc \leq .99n$
    \\
    \hline $\apxO{m + \vc^2 n^{11/8+o(1)} + \vc n^{3/2}}$ &
    \refcorollary{exact-vc-sparse}. Randomized.
    $\vc \leq .99n$
    \\
    \hline $\apxO{m + \vc^{1.5} n^{7/4}}$ &
    \refcorollary{exact-vc-sparse}. Randomized. $\vc \leq .99n$.
    \\
    \hline %
  \end{tabularx}

  \caption{A table of running times for finding the minimum vertex cut
    in an \emph{unweighted} and undirected graph.  $\vctime{m,n}$
    denotes the running time of computing $(s,t)$-vertex
    connectivity. $\ectime{m,n}$ denotes the running time computing
    $(s,t)$-edge connectivity.  See also \cite[Section
    15.2a]{schrijver}. \labeltable{unweighted-vc-running-times}}

  \graybar
\end{table}


The reduction from exact vertex connectivity to isolating vertex cut
above, mixed with the algorithms for isolating vertex cuts, and optionally including the
sparsification step from \refcorollary{exact-vc-sparse}, produces a
number of new running times that are optimal for different ranges of
$\vc$. In general, the running times obtained here have a lower
dependence on $\vc$ then other algorithms for vertex connectivity with
a $\poly{\vc}$ dependence (which is common for the unweighted
setting), so the running times here are particularly good for moderate
to large $\vc$. For a more detailed comparison between the literature
and new running times for the unweighted setting (where we restrict to
unweighted for simplicity), see
\reftable{unweighted-vc-running-times}.


\section{Hypergraph Connectivity}
\labelsection{hypergraphs} Let $H=(V,E)$ be a weighted hypergraph and
let $R \subseteq V$.  The cut function of a hypergraphs is symmetric
and submodular.  Given disjoint sets $S, T \subset V$ the minimum
$S$-$T$ cut in $H$ can be computed in $\ectime{p,m+n}$ time
via standard reductions\footnote{One can also reduce to computing $s$-$t$ cut
  in a vertex capacitated undirected graph with $p$ edges and $m+n$ nodes,
  although there does not seem to be any particular advantage with current
  running time bounds for $\ectime{p,m+n}$.}. We can use \reflemma{isolating-cut-partition}
and \refcorollary{symsubmod-mincut} to understand the running time to
compute $R$-connectivity in $H$. Up to logarithmic factors it suffices
to estimate the time to find $R$-isolating cuts. Recall that the
running time consists of two parts. The first part is $O(\log |R|)$
calls to $S$-$T$ cut problem in $H$. After this we have the following
situation. For each $r \in R$ we obtain a set $U_r \subset V$ such
that $r \in R$ and $U_r \cap (R-r) = \emptyset$.  Furthermore the sets
$U_r$ over $r \in R$ are pairwise disjoint.  For each $r$ the goal is
to find a set $Y_r \subseteq U_r$ with minimum $w(\delta(Y_r))$ where
$\delta(Y_r)$ is set of hyperedges crossing $Y_r$. Let $n_r =
|U_r|$. We can compute $Y_r$ by solving a cut problem in an auxiliary
hypergraph $G_r$ on $n_r+1$ vertices obtained by shrinking
$V \setminus U_r$ into a single vertex. Let $p_r$ be the total size of
the hyperedges in $G_r$. It is not hard to see that
$\sum_{r \in R} p_r = \bigO{p}$. Thus each cut problem in $G_r$ can be
computed in either $\ectime{p_r,m+n_r+1}$. This implies the following.

\begin{theorem}
  The minimum isolating cuts over a set of vertices $R$ of size
  $k = \sizeof{R}$ in a hypergraph with $m$ edges, $n$ vertices, and
  total size $p$ can be computed
  \begin{align*}
    \apxO{\ectime{p, m + n} +
    \max[n_1,\dots,n_k,p_1,\dots,p_k]{\sum_{i=1}^k \ectime{p_i}{m + n_i} \where
    n_1 + \cdots + n_k \leq n \andcomma p_1 + \cdots + p_k \leq 2p}}
  \end{align*}
  time with high probability.
\end{theorem}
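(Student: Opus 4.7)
The plan is to combine the abstract isolating-cut framework (\reflemma{isolating-cut-partition}) with a problem-specific contraction step for hypergraphs. Since the hypergraph cut function $f(S) = w(\delta(S))$ is symmetric and submodular on $2^V$, it fits the setting of \reflemma{isolating-cut-partition} where $\V$ is the lattice of set-pairs $(X, V \setminus X)$ (the pairwise-disjoint, symmetric sublattice arising from symmetric submodular set functions). The oracle required by the lemma, namely a minimum $(S,T)$-cut for a partition of $R$, is implemented via standard graph-flow reductions in $\ectime{p, m+n}$ time. Invoking the lemma thus costs $O(\log k)$ such cut computations and produces pairwise disjoint sets $U_r \subseteq V$ with $r \in U_r$, $U_r \cap (R-r) = \emptyset$, such that every minimum $(r, R-r)$-cut is refined by some $Y_r \subseteq U_r$.

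Next, for each $r \in R$ I would construct the auxiliary hypergraph $G_r$ on vertex set $(U_r \cup \{\bar t\})$ obtained from $H$ by shrinking $V \setminus U_r$ into a single terminal $\bar t$: each hyperedge $e$ with $e \cap U_r \neq \emptyset$ becomes a hyperedge of size $|e \cap U_r| + \mathbf{1}[e \not\subseteq U_r]$ in $G_r$, obtained by replacing all endpoints in $V \setminus U_r$ by $\bar t$; hyperedges disjoint from $U_r$ are discarded. A minimum $(r, \bar t)$-cut in $G_r$ is exactly a minimum $(r, R-r)$-cut whose $r$-side lies in $U_r$, which by \reflemma{isolating-cut-partition} is the desired $r$-isolating cut. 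Computing this cut in $G_r$ takes $\ectime{p_r, m+n_r+1}$ time, where $n_r = |U_r|$ and $p_r$ is the total hyperedge size in $G_r$.

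To finish I need the budget accounting $\sum_r p_r = O(p)$ and $\sum_r n_r \leq n$. The latter is immediate from disjointness of the $U_r$. For the former, fix a hyperedge $e$: its contribution across all $G_r$ is
\begin{align*}
\sum_{r \in R} \bigl(|e \cap U_r| + \mathbf{1}[e \cap U_r \neq \emptyset \text{ and } e \not\subseteq U_r]\bigr) \leq |e| + |e| = 2|e|,
\end{align*}
since the $U_r$ are disjoint (so $\sum_r |e \cap U_r| \leq |e|$) and $e$ can intersect at most $|e|$ of them. Summing over $e$ gives $\sum_r p_r \leq 2p$, matching the statement.

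Combining the two phases yields a total running time of $O(\log k)$ invocations of the $(S,T)$-cut oracle plus the sum $\sum_r \ectime{p_r, m + n_r + 1}$ with the stated size constraints, which (absorbing the polylog factor of $\log k$ into $\apxO{\cdot}$) is exactly the claimed bound. The only genuinely delicate step is verifying the budget $\sum_r p_r \leq 2p$ for hyperedges that straddle several $U_r$; no other obstacle should arise, since the high-level reduction is essentially the same as in the graph edge-connectivity setting.
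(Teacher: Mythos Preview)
Your proposal is correct and follows essentially the same approach as the paper: apply the isolating-cut framework to the symmetric submodular hypergraph cut function, contract $V\setminus U_r$ to a single sink in each subproblem, and bound the total subproblem size. The paper merely asserts $\sum_r p_r = O(p)$ as ``not hard to see,'' whereas you spell out the per-hyperedge charging argument yielding $\sum_r p_r \le 2p$; otherwise the two arguments coincide.
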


In particular $\ectime{p}{m+n}$ is $\tilde{O}(p\sqrt{m+n} \log U)$
\cite{lee-sidford} and for unweighted case we have
$\ectime{p}{m+n} = \tilde{O}(p^{4/3})$ \cite{liu-sidford-20-a}. We can
obtain two other run times for hypergraphs that provide different
tradeoffs. These are obtained by more carefully solving the second
part of the isolating cut framework, and transfer ideas from vertex
connectivity to hypergraphs.

\begin{enumerate}
\item $\sqrt{p n (m+n)^{1.5}}$.
\item
  $\apxO{p (m+n)^{\frac{3 \alpha}{2 (1 + \alpha)}} \beta^{\frac{1}{1 +
        \alpha}}}$ for any $\alpha,\beta$ where
  $\ectime{m}{n} \leq m^{1 + \alpha} \beta$ (e.g.,
  \cite{liu-sidford-20-b} gives
  $\ectime{m}{n} \leq \apxO{m^{4/3} U^{1/3}}$, which we interpret as
  $\alpha = 1/3$ and $\beta = U^{1/3}$).
\end{enumerate}
We sketch the proofs of theorems that obtain the preceding bounds.

\begin{theorem}
  The minimum isolating cut in a hypergraph can be
  computed in
  \begin{align*}
    \apxO{\sqrt{p n (m+n)^{1.5}}}
  \end{align*}
  randomized time.
\end{theorem}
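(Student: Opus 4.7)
The plan is to follow the blueprint of \reftheorem{iso-ec-3}, balancing between the fast flow algorithm of \cite{brand++} and blocking flows on the subproblems produced by the isolating cut decomposition.

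First, I would apply the hypergraph isolating cut decomposition stated just before this theorem, reducing the problem to solving, for each $r \in R$, a minimum $(r,\bar{t})$-hypergraph cut in a subproblem $G_r$ on $n_r + 1$ vertices with $m_r \le m$ hyperedges of total size $p_r$, where $\sum_r n_r \leq n$ and $\sum_r p_r \leq \bigO{p}$. Each such cut is computed via the standard reduction to directed $s$-$t$ maxflow: split every hyperedge $e$ into two auxiliary nodes $a_e, b_e$ with a directed edge $a_e \to b_e$ of capacity $w(e)$, and infinite-capacity edges $v \to a_e$ and $b_e \to v$ for $v \in e$. The resulting digraph has $\bigO{n_r + m_r}$ vertices and $\bigO{p_r}$ edges.

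For each subproblem I would choose one of two algorithms. The first is the $\apxO{p_r + (m+n)^{1.5}}$-time maxflow of \cite{brand++} applied in the auxiliary digraph (using $m + n$ rather than $m_r + n_r$ since the hyperedge count in $G_r$ could be as large as $m$). The second is blocking flows, exactly as in the proof of \reftheorem{iso-ec-3}. The key observation, analogous to the element connectivity case, is that any simple path in the residual auxiliary digraph alternates between hypergraph vertices and auxiliary hyperedge vertices: between any two consecutive hypergraph vertices, the path traverses at most three edges through the gadget of one hyperedge ($v \to a_e \to b_e \to v'$, or its residual variants). Because the hypergraph vertices in a simple path must be distinct, and there are only $n_r + 1$ such vertices in $G_r$, the length of any augmenting path in the residual is $\bigO{n_r}$, so Dinic's algorithm terminates in $\bigO{n_r}$ blocking-flow phases, giving $\apxO{n_r p_r}$ time.

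Finally, I would balance the two choices by a threshold argument. Let $k = |R|$ and pick a parameter $\alpha > 0$. Use blocking flows on subproblems with $n_r \le \alpha n / k$ and the fast maxflow on the remaining at most $k/\alpha$ subproblems (by averaging, since $\sum_r n_r \le n$). This yields total running time
\begin{align*}
  \apxO{ p + \frac{\alpha n}{k} \sum_r p_r + \frac{k}{\alpha} (m+n)^{1.5}} = \apxO{p + \frac{\alpha n p}{k} + \frac{k}{\alpha}(m+n)^{1.5}}.
\end{align*}
Choosing $\alpha = k (m+n)^{3/4} / \sqrt{np}$ balances the two variable terms to $\apxO{\sqrt{np(m+n)^{1.5}}}$, which dominates the additive $p$. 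The main obstacle I anticipate is verifying carefully the $\bigO{n_r}$ bound on residual augmenting-path length in the hypergraph auxiliary digraph; once that structural claim is established, the blocking-flow analysis from \reftheorem{iso-ec-3} transfers almost verbatim, and the averaging calculation above closes out the bound.
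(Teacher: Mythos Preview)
Your proposal is correct and follows essentially the same approach as the paper: reduce to the isolating-cut subproblems, solve each either by the $\apxO{p_r + (m+n)^{1.5}}$ flow of \cite{brand++} or by blocking flows with the $\bigO{n_r}$ augmenting-path-length bound, and balance via the same threshold $\alpha = k(m+n)^{3/4}/\sqrt{np}$. The only minor omission is that you do not explicitly bound the first-phase cost of the $\bigO{\log k}$ oracle calls, but the paper handles this identically by observing $\min\{pn,\, p+(m+n)^{1.5}\} \le \sqrt{pn(m+n)^{1.5}}$.
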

\begin{proof}
  The approach is similar to the algorithm for element isolating cuts
that had a running time of $\apxO{\sqrt{m} n^{5/4}}$, and we restrict
ourselves to a sketch. Let $k = \sizeof{R}$. Note that a single
$(S,T)$-hypergraph cut can be computed in $\apxO{p + (m+n)^{1.5}}$ by
reduction to edge capacitated flow, or in $\apxO{p n}$ time by
blocking flows (where we observe that every augmenting path has length
$\leq \bigO{n}$).  With $\bigO{\log k}$ calls to such a subroutine to
$(S,T)$-hypergraph cut (the first phase), the bisubmodular crossing
framework produces vertex disjoint sets $U_r \subset V$ such that
$r \in U_r$ and the minimum $(r,R-r)$-hypergraph cut is induced by a
subset of $U_r$. Each $(r,R-r)$-cut problem can be solved by an
$(s,t)$-hypergraph cut problem where we contract all vertices in
$V - U_r$ to a sink node $t$, and use $r$ as the source $r$. Let
$n_r = \sizeof{U_r} + 1$ be the number of nodes in this auxiliary
graph, let $m_r$ be the number of edges, and let $p_r$ be the total
size. This sub-problem can be solved in either
$\apxO{p_r + (m_r+n_r)^{1.5}}$ by edge capacitated flow or
$\apxO{p_r n_r}$ time by running blocking flows (as observed
above). We do not have a good bound on $\sum_r m_r$ since a hyperedge
$e$ may intersect many sets $U_r, r \in R$, hence we simply use the
trivial bound that $m_r \le m$ for all $r$.

Let $\alpha > 0$ be a parameter to be determined. We run the
blocking flow approach if $n_r \leq \alpha n / k$ and the edge
capacitated flow approach if $n_r > \alpha n / k$.  The total time
for the blocking flow computations is $\apxO( (\sum_r p_r)\alpha
n/k) = \apxO(\alpha n p/k)$ since $\sum_r p_r = O(p)$.  Since
$\sum_r n_r \leq n$, we run the edge capacitated flow approach for
at most $(k/\alpha)$ choices of $r$ and hence the total time for
these computation is $\apxO(p + (m+n)^{1.5} k/\alpha)$.
By choosing $\alpha = k \sqrt{(m+n)^{1.5}/pn}$, this leads
to a total running time of $\apxO(\sqrt{p n (m+n)^{1.5}})$ for the second
phase (noting that $p \leq m n$).

The overall running time over the two phases is therefore
\begin{align*}
  \apxO{\sqrt{p n (m+n)^{1.5}} + \min{p n, p +  (m+n)^{1.5}}} \leq
  \apxO{\sqrt{p n (m+n)^{1.5}}}.
\end{align*}
\end{proof}

\begin{theorem}
  Suppose $\ectime{m}{n} \leq \apxO{m^{1+\alpha} \beta}$ for fixed
  $\alpha, \beta > 0$. Then minimum isolating cuts can be computed in
  $\apxO{p (m+n)^{\frac{3 \alpha}{2 (1 + \alpha)}} \beta^{\frac{1}{1 +
        \alpha}} + p + (m+n)^{1.5}}$.
\end{theorem}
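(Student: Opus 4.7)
The plan is to reuse the isolating-cut decomposition from the previous hypergraph theorem, but in the second phase to trade off between the assumed flow algorithm and the \cite{brand++} algorithm rather than between \cite{brand++} and blocking flows. In the first phase I would invoke \reflemma{isolating-cut-partition} using a hypergraph $(S,T)$-cut oracle implemented by reducing to edge-capacitated $s$-$t$ flow on the standard auxiliary graph with $p$ edges and $m+n$ vertices, solved in $\apxO{p + (m+n)^{1.5}}$ time via \cite{brand++}. The $\bigO{\log k}$ oracle calls collectively contribute the additive $p + (m+n)^{1.5}$ term in the claimed running time, and the output is a family of vertex-disjoint sets $U_r \ni r$ such that the minimum $(r, R - r)$-hypergraph cut is supported inside $U_r$.

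For the second phase, for each $r \in R$ I would contract $V \setminus U_r$ into a single sink vertex to obtain a sub-instance $G_r$ of total hyperedge size $p_r$, inheriting the crucial invariant $\sum_{r \in R} p_r = \bigO{p}$ from the previous theorem (a consequence of the disjointness of the $U_r$). Each sub-instance admits two choices for the minimum $(r,\text{sink})$-cut: option (A), the assumed algorithm in $\apxO{p_r^{1+\alpha}\beta}$ time, or option (B), \cite{brand++} in $\apxO{p_r + (m+n)^{1.5}}$ time. I would introduce a threshold $\gamma > 0$, use option (A) whenever $p_r < \gamma p / k$, and use option (B) otherwise. By convexity of $x \mapsto x^{1+\alpha}$ combined with $p_r \leq \gamma p / k$ and $\sum_r p_r = \bigO{p}$, the (A) sub-instances contribute in total $\apxO{\beta \cdot p \cdot (\gamma p / k)^{\alpha}} = \apxO{p^{1+\alpha} \gamma^{\alpha} \beta / k^{\alpha}}$ time. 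Since at most $k/\gamma$ sub-instances can satisfy $p_r \geq \gamma p / k$, the (B) sub-instances contribute $\apxO{p + (k/\gamma)(m+n)^{1.5}}$ time in total.

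Balancing the two terms by choosing $\gamma = k \, (m+n)^{3/(2(1+\alpha))} / \parof{p \, \beta^{1/(1+\alpha)}}$ yields the target second-phase cost $\apxO{p \, (m+n)^{3\alpha/(2(1+\alpha))} \beta^{1/(1+\alpha)}}$, to which we add the first-phase cost. The conceptual step is simply to notice that the isolating-cut framework in the hypergraph setting exposes $\sum_r p_r = \bigO{p}$ (but no analogous control on $\sum_r m_r$), which is exactly the handle that makes the convexity argument go through; the balancing algebra itself is the routine but easy-to-slip-on step that I would take some care with. Edge cases in which the optimal $\gamma$ falls outside a useful range correspond to regimes where one of the two strategies dominates globally and can be dispatched by taking the minimum over the two pure strategies as a safety net.
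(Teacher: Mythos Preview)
Your proposal is correct and essentially identical to the paper's proof: both run the first phase via \cite{brand++} on the auxiliary graph, then in the second phase threshold on $p_r$ to choose between the assumed $p_r^{1+\alpha}\beta$ algorithm (small $p_r$) and \cite{brand++} (large $p_r$), bound the small side by convexity using $\sum_r p_r = \bigO{p}$, bound the large side by counting, and balance. The only cosmetic difference is that the paper parameterizes the threshold directly as $\ell$ while you write it as $\gamma p/k$; setting $\ell = \gamma p/k$ makes the two arguments line up exactly.
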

\begin{proof}
  The approach is similar to the algorithm for element isolating cuts
  that obtained a running time of
  $\apxO{m^{1+o(1)}n^{3/8} U^{1/4 } + n^{1.5}}$, and we restrict
  ourselves to a sketch. Let $k = \sizeof{R}$. Note that a single
  $(S,T)$-hypergraph cut can be computed in $\apxO{p + (m+n)^{1.5}}$
  by reduction to edge capacitated flow.  With $\bigO{\log k}$ calls
  to a subroutine to $(S,T)$-hypergraph cut, the bisubmodular crossing
  framework produces vertex disjoint sets $U_r \subset V$ such that
  $r \in U_r$ and the minimum $(r,R-r)$-hypergraph cut is induced by a
  subset of $U_r$. Each $(r,R-r)$-cut problem can be solved by an
  $(s,t)$-hypergraph cut problem where we contract all vertices in
  $V - U_r$ to a sink node $t$, and use $r$ as the source $r$. Let
  $n_r = \sizeof{U_r} + 1$ be the number of nodes in this auxiliary
  graph, let $m_r$ be the number of edges, and let $p_r$ be the total
  size. This sub-problem can be solved in either
  $\apxO{p_r + (m_r+n_r)^{1.5}}$ by reduction to edge capacitated flow
  or $\apxO{p_r^{1+\alpha} \beta}$ time by reducing to edge
  capacitated flow.

  Let $\ell > 0$ be a parameter to be determined. For each subproblem
  for $r \in R$, we run edge capacitated flow algorithm if
  $p_r \leq \ell$ and the edge capacitated flow algorithm if
  $p_r \geq \ell$. Since $\sum_r p_r \leq p$, we run the edge
  capacitated flow algorithm for at most $p/\ell$ choices of $r$. For
  $\ectime{m,n} = m^{1+\alpha} \beta$, the total running time on edge
  capacitated flow is maximized by $(p/\ell)$ subgraphs with $\ell$
  edges each. This gives a total running time of
  $(p / \ell) \ell^{1+\alpha} \beta$ for this class of
  subproblems. Choosing $\ell$ to balance terms leads to the desired
  running time.
\end{proof}


\printbibliography

@article{hao-orlin,
  author =       {Jianxiu Hao and James B. Orlin},
  title =        {A Faster Algorithm for Finding the Minimum Cut in a
                  Directed Graph},
  journal =      {J. Algorithms},
  volume =       17,
  number =       3,
  pages =        {424--446},
  year =         1994,
  addendum =     {Preliminary version in SODA, 1992}
}

@inproceedings{lee-sidford,
  author    = {Yin Tat Lee and
               Aaron Sidford},
  title     = {Path Finding Methods for Linear Programming: Solving Linear Programs
               in $\tilde{O}(\sqrt{\text{rank}})$ Iterations and Faster Algorithms for Maximum Flow},
  booktitle = {55th {IEEE} Annual Symposium on Foundations of Computer Science, {FOCS}
               2014, Philadelphia, PA, USA, October 18-21, 2014},
  pages     = {424--433},
  publisher = {{IEEE} Computer Society},
  year      = 2014,
}

@inproceedings{liu-sidford-20-a,
  author    = {Yang P. Liu and
               Aaron Sidford},
  editor    = {Konstantin Makarychev and
               Yury Makarychev and
               Madhur Tulsiani and
               Gautam Kamath and
               Julia Chuzhoy},
  title     = {Faster energy maximization for faster maximum flow},
  booktitle = {Proccedings of the 52nd Annual {ACM} {SIGACT} Symposium on Theory
               of Computing, {STOC} 2020, Chicago, IL, USA, June 22-26, 2020},
  pages     = {803--814},
  publisher = {{ACM}},
  year      = 2020,
}

@article{liu-sidford-20-b,
  author    = {Yang P. Liu and
               Aaron Sidford},
  title     = {Faster Divergence Maximization for Faster Maximum Flow},
  journal   = {CoRR},
  volume    = {abs/2003.08929},
  year      = {2020},
  url       = {https://arxiv.org/abs/2003.08929},
  archivePrefix = {arXiv},
  eprint    = {2003.08929},
  addendum = {To appear in FOCS, 2020}
}

@article{nagamochi-ibaraki,
  author    = {Hiroshi Nagamochi and
               Toshihide Ibaraki},
  title     = {A Linear-Time Algorithm for Finding a Sparse $k$-Connected Spanning
               Subgraph of a $k$-Connected Graph},
  journal   = {Algorithmica},
  volume    = 7,
  number    = {5{\&}6},
  pages     = {583--596},
  year      = 1992,
}

@inproceedings{li-panigrahi,
  author    = {Jason Li and  Debmalya Panigrahi},
  booktitle = {Proceedings of 61st IEEE Annual Symposium on Foundations of Computer Science, {FOCS}},  
  year      = {2020},
  note ={To appear. Available at \url{http://cs.cmu.edu/~jmli/papers/deterministic-mincut-in-polylogarithmic-conf.pdf}}
}

@inproceedings{crx,
  author    = {Chandra Chekuri and
               Thapanapong Rukkanchanunt and
               Chao Xu},
  editor    = {Nikhil Bansal and
               Irene Finocchi},
  title     = {On Element-Connectivity Preserving Graph Simplification},
  booktitle = {Algorithms - {ESA} 2015 - 23rd Annual European Symposium, Patras,
               Greece, September 14-16, 2015, Proceedings},
  series    = {Lecture Notes in Computer Science},
  volume    = 9294,
  pages     = {313--324},
  publisher = {Springer},
  year      = 2015,
  url = {http://chekuri.cs.illinois.edu/papers/elem-connectivity-esa.pdf}
}

@article{fujishige-iwata,
  author    = {Satoru Fujishige and
               Satoru Iwata},
  title     = {Bisubmodular Function Minimization},
  journal   = {{SIAM} J. Discret. Math.},
  volume    = 19,
  number    = 4,
  pages     = {1065--1073},
  year      = 2005,
  addendum = {Preliminary version in IPCO, 2001}
}

@article{afn-96,
  author    = {Kazutoshi Ando and
               Satoru Fujishige and
               Takeshi Naitoh},
  title     = {A characterization of bisubmodular functions},
  journal   = {Discret. Math.},
  volume    = 148,
  number    = {1-3},
  pages     = {299--303},
  year      = 1996,
}

@article{af-96,
  author    = {Kazutoshi Ando and
               Satoru Fujishige},
  title     = {On structures of bisubmodular polyhedra},
  journal   = {Math. Program.},
  volume    = 74,
  pages     = {293--317},
  year      = 1996,
}

@article{bouchet-87,
  author    = {Andr{\'{e}} Bouchet},
  title     = {Greedy algorithm and symmetric matroids},
  journal   = {Math. Program.},
  volume    = 38,
  number    = 2,
  pages     = {147--159},
  year      = 1987,
}

@Unpublished{chekuri-survey,
  author =       {Chandra Chekuri},
  title =        {Some Open Problems in Element Connectivity},
  note =         {\url{http://chekuri.cs.illinois.edu/papers/elem-connectivity-open-probs.pdf}},
  month =     {September},
  year =      2015}

@article{brand+,
  author    = {Jan van den Brand and
               Yin Tat Lee and
               Danupon Nanongkai and
               Richard Peng and
               Thatchaphol Saranurak and
               Aaron Sidford and
               Zhao Song and
               Di Wang},
  title     = {Bipartite Matching in Nearly-linear Time on Moderately Dense Graphs},
  journal   = {CoRR},
  volume    = {abs/2009.01802},
  year      = {2020},
  url       = {https://arxiv.org/abs/2009.01802},
  archivePrefix = {arXiv},
  eprint    = {2009.01802},
}

@article{brand++,
  author    = {Jan van den Brand and
               Yin Tat Lee and
               Yang P. Liu and
               Thatchaphol Saranurak and
               Aaron Sidford and
               Zhao Song and
               Di Wang},
  title     = {Minimum Cost Flows, MDPs, and $\ell_1$-Regression
               in Nearly Linear Time for Dense Instances},
  journal   = {CoRR},
  volume    = {abs/2101.05719},
  year      = 2021,
  url       = {https://arxiv.org/abs/2101.05719},
  archivePrefix = {arXiv},
  eprint    = {2101.05719},
}

@inproceedings{fnsyy,
  author    = {Sebastian Forster and
               Danupon Nanongkai and
               Liu Yang and
               Thatchaphol Saranurak and
               Sorrachai Yingchareonthawornchai},
  editor    = {Shuchi Chawla},
  title     = {Computing and Testing Small Connectivity in Near-Linear Time and Queries
               via Fast Local Cut Algorithms},
  booktitle = {Proceedings of the 2020 {ACM-SIAM} Symposium on Discrete Algorithms,
               {SODA} 2020, Salt Lake City, UT, USA, January 5-8, 2020},
  pages     = {2046--2065},
  publisher = {{SIAM}},
  year      = 2020,
}

@book{schrijver,
  author = {Schrijver, Alexander},
  publisher = {Springer},
  title = {Combinatorial Optimization - Polyhedra and Efficiency},
  year = 2003
}

@inproceedings{madry-16,
  author    = {Aleksander Madry},
  editor    = {Irit Dinur},
  title     = {Computing Maximum Flow with Augmenting Electrical Flows},
  booktitle = {{IEEE} 57th Annual Symposium on Foundations of Computer Science, {FOCS}
               2016, 9-11 October 2016, Hyatt Regency, New Brunswick, New Jersey,
               {USA}},
  pages     = {593--602},
  publisher = {{IEEE} Computer Society},
  year      = 2016,
}

@article{Karger00,
  title={Minimum cuts in near-linear time},
  author={Karger, David R},
  journal={Journal of the ACM (JACM)},
  volume={47},
  number={1},
  pages={46--76},
  year={2000},
  publisher={ACM New York, NY, USA}
}

@article{StoerW97,
  title={A simple min-cut algorithm},
  author={Stoer, Mechthild and Wagner, Frank},
  journal={Journal of the ACM (JACM)},
  volume={44},
  number={4},
  pages={585--591},
  year={1997},
  publisher={ACM New York, NY, USA}
}

@inproceedings{MN20,
  author    = {Sagnik Mukhopadhyay and
               Danupon Nanongkai},
  editor    = {Konstantin Makarychev and
               Yury Makarychev and
               Madhur Tulsiani and
               Gautam Kamath and
               Julia Chuzhoy},
  title     = {Weighted min-cut: sequential, cut-query, and streaming algorithms},
  booktitle = {Proccedings of the 52nd Annual {ACM} {SIGACT} Symposium on Theory
               of Computing, {STOC} 2020, Chicago, IL, USA, June 22-26, 2020},
  pages     = {496--509},
  publisher = {{ACM}},
  year      = 2020,
}

@inproceedings{gmw-20,
  author    = {Pawel Gawrychowski and
               Shay Mozes and
               Oren Weimann},
  editor    = {Artur Czumaj and
               Anuj Dawar and
               Emanuela Merelli},
  title     = {Minimum Cut in $(m \log^2 n)$ Time},
  booktitle = {47th International Colloquium on Automata, Languages, and Programming,
               {ICALP} 2020, July 8-11, 2020, Saarbr{\"{u}}cken, Germany (Virtual
               Conference)},
  series    = {LIPIcs},
  volume    = {168},
  pages     = {57:1--57:15},
  publisher = {Schloss Dagstuhl - Leibniz-Zentrum f{\"{u}}r Informatik},
  year      = {2020},
}

@inproceedings{nsy-19,
  title={Breaking quadratic time for small vertex connectivity and an approximation scheme},
  author={Nanongkai, Danupon and Saranurak, Thatchaphol and Yingchareonthawornchai, Sorrachai},
  booktitle={Proceedings of the 51st Annual ACM SIGACT Symposium on Theory of Computing},
  pages={241--252},
  year={2019}
}

@article{Jainetal02,
  title={A primal--dual schema based approximation algorithm for the element connectivity problem},
  author={Jain, Kamal and M{\u{a}}ndoiu, Ion and Vazirani, Vijay V and Williamson, David P},
  journal={Journal of Algorithms},
  volume={45},
  number={1},
  pages={1--15},
  year={2002},
  publisher={Elsevier}
}

@inproceedings{ChuzhoyK09,
  title={An $O (k^3 \log n)$-Approximation Algorithm for
                  Vertex-Connectivity Survivable Network Design},
  author={Chuzhoy, Julia and Khanna, Sanjeev},
  booktitle={2009 50th Annual IEEE Symposium on Foundations of Computer Science},
  pages={437--441},
  year={2009},
  organization={IEEE}
}

@article{AazamiCJ12,
  title={Approximation algorithms and hardness results for packing element-disjoint Steiner trees in planar graphs},
  author={Aazami, Ashkan and Cheriyan, Joseph and Jampani, Krishnam Raju},
  journal={Algorithmica},
  volume={63},
  number={1-2},
  pages={425--456},
  year={2012},
  publisher={Springer}
}

@article{ChekuriK14,
  title={A graph reduction step preserving element-connectivity and packing steiner trees and forests},
  author={Chekuri, Chandra and Korula, Nitish},
  journal={SIAM Journal on Discrete Mathematics},
  volume={28},
  number={2},
  pages={577--597},
  year={2014},
  publisher={SIAM}
}

@article{GuptaK11-survey,
  title={Approximation algorithms for network design: A survey},
  author={Gupta, Anupam and K{\"o}nemann, Jochen},
  journal={Surveys in Operations Research and Management Science},
  volume={16},
  number={1},
  pages={3--20},
  year={2011},
  publisher={Elsevier}
}

@inproceedings{KortsarzN10-survey,
  title={Approximating minimum cost connectivity problems},
  author={Kortsarz, Guy and Nutov, Zeev},
  booktitle={Dagstuhl Seminar Proceedings},
  year={2010},
  organization={Schloss Dagstuhl-Leibniz-Zentrum fuer Informatik}
}

@article{hgr,
  author    = {Monika Rauch Henzinger and
               Satish Rao and
               Harold N. Gabow},
  title     = {Computing Vertex Connectivity: New Bounds from Old Techniques},
  journal   = {J. Algorithms},
  volume    = 34,
  number    = 2,
  pages     = {222--250},
  year      = 2000,
  addendum = {Preliminary version in FOCS, 1996}
}

@article{llw,
  author    = {Nathan Linial and
               L{\'{a}}szl{\'{o}} Lov{\'{a}}sz and
               Avi Wigderson},
  title     = {Rubber bands, convex embeddings and graph connectivity},
  journal   = {Comb.},
  volume    = 8,
  number    = 1,
  pages     = {91--102},
  year      = 1988,
  addendum = {Preliminary version in FOCS, 1986}
}

@article{gabow-06,
  author    = {Harold N. Gabow},
  title     = {Using expander graphs to find vertex connectivity},
  journal   = {J. {ACM}},
  volume    = 53,
  number    = 5,
  pages     = {800--844},
  year      = 2006,
  addendum = {Preliminary version in FOCS, 2000}}

@misc{JueK19,
      title={A near-linear time minimum Steiner cut algorithm for planar graphs}, 
      author={Stephen Jue and Philip N. Klein},
      year={2019},
      eprint={1912.11103},
      archivePrefix={arXiv},
      primaryClass={cs.DS}
}

@inproceedings{HariharanTPB07,
  title={An $\tilde{O}(mn)$ Gomory-Hu tree construction algorithm for unweighted graphs},
  author={Hariharan, Ramesh and Kavitha, Telikepalli and Panigrahi, Debmalya and Bhalgat, Anand},
  booktitle={Proceedings of the thirty-ninth annual ACM symposium on Theory of computing},
  pages={605--614},
  year={2007}
}

@article{CheriyanVV06,
  title={Network design via iterative rounding of setpair relaxations},
  author={Cheriyan, Joseph and Vempala, Santosh and Vetta, Adrian},
  journal={Combinatorica},
  volume={26},
  number={3},
  pages={255--275},
  year={2006},
  publisher={Springer}
}

@article{FleischerJW06,
  title={Iterative rounding 2-approximation algorithms for minimum-cost vertex connectivity problems},
  author={Fleischer, Lisa and Jain, Kamal and Williamson, David P},
  journal={Journal of Computer and System Sciences},
  volume={72},
  number={5},
  pages={838--867},
  year={2006},
  publisher={Elsevier}
}

@article{CheriyanS07,
  title={Packing element-disjoint steiner trees},
  author={Cheriyan, Joseph and Salavatipour, Mohammad R},
  journal={ACM Transactions on Algorithms (TALG)},
  volume={3},
  number={4},
  pages={47--es},
  year={2007},
  publisher={ACM New York, NY, USA}
}

@article{CalinescuCV09,
  title={Disjoint bases in a polymatroid},
  author={C{\u{a}}linescu, Gruia and Chekuri, Chandra and Vondr{\'a}k, Jan},
  journal={Random Structures \& Algorithms},
  volume={35},
  number={4},
  pages={418--430},
  year={2009},
  publisher={Wiley Online Library}
}

@inproceedings{lsw-15,
  title={A faster cutting plane method and its implications for combinatorial and convex optimization},
  author={Lee, Yin Tat and Sidford, Aaron and Wong, Sam Chiu-wai},
  booktitle={2015 IEEE 56th Annual Symposium on Foundations of Computer Science},
  pages={1049--1065},
  year={2015},
  organization={IEEE}
}

@article{q-98,
  title={Minimizing symmetric submodular functions},
  author={Queyranne, Maurice},
  journal={Mathematical Programming},
  volume={82},
  number={1-2},
  pages={3--12},
  year={1998},
  publisher={Springer}
}

@techreport{KlimmekW96,
  author = {Klimmek, Regina and Wagner, Frank},
  title = {{A simple hypergraph min cut algorithm}},
  number = {B 96-02},
  institution = {Bericht FU Berlin Fachbereich Mathematik und Informatik},
  note ={Available at \url{http://edocs.fu-berlin.de/docs/servlets/MCRFileNodeServlet/FUDOCS_derivate_000000000297/1996_02.pdf}},
  year = {1996}
}

@article{MakW00,
  title = "A fast hypergraph min-cut algorithm for circuit partitioning ",
  journal = "Integration, the VLSI Journal ",
  volume = "30",
  number = "1",
  pages = "1 - 11",
  year = "2000",
  note = "",
  issn = "0167-9260",
  doi = "10.1016/S0167-9260(00)00008-0",
  author = "Wai-Kei Mak and D.F. Wong",
}

@inproceedings{FoxPZ19,
  title={Minimum cut and minimum k-Cut in hypergraphs via branching contractions},
  author={Fox, Kyle and Panigrahi, Debmalya and Zhang, Fred},
  booktitle={Proceedings of the Thirtieth Annual ACM-SIAM Symposium on Discrete Algorithms},
  pages={881--896},
  year={2019},
  organization={SIAM}
}

@article{cx-18,
  title={Minimum cuts and sparsification in hypergraphs},
  author={Chekuri, Chandra and Xu, Chao},
  journal={SIAM Journal on Computing},
  volume={47},
  number={6},
  pages={2118--2156},
  year={2018},
  publisher={SIAM}
}

@article{h-97,
  title={A static 2-approximation algorithm for vertex connectivity and incremental approximation algorithms for edge and vertex connectivity},
  author={Henzinger, Monika Rauch},
  journal={Journal of Algorithms},
  volume={24},
  number={1},
  pages={194--220},
  year={1997},
  publisher={Elsevier}
}

@article{goldberg-tarjan-90,
  author    = {Andrew V. Goldberg and
               Robert E. Tarjan},
  title     = {Finding Minimum-Cost Circulations by Successive Approximation},
  journal   = {Math. Oper. Res.},
  volume    = 15,
  number    = 3,
  pages     = {430--466},
  year      = 1990,
  addendum = {Preliminary version in STOC, 1988}
}

\appendix

\section{Isolating Cuts, Bisubmodular Functions, and
  Lattices (expanded)}
\labelsection{bisubmod}

This is an expanded version of \refsection{bisubmod-short} regarding
the bisubmodular function and uncrossing lattice framework. This
version more formally structures the definitions, includes missing
proofs, and has several examples to make the description accessible
and self-contained.  A reader familiar with the background can skip
several parts.

We note that when the same lemma or theorem from
\refsection{bisubmod-short} appears here, we assign it the same number
as given in \refsection{bisubmod-short}.

\subsection{Lattices over Set Pairs}

\begin{restatable}{definition}{SetPair}
  Let $V$ be a finite set of elements. An ordered pair $(A,B) \in 2^V
  \times 2^V$ is a \defterm{set-pair}  over $V$.
\end{restatable}

\begin{restatable}{definition}{CrossingLattice}
  Let $V$ be a finite set of elements, and let $\defV$ be a family of
  set-pairs over $V$. We say that $\V$ is a \defterm{crossing
    lattice}\footnote{This notion is analogous to the definition of a
    crossing family of sets.}  over $V$ if it is closed under the
  following two operators.
  \begin{gather*}
    (X_1,Y_1) \lor (X_2,Y_2) = (X_1 \cup X_2, Y_1 \cap Y_2). \\
    (X_1,Y_1) \land (X_2,Y_2) = (X_1 \cap X_2, Y_1 \cup Y_2).
  \end{gather*}
  If $\V$ is closed under these operations, then $\V$ is a lattice
  under the partial order
  \begin{align*}
    (X_1,Y_1) \preceq (X_2,Y_2) \iff X_1 \subseteq X_2 \andcomma Y_2
    \subseteq Y_1.
  \end{align*}
  The binary operator $\lor$ returns the unique least upper bound of
  its arguments (a.k.a.\ the \defterm{meet}) and the binary operator
  $\land$ returns the unique greatest lower bound of its arguments
  (a.k.a.\ the \defterm{join}).
\end{restatable}

\begin{restatable}{definition}{SymmetricLattice}
  For a pair of sets $(X,Y) \in \subsetsof{V} \times \subsetsof{V}$,
  the \defterm{transpose} of $(X,Y)$, denoted $(X,Y)^T$, is the
  reversed pair of sets $(X,Y)^T \defeq (Y,X)$. A crossing lattice
  $\defV$ is \defterm{symmetric} if is closed under taking the
  transpose.
\end{restatable}

We have the following identities relating the transpose with the
lattice operations $\lor$ and $\land$.
\begin{restatable}{lemma}{SymmetricLatticeOperations}
  Let $\defV$ be a symmetric crossing lattice and let $\X, \Y \in
  \V$. Then we have the following.
  \begin{align*}
    \parof{\X^T}^T &= \X. \\
    \parof{\X \lor \Y}^T &= \X^T \land \Y^T. \\
    \parof{\X \land \Y}^T &= \X^T \lor \Y^T.
  \end{align*}
\end{restatable}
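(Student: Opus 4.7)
The plan is to verify each of the three identities by a direct unfolding of the definitions: the transpose $(X,Y)^T = (Y,X)$ and the two lattice operations $(X_1,Y_1) \lor (X_2,Y_2) = (X_1 \cup X_2, Y_1 \cap Y_2)$ and $(X_1,Y_1) \land (X_2,Y_2) = (X_1 \cap X_2, Y_1 \cup Y_2)$. Symmetry of $\V$ is used only to guarantee that $\X^T$, $\Y^T$, and the various meets/joins built from them lie in $\V$, so that each expression in the statement is a well-defined element of $\V$; no partial-order manipulation is needed. The first identity, $(\X^T)^T = \X$, then reduces to observing that swapping the two coordinates of $\X = (X,Y)$ twice recovers $(X,Y)$.

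For the second identity I would set $\X = (X_1,Y_1)$ and $\Y = (X_2,Y_2)$, so $\X \lor \Y = (X_1 \cup X_2,\, Y_1 \cap Y_2)$ and transposing gives $(\X \lor \Y)^T = (Y_1 \cap Y_2,\, X_1 \cup X_2)$. On the other hand, $\X^T = (Y_1, X_1)$ and $\Y^T = (Y_2, X_2)$, so the definition of $\land$ yields $\X^T \land \Y^T = (Y_1 \cap Y_2,\, X_1 \cup X_2)$, matching the left-hand side. The third identity can either be obtained by the same calculation with the roles of $\lor$ and $\land$ interchanged, or, more economically, by applying the second identity to $\X^T$ and $\Y^T$ in place of $\X$ and $\Y$ and then transposing both sides, invoking the first identity to remove the outer double transpose.

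There is no real obstacle here beyond bookkeeping: the content of the lemma is a De Morgan-style duality between $\cup$ and $\cap$ under coordinate swap, which is automatic once the definitions are unfolded. The only subtlety worth flagging explicitly in the writeup is the appeal to symmetry of $\V$ to ensure the transposed expressions belong to $\V$, so that the identities are statements about elements of the lattice rather than merely about set-pairs.
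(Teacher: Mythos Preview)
Your proposal is correct and matches the paper's treatment: the paper states this lemma without proof (in the main text it is phrased merely as ``Observe that \ldots''), leaving it as an immediate consequence of unfolding the definitions of transpose, $\lor$, and $\land$, which is exactly what you do. Your additional remark about symmetry of $\V$ ensuring well-definedness is a nice touch but not something the paper bothers to spell out.
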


\begin{restatable}{definition}{PairwiseDisjoint}
  A crossing lattice $\defV$ is \defterm{pairwise disjoint} if
  $X \cap Y = \emptyset$ for all $(X,Y) \in \V$.
\end{restatable}

\begin{restatable}{example}{Trivial}
  \labelexample{all-sets} Let $V$ be a set. The family of all
  set-pairs, $\V = \subsetsof{V} \times \subsetsof{V}$, is closed
  under $\lor$ and $\land$, and symmetric.
\end{restatable}

\begin{restatable}{example}{DisjointSets}
  \labelexample{disjoint-sets}
  Let $V$ be a set. The family of all disjoint set-pairs,
  \begin{align*}
    \V = \setof{(X,Y) \where X, Y \subseteq V \text{ and } X \cap Y = \emptyset},
  \end{align*}
  is closed under $\lor$ and $\land$, symmetric, and pairwise disjoint.
\end{restatable}

\begin{restatable}{example}{Bipartitions}
  \labelexample{bipartitions}
  Let $V$ be a set. The family of all bi-partitions of $V$,
  \begin{align*}
    \V = \setof{(X,V \setminus X) \where X \subseteq V}
  \end{align*}
  is closed under $\lor$ and $\land$, symmetric, and pairwise
  disjoint.
\end{restatable}

\begin{restatable}{example}{VertexCutSides}
  \labelexample{vertex-cut-sides}
  Let $G = (V,E)$ be an undirected graph. The family
  \begin{align*}
    \V = \setof{                                 %
    (X,Y) \in \subsetsof{V} \times \subsetsof{V} %
    \where                                                   %
    Y \cap \parof{X \cup N(X)} = \emptyset %
    \andcomma                            %
    X \cap \parof{Y\cup N(Y)} = \emptyset %
    },
  \end{align*}
  which describes pairs of disjoint vertex sets with no edge between
  them, forms a crossing lattice over $V$ that is symmetric and
  pairwise disjoint. Indeed, suppose $(X_1,Y_1)$ be a pair of disjoint
  vertex sets with no edge between them. Let $(X_2,Y_2)$ be a second
  pair of disjoint vertex sets with no edge between them. Then
  $X_1 \cup X_2$ and $Y_1 \cap Y_2$ are vertex disjoint, and have no
  edge between them. Thus $(X_1,Y_1) \lor (X_2,Y_2) \in \V$. By
  symmetry, $(X_1,Y_1) \land (X_2,Y_2) \in \V$ as well.
\end{restatable}

\begin{restatable}{example}{VertexCutSidesR}
  \labelexample{vertex-cut-sides-R}
  Let $G = (V,E)$ be an undirected graph, and let $R \subset V$ be a
  fixed set of vertices. The family
  \begin{align*}
    \V = \setof{(X,Y) \in \subsetsof{V} \times \subsetsof{V} \where Y
    \cap \parof{X \cup N(X)} = \emptyset \andcomma X \cap \parof{Y
    \cup N(X)} = \emptyset \andcomma R \subseteq X \cup Y},
  \end{align*}
  which describes pairs of disjoint vertex sets that (a) have no edge
  between them and (b) cover $R$, forms a crossing lattice over $V$ that
  is symmetric and pairwise disjoint. Indeed, it is easy to see that
  $\V$ is symmetric and pairwise disjoint. Let
  $(X_1,Y_1), (X_2,Y_2) \in \V$. As discussed in
  \refexample{vertex-cut-sides} above, $X_1 \cup X_2$ and
  $Y_1 \cap Y_2$ are disjoint and have no edge between
  them. Additionally, if $R \subseteq X_1 \cup Y_1$ and
  $R \subseteq X_2 \cup Y_2$, then
  $R \subseteq X_1 \cup X_2 \cup (Y_1 \cap Y_2)$. Thus
  $(X_1,Y_1) \lor (X_2,Y_2) \in \V$. By symmetry
  $(X_1,Y_1) \land (X_2,Y_2) \in \V$ as well.
\end{restatable}

\subsection{Cuts in crossing lattices}

In this section, we define an abstract, lattice-based notion of cuts
that unifies the various different families of cuts of interest in
graphs.

\begin{restatable}{definition}{}
  Let $V$ be a set.  For two set-pairs
  $\S = (S,T) \in \subsetsof{V} \times \subsetsof{V}$
  and $\X = (X,Y) \in \subsetsof{V} \times \subsetsof{V}$, we denote
  \begin{align*}
    \S \subseteq \X \defiff S \subseteq X \andcomma T \subseteq Y.
  \end{align*}
  If $\S \subseteq \X$, then we say that $\X$ \defterm{cuts} $\S$ or
  that $\X$ is an \defterm{$\S$-cut}. If $\V$ is a crossing lattice over
  $V$, $R \subset V$ is a subset, and $\R$ is a crossing lattice over
  $R$, then we say that $\V$ \defterm{separates} $\R$ if for every
  $\S \in \R$, there is an $\S$-cut $\X \in \V$.
\end{restatable}

\begin{restatable}{example}{}
  Let $G = (V,E)$ be a graph, let
  $\V= \setof{(X, V \setminus X) \where X \subseteq V}$ be the lattice of
  bi-partitions over $V$ (\refexample{bipartitions}) and let
  $\R = \setof{(S, T) \where S,T \subseteq R \andcomma S \cap T =
    \emptyset}$ be the lattice of disjoint sets of $R$
  (\refexample{disjoint-sets}). For $\S = (S,T) \in \R$, and
  $\X = (X,Y) \in \V$, $\X$ is an $\S$-cut iff $(X,Y)$ are opposite
  sides of an $(S,T)$-edge-cut in the usual graphical sense. The
  crossing lattice $\V$ separates $\R$.
\end{restatable}

\begin{restatable}{example}{}
  Let $G = (V,E)$ be a graph, and let $R \subset V$ be an independent
  set of vertices. Let $\V$ be the lattice of opposite sides of vertex
  cuts excluding $R$ (\refexample{vertex-cut-sides-R}).  For
  $\S = (S,T) \in \R$, and $\X = (X,Y) \in \V$, $\X$ is an $\S$-cut if
  $S \subseteq X$ and $T \subseteq Y$. By definition of $\V$, $(X,Y)$
  have no edge between them, and $V \setminus (X \cup Y)$ gives a
  vertex cut between $S$ and $T$. Conversely, given $S = (S,T) \in \R$
  and a vertex cut $W \subset V$ separating $S$ from $T$, let $X$ be
  the union of components in $G \setminus W$ containing vertices in
  $S$ and let $Y$ be the union of components in $G \setminus W$
  that do not contain any vertices in $S$ (hence they contain all
  vertices in $T$). Then $X$ and $Y$ are disjoint vertex
  sets separated by $W$, so $(X,Y) \in \V$. Thus $\V$ separates $\R$.
\end{restatable}

\LatticeCuts*

\begin{figure}
  \centering
  \includegraphics{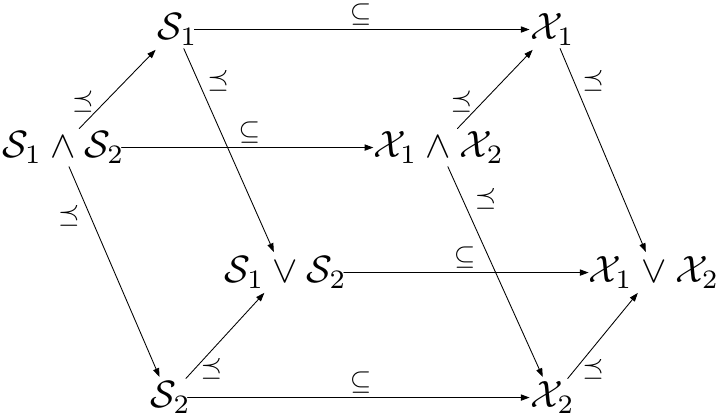}
  \caption{A diagram depicting \reflemma{lattice-cuts}, which asserts
    that cuts are preserved by the lattice operations $\lor$ and
    $\land$.\labelfigure{lattice-cuts}}
  \graybar
\end{figure}

\begin{proof}
  The proof is by direct inspection, and we include the details for
  the sake of completeness. Let $\S_i = (S_i,T_i)$ and
  $\X_i = (X_i,Y_i)$ for $i = 1,2$. For $i=1,2$, $\S_i \subseteq \X_i$
  means that $S_i \subseteq X_i$ and $T_i \subseteq Y_i$ for $i = 1,2$
  by definition of $\subseteq$. Therefore
  $S_1 \cap S_2 \subseteq X_1 \cap X_2$, $S_1 \cup S_2 \subseteq X_1
  \cup X_2$, $T_1 \cup T_2 \subseteq Y_1 \cup Y_2$, and $T_1 \cap T_2
  \subseteq Y_1 \cap Y_2$. Thus
  \begin{align*}
    \S_1 \land \S_2 = \parof{S_1 \cap S_2, T_1 \cup T_2} \subseteq
    (X_1 \cap X_2, Y_1 \cup Y_2) = \X_1 \land \X_2,
  \end{align*}
  and
  \begin{align*}
    \S_1 \lor \S_2 = \parof{S_1 \cup S_2, T_1 \cap T_2} \subseteq %
    (X_1 \cup X_2, Y_1 \cap Y_2) = \X_1 \lor \X_2,
  \end{align*}
  as desired.
\end{proof}

See \reffigure{lattice-cuts} for a diagram of \reflemma{lattice-cuts}.

\subsection{Submodular functions over lattices and Bisubmodular functions}

\begin{restatable}{definition}{}
  Let $\V$ be a lattice. A real-valued function $f: \V \to \reals$ is
  \defterm{submodular} if for all $\X,\Y \in \V$,
  \begin{align*}
    \f{\X} + \f{\Y} \geq \f{\X \lor \Y} + \f{\X \land \Y}.
  \end{align*}
\end{restatable}

\begin{restatable}{example}{}
  \emph{Bisubmodular functions} can be interpreted as submodular
  functions over particular crossing lattices. There are at least two
  definitions of bisubmodular function in the literature. These
  definitions are similar and we discuss both.

  In one definition (e.g., in \cite{schrijver}), a function
  $f: \subsetsof{V} \times \subsetsof{V} \to \reals$ is called
  \emph{bisubmodular} if for all $X_1,Y_1,X_2,Y_2 \subseteq V$, we
  have
  \begin{align*}
    \f{X_1,Y_1} + \f{X_2,Y_2} \geq %
    \f{X_1 \cup X_2, Y_1 \cap Y_2} + \f{X_1 \cap X_2, Y_1 \cup Y_2}.
    \labelthisequation{bisubmodular}
  \end{align*}  A bisubmodular function
  $f: \subsetsof{V} \times \subsetsof{V} \to \reals$ is submodular
  over the crossing lattice of all set-pairs,
  $\V = \subsetsof{V} \times \subsetsof{V}$ (\refexample{all-sets}).

 Another definition (e.g.,
  \cite{bouchet-87,afn-96,af-96,fujishige-iwata}) of a bisubmodular
  function $f$ is that $f(X_1,Y_1)$ is only defined for disjoint sets
  $X_1$ and $Y_1$, and otherwise satisfies inequality
  \refequation{bisubmodular} for these inputs. In this version, $f$ is
  bisubmodular iff it is a submodular function over the lattice of
  disjoint sets,
  $\V = \setof{(X,Y) \where X,Y \subseteq V \andcomma X \cap Y =
    \emptyset}$ (\refexample{disjoint-sets}).
\end{restatable}

We now define the notion of a symmetric submodular function \emph{over
  a crossing lattice}. This is a different definition then for
symmetric submodular \emph{set} functions and (by
\refexample{submod-to-bisubmod} below) the lattice-based definition
generalizes the (more standard) set-based definition.

\begin{restatable}{definition}{}
  Let $\defV$ be a symmetric crossing lattice. A function $f: \V \to
  \reals$ is \defterm{symmetric} if for all $\X \in \V$,
  \begin{math}
    \f{\X} = \f{\X^{T}}.
  \end{math}
\end{restatable}

\begin{restatable}{example}{}
  \labelexample{submod-to-bisubmod}
  Any symmetric submodular set function $\deff$ can be interpreted as a
  symmetric submodular function $h: \V \to \reals$ over the crossing
  lattice of bipartitions,
  $\V = \setof{(X, V \setminus X ) \where X \subseteq V}$. Here $h$ is
  defined by
  \begin{align*}
    h(X, V \setminus X) = \f{X} = \f{V \setminus X}.
  \end{align*}
\end{restatable}

\begin{restatable}{example}{}
  \labelexample{vertex-cut-weight} Let $G = (V,E)$ be an undirected
  graph with non-negative vertex weights $\weight: V \to \nnreals$. Let $\V$
  be the family of disjoint set-pairs with no edge between them
  (\refexample{vertex-cut-sides}). We define a function $\f: \V \to \reals$
  by
  \begin{align*}
    \f(X,Y) = \sum_{v \in V \setminus \parof{X \cup Y}} \weight{v}.
  \end{align*}
  It is easy to see that $f$ is submodular (and in fact, $f$ is
  modular).
\end{restatable}

\subsection{Minimal and minimum submodular cuts}

\begin{figure}
  \centering                    %
  \includegraphics{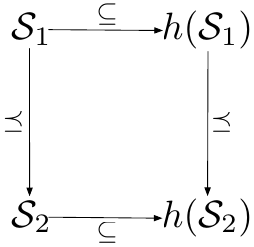} %
  \caption{A diagram depicting \reflemma{lattice-min-cuts}, which
    asserts that $\preceq$-minimal, $f$-minimum cuts preserve
    order. Here $h(\S)$ is the $\preceq$-minimal, $\f$-minimum
    $\S$-cut, which is well-defined. \labelfigure{lattice-min-cuts}}
  \graybar
\end{figure}

The following lemma outlines a key relationship between the sets of
terminals being separated and \emph{minimal} minimum cuts that
separate them. See \reffigure{lattice-min-cuts} for a diagrammatic
description of the following lemma.

\LatticeMinCuts*

\begin{proof}
  We first prove that $h$ is well-defined. It suffices to show that
  for any $\S \in \R$, the meet of two minimum $\S$-cuts in $\V$ is a
  minimum $\S$-cut. Let $\X_1, \X_2 \in \V$ be two minimum $\S$-cuts.
  We have
  \begin{align*}
    \f{\X_1} + \f{\X_2} \tago{\geq} \f{\X_1 \lor \X_2} + \f{\X_1 \land
    \X_2}
    \tago{\geq} \f{\X_1} + \f{\X_1 \land \X_2}.
  \end{align*}
  Here \tagr is by (generalized) submodularity. \tagr is because
  $X_1 \lor \X_2$ is also an $\S$-cut and $\X_1$ is a minimum
  $\S$-cut. Canceling like terms, we have
  $\f{\X_1 \land \X_2} \leq \f{\X_2}$. Moreover, $\X_1 \land \X_2$ is
  an $\S$-cut and $\X_2$ is a minimum $\S$-cut, so $\X_1 \land \X_2$
  is also a minimum $\S$-cut.

  We now show that $h$ preserves order.  Let $\S_1,\S_2 \in \R$,
  $\X_1 = h(\S_1) \in \V$, and let $\X_2 = h(\S_2) \in \V$. We want to
  show that if $\S_1 \preceq \S_2$, then $\X_1 \preceq \X_2$.
  Suppose by contradiction that $S_1 \preceq S_2$ and
  $\X_1 \not\preceq \X_2$.
  \begin{align*}
    \f{\X_1} + \f{\X_2}         %
    \tago{\geq}                 %
    \f{\X_1 \lor \X_2} + \f{\X_1 \land \X_2} %
    \tago{\geq}                             %
    \f{\X_2} + \f{\X_1 \land \X_2} %
    \tago{>}                     %
    \f{\X_2} + \f{\X_1},           %
  \end{align*}
  a contradiction.  Here \tagr is by submodularity. \tagr is because
  $\X_1 \lor \X_2$ is also an $\S_2$-cut, and $\X_2$ is an $f$-minimum
  $\S_2$-cut. \tagr is for the following reasons. First,
  $\X_1 \land \X_2$ is also an $\S_1$-cut.  Second, if
  $\X_1 \not\preceq \X_2$, then $\X_1 \land \X_2 \prec \X_1$
  (strictly). Since $\X_1$ is $\preceq$-minimal, $\X_1 \land \X_2$
  cannot be a minimum $\S_1$-cut, and
  $\f{\X_1 \land \X_2} > \f{\X_1}$.
\end{proof}

The following is a particularly convenient form of
\reflemma{lattice-min-cuts}, and the one we will actually apply in the
subsequent subsection. A diagram depicting the following lemma is given
in \reffigure{uncrossing-min-cuts}.

\UncrossingMinCuts*

\begin{figure}
  \centering
  \includegraphics{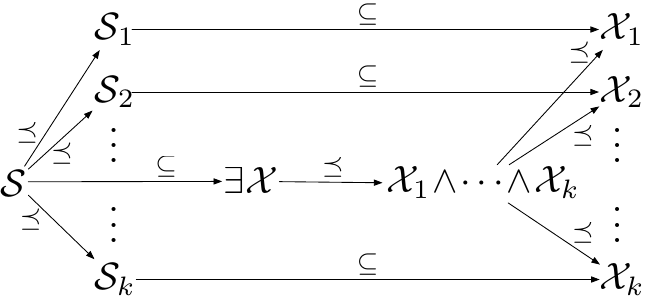}
  \caption{A diagram depicting \reflemma{uncrossing-min-cuts}. Here,
    for $i = 1,\dots,k$, $\X_i$ is a minimum $\S_i$ cut.  $\X$ is a
    minimum $\S$-cut whose existence is asserted by
    \reflemma{uncrossing-min-cuts}. \labelfigure{uncrossing-min-cuts}}

  \graybar
\end{figure}

\begin{proof}
  Let $\X$ be the $\preceq$-minimum, $f$-minimum $\S$-cut. By
  \reflemma{lattice-min-cuts}, $\X$ is well-defined.  Let
  $\T = \S_1 \land \cdots \land \S_k$.  If $\S \preceq \S_i$ for all
  $i \in [k]$, then since $\T$ is the join of $\S_1,\dots,\S_k$, we
  have $\S \preceq \T$. Let $\Y$ be the $\preceq$-minimum, $f$-minimum
  $\T$-cut. By \reflemma{lattice-min-cuts}, since $\S \preceq \T$,
  $\X$ is the $\preceq$-minimum, $f$-minimum $\S$-cut, and $\Y$ is an
  $f$-minimum $\T$-cut, we have $\X \preceq \Y$. Again by
  \reflemma{lattice-min-cuts}, since $\Y \preceq \X_i$ for all $i$, we
  have that $\Y \preceq \X_1 \land \cdots \land \X_k$. By transitivity
  we have $\X \preceq \X_1 \land \cdots \land \X_k$, as desired.
\end{proof}

\subsection{Isolating symmetric submodular cuts over lattices}

We now come to the issue of computing isolating cuts. We formalize
this as follows. Let $V$ be a set and $R \subset V$. Let $\defV$ be a
symmetric and \emph{pairwise disjoint} crossing lattice over $V$ and
let $\defR$ be the symmetric and \emph{pairwise disjoint} crossing
lattice over $R$ consisting of all partitions of $R$; i.e.,
\begin{math}
  \R = \setof{(S,T) \where S \cup T = R \andcomma S \cap T =
    \emptyset}.
\end{math}
Let $f: \V \to \reals$ be a symmetric bisubmodular function.  For
each $r \in R$ we wish to find an $f$-minimum cut $\Y_r$ for the set-pair
$(\{r\}, R-\{r\})$ (which we abbreviate as $(r,R-r)$ for notational
simplicity). The main property that leads to efficiency
is captured by the next lemma.


\IsolatingCutPartition*


\begin{proof}
  Let $k = \logup{\sizeof{R}}$. One can choose $k$ partitions
  $\S_1,\dots,\S_k \in \R$ such that every pair of elements in $R$ is
  separated by at least one partition\footnote{We briefly describe the
    simple construction from \cite{li-panigrahi} for the sake of
    completeness. Enumerate the vertices in $R$ from $1$ to
    $\sizeof{R}$ and consider the binary representations of these
    indices. For each $i$, let $S_i$ be the set of vertices whose
    $i$th bit is 0 and let $T_i$ be the set of vertices whose $i$th
    bit is 1. Set $\S_i = (S_i,T_i)$.}. For each $i$, let $\W_i$ be a
  minimum $\S_i$-cut.

  Since $r$ is separated from every other vertex in $R$ by at least
  one partition, $\X_r$ is an $(r,R-r)$-cut.  By
  \reflemma{uncrossing-min-cuts}, $\X_r$ contains a minimum
  $(r, R -r)$-cut. For the final property, let $r, q\in R$.  Then at
  least one of the partitions $\S_i$ separates $r$ and $q$. The
  corresponding cut $\W_i$ ensures that either $\X_r \preceq \W_i$ and
  $\X_q \preceq \W_i^T$, or $\X_r \preceq \W_i^T$ and
  $\X_q \preceq \W_i$. Since $\V$ is pairwise disjoint, this implies
  that
  $\X_q \land \X_r \preceq \W_i \land \W_i^T \preceq (\emptyset, R)$.
\end{proof}

\begin{remark}
  The preceding lemma relied on pairwise disjointness of the lattice
  $\V$ and $\R$ as well as symmetry of the bisubmodular function.
  This is necessary for the crucial third property in the lemma which
  is the main reason for improvement in the running time for finding
  isolating cuts.  However, one can obtain the first two properties
  with appropriate modifications even for general lattices and
  bisubmodular functions. At the moment we do not know of any concrete
  algorithmic applications for the general version and hence we do not
  state it explicitly here. As there may be future applications, we
  plan to include the general lemma in a future version of the paper.
\end{remark}

Using the preceding lemma the problem of computing the $f$-minimum
$r$-isolating cuts is reduced to finding such a cut in $\X_r$. The
advantage, in terms of running time, is captured by the disjointness
property: for distinct $r, q \in R$ we have
$\X_r \land \X_q \preceq (\emptyset, R)$.  For each $r$ let
$\X_r = (A_r, B_r)$. Thus we have $\sum_r |A_r| \le |V|$.  Given $r$
and $\X_r$, the problem of computing the $f$-minimum cut
$\Y_r \preceq \X_r$ can in several settings be reduced to solving a
problem that depends only on $|A_r|$ and $|V|$. We capture this in the
following lemma.

\IsolatingCuts*

\subsection{Computing global minimum cuts}

A simple random sampling approach combined with isolating cuts, as
shown in \cite{li-panigrahi} for edge cuts in graphs, yields the following theorem in a
much more abstract setting.

\BisubmodMincut*
\begin{proof}
  Suppose the minimum cut is achieved by $\X = (X,Y) \in \V$, and that
  $\X$ is a cut for a nontrivial partition $\S = (S,T) \in
  \R$. Without loss of generality we assume that
  $\sizeof{S} \leq \sizeof{T}$. Let
  $\ell \in \bracketsof{\sizeof{S}, 2 \sizeof{S}}$; we can guess
  $\ell$ by enumerating all powers of 2 with a $\log{\sizeof{R}}$
  overhead in running time. Let $R' \subseteq R$ sample each $r \in R$
  independently with probability $1/\ell$. With constant probability,
  $\sizeof{R' \cap S} = 1$ and $R' \cap T \neq \emptyset$, in which
  case $\X$ is a minimum $R'$-isolating cut. Thus it suffices to find
  isolating cuts for $R'$ and the total time will have an additional
  $O(\log |R|)$ factor overhead. Now we can apply
  \reflemma{isolating-cuts}.
\end{proof}

We derive the following corollary for symmetric submodular set
functions.

\SymsubmodMincut*
\begin{proof}
  Let $\V$ be the crossing lattice over all bipartitions of $V$
  (\refexample{bipartitions}) and let $\R$ be the crossing lattice
  over all bipartitions of $R$. Consider the bisubmodular function
  $h:\V \rightarrow \reals$ defined from $f$ as in
  \refexample{submod-to-bisubmod} where $h(X,V\setminus X) = f(X)$ for
  each $X \subseteq V$.  Given $(A,R\setminus A) \in \R$ an
  $h$-minimum cut in $\V$ that cuts it corresponds to finding
  $(X,V\setminus X) \in \V$ with minimum $f(X)$ value where
  $A \subseteq X \subseteq V \setminus R$. Such a cut can be found by
  submodular function minimization in $\SFMtime{n}$ time. Thus,
  $\SMtime{n} = O(\SFMtime{n})$. Similarly, given $u \in R$ and
  $A_u \subseteq V$ the problem of finding the $h$-minimum cut $\Y_u$
  such that $\Y_u \preceq (A_u,V\setminus A_u)$ can be reduced to
  submodular function minimization by contracting $V\setminus A_u$
  into a single element and solving submodular function minimization
  on the resulting set which consists of $|A_u|+1$ elements. Note that
  contraction creates a new submodular function; however, the
  evaluation oracle for the new submodular function is essentially the
  same as the one for $f$. Thus, $\SMItime{p,n} = O(\SFMtime{p+1})$.
  By \reftheorem{bisubmod-mincut} one sees that the total time to
  compute the desired minimum cut that separates $R$ with constant
  probability is $O(\SFMtime{n}\log^2 n)$.
\end{proof}


\end{document}